\documentclass[a4paper,USenglish]{lipics-v2019}

\usepackage{amsthm}
\usepackage{amsmath}
\usepackage{nccmath}
\usepackage{amssymb}
\usepackage{bm}
\usepackage{graphicx}
\usepackage{enumitem}
\usepackage{hyperref}
\usepackage{array}
\usepackage{makecell}
\usepackage{multirow}


\newcommand{\deleted}[1]{}

\usepackage[disable]{todonotes}
\newcommand{\jr}[1]{\todo[color=white!40!white]{\scriptsize #1}{}}

\newcommand{\R}{\mathbb{R}}
\newcommand{\E}{\mathcal{E}}
\newcommand{\A}{\mathcal{A}}
\newcommand{\T}{T}
\newcommand{\VD}{\mathrm{VD}}
\newcommand{\AD}{\mathrm{AD}}
\newcommand{\RD}{\mathrm{RD}}
\newcommand{\SC}{\mathrm{SC}}
\newcommand{\RDS}{\mathrm{RDS}}
\newcommand{\func}[1]{{\ensuremath{\mathsf{#1}}}}
\newcommand{\polylog}{\func{polylog}}


\title{Nearly Optimal Planar $k$ Nearest Neighbors Queries under General Distance Functions}
\titlerunning{Nearly Optimal Planar $k$ Nearest Neighbors Queries under General Distance Functions}

\author{Chih-Hung Liu}{Department of Computer Science, ETH Z\"{u}rich, Z\"{u}rich, Switzerland}{chih-hung.liu@inf.ethz.ch}{https://orcid.org/0000-0001-9683-5982}{}
\authorrunning{Chih-Hung Liu}
\Copyright{Chih-Hung Liu}

\ccsdesc[100]{ Theory of computation →  Randomness, geometry and discrete structures}
\keywords{$k$ nearest neighbors problem, General distance functions, Random sampling, Shallow Cuttings}
\category{}

\nolinenumbers
\hideLIPIcs 

\begin{document}

\maketitle
\begin{abstract}
We study the \emph{$k$ nearest neighbors} problem in the plane for general, convex, pairwise disjoint sites of constant description complexity such as line segments, disks, and quadrilaterals and with respect to a general family of distance functions including the $L_p$-norms and additively weighted Euclidean distances.
For point sites in the Euclidean metric, after four decades of effort, an optimal data structure has recently been developed with $O(n)$ space, $O(\log n+k)$ query time, and $O(n\log n)$ preprocessing time~\cite{AfshaniC09,ChanT16}.
We develop a static data structure for the general setting with nearly optimal $O(n\log\log n)$ space, the optimal $O(\log n+k)$ query time, and expected $O(n\;\polylog\;n)$ preprocessing time. 
The $O(n\log\log n)$ space approaches the linear space, whose achievability is still unknown with the optimal query time, and improves the so far best $O\big(n(\log^2n)(\log\log n)^2\big)$ space of Bohler et~al.'s work~\cite{BohlerKL19}. 
Our dynamic version (that allows insertions and deletions of sites) also reduces the space of Kaplan et~al.'s work~\cite{KaplanMRSS17} from $O(n\log^3 n)$ to $O(n\log n)$ while keeping $O(\log^2 n+k)$ query time and $O(\polylog\;n)$ update time, thus improving many applications such as
dynamic bichromatic closest pair and dynamic minimum spanning tree in general planar metric, and shortest path tree and dynamic connectivity in disk intersection graphs.

To obtain these progresses, we devise \emph{shallow cuttings of linear size} for general distance functions. Shallow cuttings are a key technique to deal with the $k$ nearest neighbors problem for point sites in the Euclidean metric. 
Agarwal et~al.~\cite{AgarwalES99} already designed linear-size shallow cuttings for general distance functions, but their shallow cuttings could not be applied to the $k$ nearest neighbors problem. 
Recently, Kaplan et~al.~\cite{KaplanMRSS17} constructed shallow cuttings that are feasible for the $k$ nearest neighbors problem, while the size of their shallow cuttings has an extra double logarithmic factor.
Our innovation is a new random sampling technique for the analysis of geometric structures. 
While our shallow cuttings seem, to some extent, merely a simple transformation of Agarwal et~al.'s~\cite{AgarwalES99}, the analysis requires our new technique to attain the linear size. 
Since our new technique provides a new way to develop and analyze geometric algorithms, we believe it is of independent interest.
\end{abstract}

\section{Introduction}\label{sec-ind}

Dating back to Shamos and Hoey (1975)~\cite{ShamosH75}, the \emph{$k$ nearest neighbors} problem is one fundamental problem in computer science:
given a set $S$ of $n$ geometric \emph{sites} in the plane and a distance measure,
build a data structure that answers for a query point $p$ and a query integer $k$, the $k$ nearest sites of $p$ in $S$.
A related problem called \emph{circular range query} problem is instead to answer for a query point $p$ and a query radius $\delta$,
all the sites in $S$ whose distance to $p$ is at most $\delta$.
A circular range query can be answered through $k$ nearest neighbors queries for $k=\log n, 2\log n, 4\log n,\ldots$ until all the sites inside the circular range have been found, i.e., until one found site is not inside the circular range (\cite[Corollary~2.5]{Chan00}).
For point sites in the Euclidean metric,
these two problems have received considerable attention in theoretical computer science~\cite{AfshaniC09,AggarwalHL90,BentleyM79,Chan00,Chan10,ChanT16,ChazelleCPY86,ColeY84,KaplanMRSS17,Matousek92a,Ramos99a,ShamosH75}.
Many practical scenarios, however, entail non-point sites and non-Euclidean distance measures, which has been extensively discussed by Kaplan et~al.~\cite{KaplanMRSS17}.
Therefore, for  practical applications,
it is beneficial to study the $k$ nearest neighbors problem for general distance functions.

The key technique for point sites in the Euclidean metric is \emph{shallow cuttings}, a notion to be defined later.
Agarwal et~al.~\cite{AgarwalES99} already generalized shallow cuttings to general distance functions, but their shallow cuttings could not be applied to the $k$ nearest neighbors problem.
Recently, Kaplan et~al.~\cite{KaplanMRSS17} constructed shallow cuttings that are feasible for the $k$ nearest neighbors problem, while the size of their shallow cuttings has an extra double logarithmic factor. 
Our main contribution is to devise \emph{linear-size} shallow cuttings for the $k$ nearest neighbors problem under general distance functions, shedding light on achieving the same complexities as point sites in the Euclidean metric.

Based on our linear-size shallow cuttings, 
we build a \emph{static} data structure for the $k$ nearest neighbors problem with nearly optimal $O(n\log\log n)$ space and the optimal $O(\log n+k)$ query time.
The $O(n\log\log n)$ space approaches the linear space, whose achievability is still unknown,
and improves the so far best $O\big(n(\log^2n)(\log\log n)^2\big)$ space of  Bohler et~al.'s work~\cite{BohlerKL19}.
Our shallow cuttings also enable a \emph{dynamic} data structure that allows insertions and deletions of sites with $O(n\log n)$ space,
improving the $O(n\log^3 n)$ space of Kaplan et~al.'s work~\cite{KaplanMRSS17}.

Our innovation is a new random sampling technique for the analysis of geometric structures.
While our shallow cuttings seem, to some extent, merely a simple transformation of Agarwal et~al.'s~\cite{AgarwalES99}, to attain the linear size, the analysis requires our new technique to deal with \emph{global and local conflicts of a configuration}. 
For example, to compute a triangulation for $n$ points, 
a \emph{configuration} is a triangle defined by three points, and a point is said to \emph{conflict with} a triangle if the point lies inside the triangle. 
\emph{Global} and \emph{local} conflicts of a triangle are associated respectively with all the $n$ points and a random subset.
Our technique employs relatively many \emph{local} conflicts to prevent relatively few \emph{global} conflicts, in contradistinction to many state-of-the-art techniques that adopt relatively many \emph{global} conflicts to prevent zero \emph{local} conflict.
This conceptual difference enables our technique to directly analyze local geometric structures; for a simple illustration, see Section~\ref{sub-ind-contribution}.

Each site in $S$ can be represented as the graph of its distance function, namely an $xy$-monotone surface in $\R^3$ where the $z$-coordinate is the distance from the $(x,y)$-coordinates to the respective site.
For example, the surface for a point site $(a, b)$ in the $L_1$ norm is the inverted pyramid $z=|x-a|+|y-b|$.
By this interpretation, the $k$ nearest sites of a query point $p$ become the $k$ lowest surfaces along the vertical line passing through $p$. 
In this paper, we restrict to the case that the lower envelope of any $r$ surfaces has $O(r)$ faces, edges and vertices, as Kaplan et~al.~\cite{KaplanMRSS17} pointed out that this restriction works for many applications.

For point sites in the Euclidean metric, instead of the above interpretation, a standard lifting technique can map each point site to a plane tangent to the unit paraboloid $z=\bm{-}(x^2+y^2)$~\cite{Matousek02}, so that the $k$ nearest point sites of a query point become the $k$ lowest planes along the vertical line passing through the query point.
An optimal data structure for the $k$ lowest plane problem has recently been developed with $O(n)$ space, $O(\log n+k)$ query time, and $O(n\log n)$ preprocessing time \cite{AfshaniC09,ChanT16}.
The dynamic version allows $O(\log^2 n+k)$ query time, amortized $O(\log^3n)$ insertion time, and amortized $O(\log^4n)$ deletion time \cite{Chan10,ChanT16,Chan19}.

\subparagraph{Shallow Cuttings.}
Let $H$ be a set of $n$ planes in $\R^3$,
and define the \emph{level} of a point in $\R^3$ as the number of planes in $H$ lying vertically below it
and the \emph{$(\leq l)$-level} of $H$ as the set of points in $\R^3$ with level of at most $l$. 
An \emph{$l$-shallow $\frac{1}{r}$-cutting} for $H$ is a set of disjoint downward semi-unbounded vertical triangular prisms
covering \emph{the $(\leq l)$-level} of $H$ such that each prism intersects at most $\frac{n}{r}$ planes; see Fig.~\ref{fig-level-prism}.
We abbreviate the $\frac{n}{r}$-shallow $O(\frac{1}{r})$-cutting as \textbf{\emph{$\bm{\frac{1}{r}}$-shallow-cutting}}.
Since a $\frac{1}{r}$-shallow-cutting covers the $(\leq\frac{n}{r})$-level of $H$, the triangular face of each prism lies the above $(\leq\frac{n}{r})$-level, so that for any vertical line through a prism, its $\frac{n}{r}$ lowest planes intersect the prism.
Since each prism stores the $O(\frac{n}{r})$ planes intersecting it, if $\frac{n}{2r}< k\leq \frac{n}{r}$, the $k$ lowest planes of a query vertical line can be answered by locating the prism intersected by the line and selecting the $k$ lowest planes from the $O(\frac{n}{r})$ stored planes, leading to $O(\log n+\frac{n}{r})=O(\log n+k)$ query time.
To cover the whole range of $k$,
$\frac{1}{r}$-shallow-cuttings for $r=2, 4, 8, \ldots, \frac{n}{\log n}$ are sufficient.

\begin{figure}
	\centering
	\includegraphics[width=12cm]{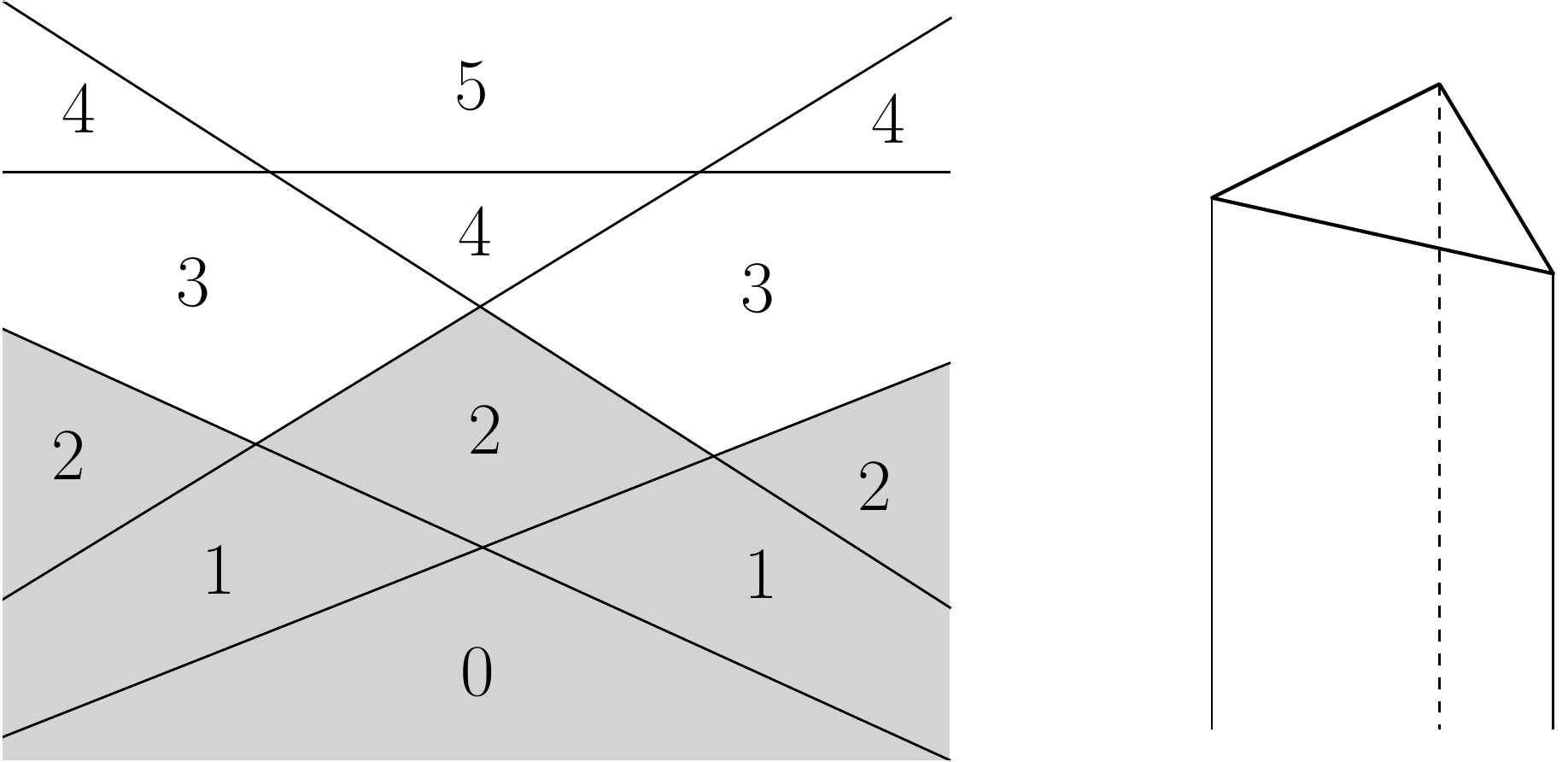}
	\caption{Left: The gray area is the $\leq 2$ level, where the lines depict the planes and the numbers show the levels. Right: A downward semi-unbounded vertical triangular prism.}\label{fig-level-prism}
\end{figure}

Matou\v{s}ek~\cite{Matousek92a} first used ``tetrahedra'' to define shallow cuttings and proved the existence of  a $\frac{1}{r}$-shallow-cutting of $O(r)$ tetrahedra.
Then, Chan~\cite{Chan00} observed that the tetrahedra can be turned into disjoint downward semi-unbounded vertical triangular prisms, resulting in the above-defined shallow cuttings.
Since each prism in a $\frac{1}{r}$-shallow-cutting stores $O(\frac{n}{r})$ planes, a $\frac{1}{r}$-shallow-cutting requires $O(r)\cdot O(\frac{n}{r})=O(n)$ space, so that the $O(\log n)$ $\frac{1}{r}$-shallow-cuttings, i.e., $r=2, 4, 8, \ldots, \frac{n}{\log n}$,
directly compose a data structure for the $k$ lowest plane problem with $O(n\log n)$ space and $O(\log n+k)$ query time. The further literature about shallow cuttings is sketched in Appendix~\ref{ap-literature}.

Matou\v{s}ek's $\frac{1}{r}$-shallow-cutting construction
picks $r$ planes randomly, builds the \emph{canonical triangulation} for the \emph{arrangement} of the $\bm{r}$ sample planes (Section~\ref{sub-rs-configuration}),
selects all tetrahedra in the triangulation, called \textbf{\emph{relevant}}, that intersect the $(\leq \frac{n}{r})$-level of the $\bm{n}$ input planes,
and if a relevant tetrahedron intersects more than $\frac{n}{r}$ planes, 
refines this ``heavy'' one into smaller ``light'' ones.

\subparagraph{Generalization.} 
Agarwal et~al.~\cite{AgarwalES99} generalized Matou\v{s}ek's construction to general distance functions by replacing the canonical triangulation with the \emph{vertical decomposition of surfaces} (Section~\ref{sub-sc-distance-VD}), and built a $\frac{1}{r}$-shallow-cutting of $O(r)$ ``\emph{pseudo-prisms}.''
Pseudo-prisms, defined in Section~\ref{sub-sc-distance-VD}, can be temporarily viewed as axis-parallel cuboids. 
Their  pseudo-prisms, however, \mbox{vertically overlap}, i.e., a vertical line would intersect more than one pseudo-prism, and there is no known efficient method to locate the topmost pseudo-prism intersected by a query vertical line.
Therefore, their shallow cuttings are not suitable for the $k$ nearest neighbors problem.

Recently, Kaplan et~al.~\cite{KaplanMRSS17} instead adopted \emph{($p$, $\epsilon$)-approximations}~\cite{Har-PeledS11} to design a $\frac{1}{r}$-shallow-cutting of $O(r\log^2 n)$ semi-unbounded pseudo-prisms that do not vertically overlap, yielding a data structure for the $k$ nearest neighbors problem with $O(n\log^3 n)$ space,  $O(\log n+k)$ query time and expected $O(n\log^3n\lambda_{s+2}(\log n))$ preprocessing time, where $\lambda_t(\cdot)$ is the maximum length of a Davenport-Schinzel sequence of order $t$ and $s$ is a constant dependent on the surfaces. 
Their dynamic version (that allows insertions and deletions of sites) achieves $O(\log^2 n+k)$ query time, 
expected amortized $O(\log^5n\lambda_{s+2}(\log n))$ insertion time and expected amortized $O(\log^9n\lambda_{s+2}(\log n))$ deletion time. 
They studied only the case $k=1$, and the general case follows from Chan's idea~\cite{Chan10}.

To achieve $O(n\log n)$ or smaller space,
a $\frac{1}{r}$-shallow-cutting of size $O(r)$ would be required.
Since the size of Kaplan et~al.'s shallow cutting is tight, an attempt would transform the pseudo-prisms in Agarwal et~al.'s shallow cutting~\cite{AgarwalES99} into disjoint downward semi-unbounded ones.
A simple transformation picks the top faces of all pseudo-prisms, computes the upper envelope of those top faces, builds the trapezoidal decomposition~\cite{deBergCKO08,Mulmuley94} for the upper envelope, and extends each trapezoid to a downward semi-unbounded pseudo-prism. 
However, considering pair-wise vertical overlap among the $O(r)$ top faces, a trivial bound for the size of the upper envelope is $O(r^2)$.

An observation to reduce the size is that the pseudo-prisms before the refinement come from the vertical decomposition of  the $r$ \emph{sample} surfaces, i.e., they are defined by \emph{sample} surfaces.
By this observation, the overlap between two top faces can be charged to the higher pseudo-prism, so that the number of charges for a pseudo-prism might only depend on the sample surfaces lying fully below the pseudo-prism. 
Then, if a pseudo-prism lies fully above $t$ sample surfaces, it could be possible to derive a function of $t$ to bound the contribution from the pseudo-prism to the size of the upper envelope. 
Moreover, since a relevant pseudo-prism intersects the $(\leq \frac{n}{r})$-level of the $n$ input surfaces, a relevant pseudo-prism lies fully above at most $\frac{n}{r}$ surfaces. 
Therefore, it is worth to study the probability that a pseudo-prism lies fully above $t$ sample surfaces, but at most above $\frac{n}{r}$ surfaces, i.e., a configuration has many local conflicts, but relatively few global conflicts.

   

\vspace*{-0.09cm}\subparagraph{Other General Results.}  
Agarwal et~al.~\cite{AgarwalM94,AgarwalMS13} studied the \emph{range searching} problem with \emph{semialgebraic sets}.
They considered a set $P$ of $n$ points in $\R^d$, and a collection $\Gamma$ of ranges each of which is a subset of $\R^d$ and is defined by a constant number of constant-degree polynomial inequalities. 
They constructed an $O(n)$-space data structure in $O(n\log n)$ time that for a query range $\gamma\in \Gamma$, reports all the $\kappa$ points inside $\gamma$ within $O(n^{1-\frac{1}{d}}+\kappa)$ time, where $\kappa$ is unknown before the query. 
Their data structure can be applied to the \emph{circular range query} problem by mapping each geometric site to a point site in higher dimensions, e.g. a line segment in $\R^2$ can be mapped to a point in $\R^4$.

Bohler et~al.~\cite{BohlerCKLPZ15} generalized the order-$k$ Voronoi diagram~\cite{AurenhammerKL13,Lee82} to Klein's abstract setting~\cite{Klein89}, which is based on a bisecting curve system for $n$ sites rather than concrete geometric sites and distance measures. 
They also proposed randomized divide-and-conquer and incremental construction algorithms~\cite{BohlerLPZ16,BohlerKL19}.
A combination of their results and Chazelle et~al.'s $k$ nearest neighbors algorithm~\cite{ChazelleCPY86} yields a data structure with  $O(n\log^2 n(\log\log n)^2)$ space, $O(\log n+k)$ query time, and expected $O(n\log^4 n)$ preprocessing time for the \emph{$k$ nearest neighbors} problem.

Agarwal et~al.~\cite{AgarwalAS18} investigated \emph{dynamic} nearest neighbor queries that allows inserting and deleting point sites in a static simple polygon of $m$ vertices. They generalized Kaplan et~al.'s shallow cutting~\cite{KaplanMRSS17} to the geodesic distance functions in a simple polygon. 
The key techniques are an implicit presentation for their shallow cutting and an efficient algorithm for the implicit presentation. Their dynamic data structure requires $O(n\log^3 n\log m +m)$ space and allows $O(\log^2n\log^2m)$ query time, amortized expected $O(\log^5n\log m+\log^4n\log^3m)$ insertion time and amortized expected $O(\log^7n\log m+\log^6n\log^3m)$ deletion time.

\subsection{Our Contributions}\label{sub-ind-contribution}

\subparagraph{Random Sampling.} 
We propose a new random sampling technique (Theorem~\ref{thm-many-local-less-global}) for the \emph{configuration space} (Section~\ref{sub-rs-configuration}).
At a high level, our technique says if the \emph{local} conflict size is large, the \emph{global} conflict size is probably not small, while most existing ones say if the \emph{global} conflict size is large, the \emph{local} conflict size is probably not zero.
More precisely, for a set $S$ of $n$ objects and an $r$-element random subset $R$ of S, we prove that if a configuration in a geometric structure defined by $R$ conflicts with $\bm{t}$ objects in $R$, the probability that it conflicts with \emph{at most} $\frac{n}{r}$ objects in $S$ decreases \emph{factorially} in $\bm{t}$. 
By contrast, many state-of-the-art techniques \cite{AgarwalMS98,ChazelleF90,ClarksonS89,deBergDS1995} show that if a configuration conflicts with \emph{at least} $\bm{t}\frac{n}{r}$ objects in $S$, the probability that it conflicts with \emph{no} object in $R$ decreases \emph{exponentially} in $\bm{t}$.

This conceptual contrast provides a new way to develop and analyze geometric algorithms, so we believe our random sampling technique is of independent interest.
Roughly speaking, to bound the number of \emph{local} configurations satisfying certain properties, by our technique, one could directly make use of \emph{local} configurations.
For example, our technique enables a direct analysis for the expected number of \emph{relevant} tetrahedra.
Since a relevant tetrahedron intersects the $(\leq\frac{n}{r})$-level of the $n$ planes, it lies fully above at most $\frac{n}{r}$ planes.
Our technique implies that if a tetrahedron lies fully above $t$ \emph{sample} planes,
the probability that it lies fully above at most $\frac{n}{r}$ planes is $O(\frac{1}{t!})$.
Since the canonical triangulation of the $r$ sample planes has $O\big(r\cdot (t+1)^{\frac{3}{2}}\big)$ tetrahedra lying fully above $t$ sample planes~\cite{SharirST01},
the expected number of relevant tetrahedra is $O\big( \sum_{t\geq 0} (r\cdot (t+1)^{\frac{3}{2}})\cdot \frac{1}{t!}\big)=O(r)$.

Clarkson~\cite{Clarkson87} already derived a factorial bound similar to ours, but his analysis involves all possible configurations instead of only those in a geometric structure,
so that a direct application could not lead to a linear-size shallow cutting; see Remark~\ref{rm-clarkson}.

\subparagraph{$k$ Nearest Neighbors.} 
We design a $\frac{1}{r}$-shallow-cutting for the $k$ nearest neighbors problem under general distance functions,
and prove its expected size to be $O(r)$,
indicating that for general distance functions, it could still be possible to achieve the same complexities as point sites in the Euclidean metric. 
While our design for a $\frac{1}{r}$-shallow-cutting is quite straightforward, 
the key to attain the linear size lies in the analysis.\jr{rephrase \& mention our technique}
The high-level idea is first to prove that for a relevant pseudo-prism in the vertical decomposition of $r$ sample surfaces, if it lies fully above $t$ sample surfaces, it contributes $O(1+t^4)$ to the expected size of our $\frac{1}{r}$-shallow cutting,
and then to adopt our new random sampling technique to show that the expected number of such relevant pseudo-prisms is $O(\frac{r}{t!})$, leading to a bound of $\sum_{t\geq 0}O(\frac{1+t^4}{t!}r)=O(r)$.

Then, we adopt Afshani and Chan's ideas~\cite{AfshaniC09} to compose our shallow cuttings and Agarwal et~al.'s data structure~\cite{AgarwalMS13} into a \emph{static} data structure for the $k$ nearest neighbors problem under general distance functions with the nearly optimal $O(n\log\log n)$ space and the optimal $O(\log n+k)$ query time, improving the combination of Bohler et~al.'s and Chazelle et~al.'s methods~\cite{BohlerKL19,ChazelleCPY86} by a $\big((\log^2 n)\log\log n\big)$-factor in space.
The preprocessing time is $O(n\log^3n\lambda_{s+2}(\log n))$ for which we modify Kaplan et~al.'s construction algorithm~\cite{KaplanMRSS17} to compute our shallow cuttings; for the constant $s$, see Section~\ref{sub-sc-distance-VD}.
Our data structure works for  point sites in any constant-size algebraic convex distance metric and additively weighted Euclidean distances, 
and for disjoint line segments, disks, and constant-size convex polygons in the $L_p$ norms or under the Hausdorff metric.

Replacing the shallow cuttings in Kaplan et~al.'s \emph{dynamic} data structure~\cite{KaplanMRSS17} with ours attains $O(n\log n)$ space, $O(\log^2n+k)$ query time, expected amortized $O\big(\log^5n\lambda_{s+2}(\log n)\big)$ insertion time, and expected amortized $O\big(\log^7n\lambda_{s+2}(\log n)\big)$ deletion time,
improving their  space from $O(n\log^3 n)$ to $O(n\log n)$ and reducing a ($\log^2 n$)-factor from their deletion time. 
The new dynamic data structure consequently improves many applications mentioned by Kaplan et~al.\ as shown in Table~\ref{tb-direct} and Table~\ref{tb-disk}.
For a detailed explanation, see Appendix~\ref{ap-application}.

\begin{table}
	
	\renewcommand{\arraystretch}{1.2}
	\begin{tabular}{|p{5cm}|l|l|}
		\hline
		Problem & Old Bound~\cite{KaplanMRSS17} & New Bound (ours)\\
		\hline
		\hline
		\multirow{3}{5cm}{Dynamic bichromatic closest pair in general planar metric} & $n\log^3 n$ space,   & $n\log n$ space, \\
		& $\log^{10} n\lambda_{s+2}(\log n)$ insertion,   & $\log^{8} n\lambda_{s+2}(\log n)$ insertion, \\
		& $\log^{11} n\lambda_{s+2}(\log n)$ deletion & $\log^{9} n\lambda_{s+2}(\log n)$ deletion \\
		\hline
		\multirow{2}{5cm}{Minimum Euclidean planar bichromatic matching}  & \multirow{2}{*}{$n^2\log^{11} n\lambda_{s+2}(\log n)$}   & \multirow{2}{*}{$n^2\log^{9} n\lambda_{s+2}(\log n)$}  \\
		&  & \\
		\hline 
		\multirow{2}{5cm}{Dynamic minimum spanning tree in general planar metric} & $n\log^5 n$ space,    & $n\log^3 n$ space, \\
		& $\log^{13} n\lambda_{s+2}(\log n)$ update & $\log^{11} n\lambda_{s+2}(\log n)$ update\\
		\hline
		\multirow{4}{5cm}{Dynamic intersection of unit balls in $\R^3$}  & $n\log^3 n$ space,  & $n\log n$ space,\\
		& $\log^{5} n\lambda_{s+2}(\log n)$ insertion,  & $\log^{5} n\lambda_{s+2}(\log n)$ insertion,\\
		& $\log^{9} n\lambda_{s+2}(\log n)$ deletion, & $\log^{7} n\lambda_{s+2}(\log n)$ deletion, \\
		& queries in $\log^2n$ and $\log^5 n$ & queries in $\log^2n$ and $\log^5 n$\\
		\hline
		\multirow{4}{5cm}{Dynamic smallest stabbing disks}  & $n\log^3 n$ space,  & $n\log n$ space,\\
		& $\log^{5} n\lambda_{s+2}(\log n)$ insertion,  & $\log^{5} n\lambda_{s+2}(\log n)$ insertion,\\
		& $\log^{9} n\lambda_{s+2}(\log n)$ deletion, &  $\log^{7} n\lambda_{s+2}(\log n)$ deletion,\\
		& queries in $\log^5 n$ & queries in $\log^5 n$ \\
		\hline
	\end{tabular}
	\caption{Comparison on direct applications.}\label{tb-direct}
	\renewcommand{\arraystretch}{1}
\end{table}

\begin{table}
	
	\renewcommand{\arraystretch}{1.2}
	\begin{tabular}{|p{5cm}|l|l|}
		\hline
		Problem & Old Bound~\cite{KaplanMRSS17} & New Bound (ours)\\
		\hline
		\hline
		\multirow{2}{5cm}{Shortest path tree in a unit disk graph} & \multirow{2}{*}{$n\log^{11} n\lambda_{s+2}(\log n)$}   & \multirow{2}{*}{$n\log^{9} n\lambda_{s+2}(\log n)$} \\
		&  & \\
		\hline
		\multirow{2}{5cm}{Dynamic connectivity in disk intersection graphs}  & $\Psi^2\log^9 n\lambda_{s+2}(\log n)$ update,  & $\Psi^2\log^7 n\lambda_{s+2}(\log n)$ update\\
		& $\log n/ \log\log n$ query & $\log n/ \log\log n$ query \\
		\hline 
		\multirow{2}{5cm}{BFS tree in a disk intersection graph} & \multirow{2}{*}{$n\log^{9} n\lambda_{s+2}(\log n)$}    &\multirow{2}{*}{$n\log^{7} n\lambda_{s+2}(\log n)$} \\
		&  & \\
		\hline
		\multirow{2}{5cm}{$(1+\rho)$-spanner for a disk intersection graph} & \multirow{2}{*}{$(n/\rho^2)\log^{9} n\lambda_{s+2}(\log n)$}    & \multirow{2}{*}{$(n/\rho^2)\log^{7} n\lambda_{s+2}(\log n)$} \\
		&  & \\
		\hline
	\end{tabular}
	\caption{Comparison on problems with respect to disk intersection graph.}\label{tb-disk}
	\renewcommand{\arraystretch}{1}
\end{table}
  
This paper is organized as follows.
Section~\ref{sec-random-sampling} introduces the configuration space and derives the random sampling technique.
Section~\ref{sec-shallow-cutting} formulates distance functions, designs the $\frac{1}{r}$-shallow-cutting, and proves its size to be $O(r)$. 
Section~\ref{sec-DS} composes the data structure for the $k$ nearest neighbors problem.
Section~\ref{sec-algorithm} presents the construction algorithm for shallow cuttings.
Section~\ref{sec-conclusion} makes concluding remarks.
Throughout the paper, if not explicitly stated, the base of the logarithm is 2.

\section{Random Sampling}\label{sec-random-sampling}
We first introduce the \emph{configuration space} and discuss several classical random sampling techniques. 
Then, we propose a new random sampling technique that utilizes relatively \emph{many} local conflicts to prevent relatively \emph{few} global conflicts,
in contradiction to most state-of-the-art works that adopt relatively \emph{few} local conflicts to prevent relatively \emph{many} global conflicts.
Finally, since our new technique requires some conditions,
we further prove that those conditions are sufficient at high probability. 
Our random sampling technique is very general, and for further applications,
we describe it in an abstract form.

\subsection{Configuration Space}\label{sub-rs-configuration}

Let $S$ be a set of $n$ objects, and for a subset $S'\subseteq S$, let $C(S')$ be the set of ``\emph{configurations}'' in a geometric structure defined by $S'$.  
For example, objects are planes in three dimensions, and a configuration in $C(S')$ is a tetrahedron in the so-called \emph{canonical triangulation}~\cite{AgarwalBMS98,Mulmuley94} for the \emph{arrangement} of the planes in $S'$, where the arrangement of planes partitions $\R^3$ into cells that intersect no plane and the canonical triangulation further partitions each cell into tetrahedra sharing the same bottom vertex.
Let $\T(S')$ be the set of all possible configurations defined by objects in $S'$, i.e., $\T(S')=\bigcup_{S''\subseteq S'}C(S'')$, and let $\T$ be $T(S)$.
In the above example, $|C(S')|=\Theta(|S'|^3)$~\cite{AgarwalBMS98,Mulmuley94}, while $|T(S')|=O(|S'|^{12})$ (since a tetrahedron has 4 vertices and a vertex is defined by 3 planes).

For each configuration $\triangle\in T$, we associate $\triangle$ with two subsets $D(\triangle), K(\triangle)\subseteq S$.
$D(\triangle)$, called the \emph{defining set}, defines $\triangle$ in a suitable geometric sense.
For instance, $\triangle$ is a tetrahedron, and $D(\triangle)$ is the set of planes that define the vertices of $\triangle$.
Let $d(\triangle)$ be $|D(\triangle)|$, and assume that for every $\triangle\in T$, $d(\triangle)\leq d$ for a small constant $d$.

$K(\triangle)$, called the \emph{conflict set}, comprises objects being said to \emph{conflict with} $\triangle$; let $w(\triangle)=|K(\triangle)|$.
The meaning of $K(\triangle)$  depends on the subject.
If $\triangle$ is a tetrahedron, for computing the arrangement of planes (\cite[Chapter 6]{Mulmuley94}), $K(\triangle)$ is the set of planes intersecting $\triangle$, while in our analysis for the expected number of relevant tetrahedra (Section~\ref{sub-ind-contribution}), $K(\triangle)$ is the set of planes lying fully below $\triangle$. 
In the latter example, $D(\triangle)$ may not be disjoint from $K(\triangle)$.

Furthermore, let $C^j(S')$ be the set of configurations $\triangle\in C(S')$ with $|K(\triangle)\cap S'|=j$ (i.e., the \emph{local} conflict size is $j$),
let $C_m(S')$ be the set of configurations $\triangle\in C(S')$ with $|K(\triangle)|=m$ (i.e., the \emph{global} conflict size is $m$),
and let $C_m^j(S')$ be $C^j(S')\cap C_m(S')$.

Most works focus on $C^0(S')$. Chazelle and Friedman~\cite{ChazelleF90} studied an if-and-only-if condition:
\begin{itemize}
	\item[($\ast$)] $\triangle\in C^0(S')$ if and only if $D(\triangle)\subseteq S'$ and $K(\triangle)\cap S'=\emptyset$.
\end{itemize}
Agarwal et~al.~\cite{AgarwalMS98} further relaxed the above condition by two weaker conditions:
\begin{enumerate}[label=(\roman*)]
	\item For any $\triangle\in C^0(S')$, $D(\triangle)\subseteq S'$ and $K(\triangle)\cap S'=\emptyset$.
	\item If $\triangle\in C^0(S')$ and $S''\subseteq S'$ with $D(\triangle)\subseteq S''$, then $\triangle\in C^0(S'')$.
\end{enumerate}
Condition~($\ast$) is both necessary and sufficient, while Condition~(i) is only necessary and Condition~(ii) is rather monotone.
For example, if a conflict relation is defined as that a plane intersects a tetrahedron, by Condition~($\ast$), $C^0(S')$ would be all tetrahedra in the canonical triangulation defined by $S'$, while by Conditions~(i)--(ii), $C^0(S')$ can be merely the tetrahedra in the canonical triangulation defined by $S'$ that intersect the $(\leq \frac{n}{r})$-level of the $n$ planes in $S$, i.e., $C^0(S')$ can be the \emph{relevant} tetrahedra in the canonical triangulation defined by $S'$.
(The $(\leq \frac{n}{r})$-level and relevant tetrahedra are already defined in Section~\ref{sec-ind}.)

\begin{remark}
	The property ``$K(\triangle)\cap S'=\emptyset$'' in Condition~($\ast$) and Condition~(i) is actually redundant in our notation.
	In contrast, Agarwal et~al.'s notation~\cite{AgarwalMS98} requires this property since they do not consider nonzero local conflicts. We keep this redundant property for easier comparison.
\end{remark}

Agarwal et~al.\ generalized Chazelle and Friedman's concept to bound the expected number of configurations that conflict with at least $t\frac{n}{r}$ objects, but no object in an $r$-element sample:

\begin{lemma}\label{lem-agarwal1} (\cite[Lemma~2.2]{AgarwalMS98})
For an $r$-element random subset $R$ of $S$, if $C^0(R)$ satisfies Conditions~(i)~and~(ii), then
\begin{equation*}
E[|C^0_{\geq t\frac{n}{r}}(R)|]=O(2^{-t})\cdot E[|C^0(R')|],
\end{equation*}
where $t$ is a parameter with $1\leq t\leq \frac{r}{d}$ and $R'$ is a random subset of $R$ of size $r'=\lfloor \frac{r}{t}\rfloor$.
\end{lemma}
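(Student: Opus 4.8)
The plan is to prove the statement by the standard Clarkson--Shor-style ``double sampling'' argument, following the template that Chazelle and Friedman and then Agarwal, Matou\v{s}ek, and Schwarzkopf used. Let me set up the two-level sampling: we have $R$ an $r$-element random subset of $S$, and then $R'$ a random $r'$-element subset of $R$, with $r'=\lfloor r/t\rfloor$. The key observation is that $R'$, viewed directly, is itself a random $r'$-element subset of $S$, so $E[|C^0(R')|]$ on the right-hand side is a quantity we can relate to via the same kind of conditioning. The heart of the argument is to lower-bound, for a fixed configuration $\triangle$ with $d(\triangle)=d(\triangle)$ defining objects and global conflict weight roughly $m \geq t\frac{n}{r}$ that appears in $C^0(R)$, the conditional probability that $\triangle$ also survives into $C^0(R')$, i.e. that $\triangle\in C^0(R')$. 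By Condition~(ii), once $D(\triangle)\subseteq R'$ and $\triangle\in C^0(R)$ with $D(\triangle)\subseteq R$, the configuration is in $C^0(R')$ provided additionally no conflicting object of $\triangle$ lands in $R'$; by Condition~(i) applied to $R$ we already know $K(\triangle)\cap R=\emptyset$, so if $D(\triangle)\subseteq R'\subseteq R$ then automatically $K(\triangle)\cap R'=\emptyset$, and hence $\triangle\in C^0(R')$ by Condition~(ii). So the only event we need is $D(\triangle)\subseteq R'$.

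Next I would estimate $\Pr[D(\triangle)\subseteq R' \mid \triangle\in C^0(R)]$. Conditioned on $R$ (with $D(\triangle)\subseteq R$, $|R|=r$), the set $R'$ is a uniform $r'$-subset of $R$, so this probability is $\binom{r-d(\triangle)}{r'-d(\triangle)}/\binom{r}{r'}$, which is $\prod_{i=0}^{d(\triangle)-1}\frac{r'-i}{r-i}\geq \left(\frac{r'-d}{r}\right)^{d} = \Omega\big((r'/r)^{d}\big)=\Omega(t^{-d})$, using $r'=\lfloor r/t\rfloor$, $t\leq r/d$, and $d$ constant. Summing the indicator $[\triangle\in C^0(R')]$ over all $\triangle\in C^0_{\geq t\frac{n}{r}}(R)$ and taking expectations:
\begin{equation*}
E\big[|C^0(R')|\big]\;\geq\; E\!\left[\sum_{\triangle\in C^0_{\geq t\frac{n}{r}}(R)} \Pr[\triangle\in C^0(R')\mid R]\right]\;=\;\Omega\big(t^{-d}\big)\cdot E\big[|C^0_{\geq t\frac{n}{r}}(R)|\big].
\end{equation*}
This already gives a polynomial-in-$t$ factor but not the claimed $2^{-t}$. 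To boost to exponential decay one uses the fact that the conflict weight is \emph{large}: if $\triangle$ had weight $m\geq t\frac{n}{r}$ and it survived in $C^0(R)$, then $R$ avoided all $m$ conflicting objects, an event of probability roughly $(1-m/n)^{r}\lesssim e^{-mr/n}\leq e^{-t}$ relative to a version of $R$ that does not avoid them — more precisely, one compares $C^0(R)$ against $C^0(R'')$ for a slightly larger or differently-sized sample, or splits the range of $m$ dyadically as $m\in[2^{j}t\frac{n}{r},2^{j+1}t\frac{n}{r})$ and applies the previous paragraph's bound with parameter $2^{j}t$ in place of $t$; since $E[|C^0(R')|]$ for $r'=\lfloor r/(2^j t)\rfloor$ is itself at most $E[|C^0(\tilde R)|]$ for $\tilde R$ of size $\lfloor r/t\rfloor$ up to a constant factor (monotonicity of the expected complexity in sample size, which holds in all the standard settings and is exactly how Agarwal et~al.\ phrase their lemma), the factor $(2^j t)^{-d}$ times the number of dyadic levels telescopes against $2^{-2^j t}$-type tail bounds to give an overall $O(2^{-t})$. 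The cleanest route is simply to invoke the dyadic decomposition and the single-level bound together with Markov/Chernoff on the number of conflicting objects captured, exactly as in the proof of \cite[Lemma~2.2]{AgarwalMS98}.

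The main obstacle, and the point that needs the most care, is the passage from the polynomial factor $t^{-d}$ to the exponential factor $2^{-t}$: the $t^{-d}$ loss coming from demanding $D(\triangle)\subseteq R'$ must be absorbed, and this works only because configurations of large global weight are \emph{exponentially unlikely} to be conflict-free in $R$ in the first place, so the relevant comparison is not ``$R'$ vs.\ $R$'' naively but ``$R'$ vs.\ a sample whose size is tuned to the weight threshold $t\frac{n}{r}$.'' One has to be careful that $t\leq r/d$ guarantees $r'=\lfloor r/t\rfloor\geq d$, so that $D(\triangle)\subseteq R'$ is possible at all, and that the constants hidden in $O(\cdot)$ depend only on $d$. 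I would also double-check the monotonicity claim $E[|C^0(R'_{\text{size }a})|]=O\big(E[|C^0(R'_{\text{size }b})|]\big)$ for $a\leq b$ in the abstract configuration-space setting, or else restrict attention to $r'=\lfloor r/t\rfloor$ throughout and let the dyadic sum over $j$ contribute only a constant, which is the safer formulation and the one that matches the statement verbatim. Everything else — the two applications of Conditions~(i) and (ii), the binomial-ratio computation, the dyadic split — is routine once this structure is in place.
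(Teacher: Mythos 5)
The paper itself does not prove this lemma---it is imported verbatim from Agarwal, Matou\v{s}ek and Schwarzkopf \cite{AgarwalMS98}---but your sketch does not reconstruct that argument: the step that actually produces the $2^{-t}$ factor is missing. Your first step (condition on $\triangle\in C^0(R)$, subsample $R'\subseteq R$, and use Condition~(ii) once $D(\triangle)\subseteq R'$) is correct but can only ever yield $E[|C^0_{\geq t\frac{n}{r}}(R)|]=O(t^{d})\cdot E[|C^0(R')|]$, as you yourself compute. The reason it cannot be pushed further within that framework is structural: once you condition on $\triangle\in C^0(R)$, the heaviness $w(\triangle)\geq t\frac{n}{r}$ exerts no further penalty, because $R$ has already avoided $K(\triangle)$ and every subset $R'\subseteq R$ avoids it for free. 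The exponential decay in the Chazelle--Friedman/Agarwal et~al.\ argument \cite{ChazelleF90,AgarwalMS98} comes from a different comparison: fix a heavy configuration $\triangle\in T$ with $w(\triangle)\geq t\frac{n}{r}$ and compare the \emph{unconditional} probabilities $\Pr[\triangle\in C^0(R)]$ and $\Pr[\triangle\in C^0(R')]$, treating $R'$ as a uniform $r'$-subset of $S$. Writing the events ``$D(\triangle)\subseteq\cdot$ and $K(\triangle)\cap\cdot=\emptyset$'' as ratios of binomial coefficients, their quotient is roughly $t^{d}\,(1-\tfrac{w(\triangle)}{n})^{\,r-r'}\leq t^{d}e^{-(t-1)}=O(2^{-t})$: avoiding at least $t\frac{n}{r}$ objects is exponentially harder for a sample of size $r$ than for one of size $\lfloor r/t\rfloor$, and this is where the polynomial loss $t^{d}$ from demanding $D(\triangle)\subseteq\cdot$ gets absorbed. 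Summing over all heavy $\triangle\in T$ gives the lemma. This per-configuration probability-ratio template is exactly what the present paper mirrors (in the reversed regime) in the proof of Theorem~\ref{thm-many-local-less-global}, via the ratio $\Pr[A_\triangle]/\Pr[A'_\triangle]$ together with the coupling argument that justifies the conditional inequality from Condition~(II).

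Your proposed ``boost'' does not close this gap. The dyadic split over $m\in[2^{j}t\frac{n}{r},2^{j+1}t\frac{n}{r})$ merely re-applies the same polynomial bound with a larger parameter, producing factors $O\big((2^{j}t)^{d}\big)$ rather than anything decaying; the claimed telescoping against ``$2^{-2^{j}t}$-type tail bounds'' presupposes precisely the exponential estimate that is to be proved, and the closing appeal to ``exactly as in the proof of \cite[Lemma~2.2]{AgarwalMS98}'' defers the only nontrivial step to the statement under proof. Likewise, the heuristic $(1-m/n)^{r}\leq e^{-t}$ is not usable as stated, since what is needed is a ratio of two hypergeometric probabilities for the two sample sizes (for the same fixed $\triangle$), not an absolute avoidance probability; and the monotonicity of $E[|C^0(\cdot)|]$ in the sample size that your dyadic variant would additionally require is not guaranteed by Conditions~(i)--(ii) alone. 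So the proposal, as written, establishes only the $O(t^{d})$ bound, not the claimed $O(2^{-t})$ bound.
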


In addition to the expected results, several high probability results exist if $\T(S)$ satisfies a property called \emph{bounded valence}: for all subsets $S'\subseteq S$, $|T(S')|=O(|S'|^d)$, and for all configurations $\triangle\in T(S')$, $D(\triangle)\subseteq S'$.

\begin{lemma}\label{lem-high-probability} (\cite[Theorem~5.1.2]{Mulmuley94})
	If $\T(S)$ satisfies the bounded valence, for an $r$-element random subset $R$ of $S$ and a constant $c>d$, with probability $1-O(r^{-(c-d)})$, every configuration in $C^{\bm{0}}(R)$ conflicts with at most $c\frac{n}{r}\log r$ objects in $S$.
\end{lemma}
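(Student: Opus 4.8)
The plan is to prove Lemma~\ref{lem-high-probability} by a union bound over all potential configurations together with a standard ``bad event'' estimate. First I would fix a threshold $l = c\frac{n}{r}\log r$ and call a configuration $\triangle \in T(S)$ \emph{heavy} if $w(\triangle) = |K(\triangle)| > l$. The goal becomes showing that, with probability $1 - O(r^{-(c-d)})$, no heavy configuration appears in $C^0(R)$. Because of the bounded valence assumption, any $\triangle \in C^0(R) \subseteq T(R)$ satisfies $D(\triangle) \subseteq R$; moreover the total number of potential configurations $\triangle \in T(S)$ with $d(\triangle) = i$ (for each $i \le d$) is at most $O(n^i)$, since such a configuration is determined by its $i$-element defining set. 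So it suffices to bound, for each fixed heavy $\triangle$ with $d(\triangle) = i$, the probability that $\triangle \in C^0(R)$, and then sum.

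The key estimate is: for a fixed heavy $\triangle$, $\Pr[\triangle \in C^0(R)]$ is at most $\Pr[D(\triangle) \subseteq R \text{ and } K(\triangle) \cap R = \emptyset]$, using only the necessity direction analogous to Condition~(i). For an $r$-element subset drawn uniformly without replacement from the $n$ objects, this probability is
\begin{equation*}
\frac{\binom{n - d(\triangle) - w(\triangle)}{r - d(\triangle)}}{\binom{n}{r}},
\end{equation*}
which is at most $\bigl(\frac{r}{n}\bigr)^{d(\triangle)} \cdot \bigl(1 - \frac{w(\triangle)}{n}\bigr)^{r - d(\triangle)} \le \bigl(\frac{r}{n}\bigr)^{i} e^{-w(\triangle)(r-i)/n}$ up to constant factors. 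Plugging in $w(\triangle) > l = c\frac{n}{r}\log r$ and $r - i = \Theta(r)$ gives a factor roughly $e^{-c\log r} = r^{-c}$ (absorbing the $(r-i)/r$ slack into the constant $c$, or by taking $l$ a constant factor larger). Multiplying by the $O(n^i)$ count of configurations of defining size $i$ and the $(r/n)^i$ term, the $n^i$ cancels and we are left with $O(r^i \cdot r^{-c}) = O(r^{-(c-i)}) \le O(r^{-(c-d)})$ for each $i$; summing over the $O(1)$ values of $i \le d$ keeps the bound $O(r^{-(c-d)})$.

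The main obstacle, and the place to be careful, is twofold: (a) matching the exact constant so that the $\log r$ in the threshold translates into exactly an $r^{-c}$-type decay — this is just choosing the constant hidden in ``$l = O(\frac{n}{r}\log r)$'' large enough relative to $c$, and handling the $e^{-w(r-i)/n}$ versus $(1 - w/n)^{r-i}$ discrepancy via $1 - x \le e^{-x}$ together with the fact that sampling without replacement is at least as concentrated as with replacement (or equivalently using the hypergeometric tail bound directly); and (b) justifying that it genuinely suffices to control $\Pr[D(\triangle)\subseteq R,\ K(\triangle)\cap R = \emptyset]$ — here one only needs the ``necessary'' half of the characterization of $C^0(R)$, namely that $\triangle \in C^0(R)$ forces $D(\triangle) \subseteq R$ and $K(\triangle)\cap R = \emptyset$, which holds because $\triangle\in C^0(R)$ means $\triangle$ is a configuration defined by $R$ (so its defining set lies in $R$ by bounded valence) with zero local conflicts (so no conflicting object was sampled). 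Everything else is the routine union-bound bookkeeping sketched above.
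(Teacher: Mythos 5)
The paper never proves this lemma --- it is imported verbatim from Mulmuley (\cite[Theorem~5.1.2]{Mulmuley94}) --- so there is no internal proof to compare against; your argument is precisely the standard union-bound proof behind that citation, and it is correct: $\triangle\in C^0(R)$ forces $D(\triangle)\subseteq R$ (second clause of bounded valence) and $K(\triangle)\cap R=\emptyset$ (definition of $C^0$), the hypergeometric ratio is bounded, in fact exactly and not merely up to constants, by $(r/n)^{d(\triangle)}\,e^{-w(\triangle)(r-d(\triangle))/n}$, and summing $O(n^i)\cdot(r/n)^i\cdot O(r^{-c})=O(r^{i-c})$ over $i\le d$ gives $O(r^{-(c-d)})$. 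Two small points. First, the ``slack'' in item (a) needs no enlargement of $l$ (which would change the statement) and no tampering with $c$: with $w>c\frac{n}{r}\log r$ one gets $e^{-w(r-i)/n}\le r^{-c}\cdot r^{ci/r}$, and $r^{ci/r}=O(1)$ for constant $c$ and $i\le d$, so the same constant $c$ survives. Second, your count of $O(n^i)$ potential configurations with $d(\triangle)=i$ relies on each defining set determining only $O(1)$ configurations; that is exactly Mulmuley's bounded-valence axiom and is the intended reading here, but the paper's paraphrase literally states only $|T(S')|=O(|S'|^d)$, which by itself bounds the total count by $O(n^d)$ without the stratification by $i$ --- and without stratifying, the cancellation $n^i\cdot(r/n)^i=r^i$ fails for $i<d$. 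So you should state explicitly that you invoke the bounded-multiplicity-per-defining-set property rather than the crude total bound; with that made explicit, the proof is complete.
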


The following corollary can be derived in a similar way as Lemma~\ref{lem-high-probability}.

\begin{corollary}\label{cor-high-probability}
If $\T(S)$ satisfies the bounded valence, for a random subset $R$ of $S$ of size $c\cdot t\log t$ and a sufficiently large constant $c$, with probability greater than $1/2$, every configuration in $C^{\bm{0}}(R)$ conflicts with at most $\frac{n}{t}$ objects in $S$.
\end{corollary}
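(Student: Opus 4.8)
The plan is to run the same union-bound argument that proves Lemma~\ref{lem-high-probability}, but with the conflict threshold set to $\tfrac nt$ and the sample size calibrated so that the failure probability falls below $\tfrac12$. Write $r=c\,t\log t$ for the size of $R$; if $c\,t\log t\geq n$ we simply take $R=S$, in which case every $\triangle\in C^{0}(S)$ has $w(\triangle)=0\leq\tfrac nt$ and there is nothing to prove, so assume $c\,t\log t<n$. Call a configuration $\triangle\in\T(S)$ \emph{heavy} if $w(\triangle)>\tfrac nt$. The goal is to bound $\Pr\big[C^{0}(R)\text{ contains a heavy configuration}\big]$ by $\tfrac12$.

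First I would estimate, for a single heavy $\triangle\in\T(S)$, the probability that $\triangle\in C^{0}(R)$. By the bounded valence assumption, $C^{0}(R)\subseteq C(R)\subseteq\T(R)$, so $\triangle\in C^{0}(R)$ forces $D(\triangle)\subseteq R$, and $|K(\triangle)\cap R|=0$ means $K(\triangle)\cap R=\emptyset$; if $D(\triangle)\cap K(\triangle)\neq\emptyset$ these two requirements are incompatible and the probability is $0$, so assume they are disjoint. Then, writing $d'=d(\triangle)\leq d$ and $w=w(\triangle)>\tfrac nt$, the standard hypergeometric estimate gives
\[
\Pr[\triangle\in C^{0}(R)]\;\leq\;\frac{\binom{n-d'-w}{r-d'}}{\binom{n}{r}}\;\leq\;\Big(\tfrac rn\Big)^{d'}\Big(1-\tfrac wn\Big)^{r-d'}\;\leq\;\Big(\tfrac rn\Big)^{d'}e^{-w(r-d')/n}\;\leq\;\Big(\tfrac rn\Big)^{d'}e^{-(r-d)/t},
\]
using $r\geq 2d$, which holds for $c$ large. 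Next I would group configurations by the size $d'$ of their defining set: the bounded valence assumption implies that $\T(S)$ contains $O(n^{d'})$ configurations with $d(\triangle)=d'$, so a union bound over all heavy configurations yields
\[
\Pr\big[C^{0}(R)\text{ contains a heavy configuration}\big]\;\leq\;\sum_{d'=0}^{d}O\big(n^{d'}\big)\Big(\tfrac rn\Big)^{d'}e^{-(r-d)/t}\;=\;O\big(r^{d}\,e^{-(r-d)/t}\big).
\]

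Finally I would substitute $r=c\,t\log t$, so that $\tfrac{r-d}{t}\geq c\log t-\tfrac dt$ and hence $e^{-(r-d)/t}\leq e^{\,d/t}\,e^{-c\log t}=e^{\,d/t}\,t^{-\Theta(c)}$, while $r^{d}=(c\,t\log t)^{d}$. The resulting bound $O\!\big(c^{d}e^{d}(\log t)^{d}\,t^{\,d-\Theta(c)}\big)$ is decreasing in $t$ for $t\geq2$ once $c$ is large (the factor $t^{\,d-\Theta(c)}$ overwhelms $(\log t)^{d}$), hence it is maximized at $t=2$, where it equals $O(c^{d}e^{-c})$; this is below $\tfrac12$ as soon as $c$ is a sufficiently large constant depending only on $d$ and on the constant hidden in the bounded valence. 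Complementing this event is precisely the assertion of Corollary~\ref{cor-high-probability}. The one place where care is needed is this last calibration: the factor $r^{d}$, coming from the count of configurations, must be beaten by the decay $e^{-(r-d)/t}$, and it is exactly the choice $r=\Theta(t\log t)$ — rather than $r=\Theta(t)$ — that makes $\tfrac{r-d}{t}=\Theta(\log t)$ large enough to dominate $d\log r=\Theta(d\log t)$ for $c\gg d$; everything else is a routine transcription of the estimates that prove Lemma~\ref{lem-high-probability}.
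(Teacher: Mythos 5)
Your proof is correct and is essentially the derivation the paper intends: Corollary~\ref{cor-high-probability} is only asserted to follow ``in a similar way as Lemma~\ref{lem-high-probability}'', i.e., by the standard bounded-valence union bound with a hypergeometric/exponential tail estimate, which is exactly what you carry out (and it mirrors the explicit computations in Lemma~\ref{lem-high-level} and Theorem~\ref{thm-log-level}). The one step to read as the paper does is the per-$d'$ count: bounded valence is used here in the usual Mulmuley sense that each defining set of size $d'$ determines $O(1)$ configurations, so the $O(n^{d'})$ bound you invoke is available, and with it your calibration $r=c\,t\log t$ yielding failure probability $O\big(c^{d}e^{-\Theta(c)}\big)<\tfrac12$ for sufficiently large constant $c$ goes through.
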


\begin{table}
	\caption{Symbol Table.}\label{tb-symbols}
	\centering
	\begin{tabular}{|c|l|}
		\hline
		$C(R)$ & Configurations defined by $R$\\
		\hline
		$C^j(R)$ & Configurations in $C(R)$ in conflict with $j$ objects in $R$\\
		\hline  
		$C_m (R)$ & Configurations in $C(R)$ in conflict with $m$ objects in $S$\\
		\hline
		$C^j_m (R)$ & Configurations in $C(R)$ in conflict with $j$ objects in $R$ and $m$ objects in $S$\\
		\hline
		$\T$ & All possible configurations defined by objects in $S$, i.e., $T=\bigcup_{S'\subseteq S}C(S)$\\
		\hline
		$\T_m$ & Configurations in $\T$ in conflict with \textbf{at most} $m$ objects in $S$\\
		\hline
		$D(\triangle)$ & objects that define $\triangle$\\
		\hline
		$d(\triangle)$ & size of $D(\triangle)$\\
		\hline
		$K(\triangle)$ & objects that conflict with $\triangle$\\
		\hline
		$w(\triangle)$ & size of $K(\triangle)$\\
		\hline
		$x(\triangle)$ & $|K(\triangle)\cap D(\triangle)|$\\
		\hline
	\end{tabular}
\end{table}

Table~\ref{tb-symbols} illustrates symbols that we will use, and in the proofs of the following two subsections, $a^{\underline{b}}$ denotes $\prod_{i=0}^{b-1}(a-i)$.

\subsection{Many Local Conflicts Prevent Few Global Conflicts}\label{sub-rs-local-global}

Applications, e.g. the analysis for the expected number of relevant tetrahedra in Section~\ref{sub-ind-contribution} and our shallow cuttings in Section~\ref{sec-shallow-cutting}, would need to utilize relatively \emph{many} local conflicts to prevent relatively \emph{few} global conflicts.
Such utilization, however, is not allowed by Conditions~(i) and (ii) since they do not consider \emph{nonzero} local conflicts.

To include \emph{nonzero} local conflicts,  we generalize Conditions~(i) and (ii) with $t$ and $t'$ as follows:
\begin{enumerate}[label=(\Roman*)]
	\item For any $\triangle\in C^t(S')$, $D(\triangle)\subseteq S'$ and $|K(\triangle)\cap S'|=t$.
	\item If $\triangle\in C^t(S')$ and $S''\subseteq S'$ with $D(\triangle)\subseteq S''$ and $|K(\triangle)\cap S''|=t'$, then $\triangle\subseteq C^{t'}(S'')$.
\end{enumerate}

We establish Theorem~\ref{thm-many-local-less-global}, which roughly states that if 
the local conflict size of a random configuration is $t$, 
the probability that its global conflict size is linear in $\frac{n}{r}$ decreases factorially in $t$. 
Moreover, it is notable that Lemma~\ref{lem-agarwal1} has a \emph{lower} bound on the global conflict size,
while Theorem~\ref{thm-many-local-less-global} has an \emph{upper} bound.
As a result,
although our proof looks similar to that of Lemma~\ref{lem-agarwal1} (\cite[Lemma~2.2]{AgarwalMS98}), the derivation for bounding the probability is different,
and our proof chooses a sample size between $r-t+d$ and $r-t$ instead of $\lfloor\frac{r}{t}\rfloor$.

\begin{theorem}\label{thm-many-local-less-global}
Let $R$ be an $r$-element random subset of $S$ with $2d\leq r\leq\frac{n}{2}$,
and let $t$ be an integer with $d\leq t\leq r-d$.
If $C(R)$ satisfies Conditions~$(I)$ and $(II)$, then
\[E[|C^t_{\leq c\frac{n}{r}}(R)|]\leq \sum_{l=0}^d \frac{e^{2c}\cdot c^{t-l}}{(t-l)!}\cdot E[|C^l(R_l)|],\] 
where $R_l$ is an $(r-t+l)$-element random subset of $R$. ($R_l$ is also a random subset of $S$.)\jr{$l\longrightarrow i$}
\end{theorem}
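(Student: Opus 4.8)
The plan is to mimic the structure of the proof of Lemma~\ref{lem-agarwal1} (\cite[Lemma~2.2]{AgarwalMS98}) but with the sampling parameters adapted to the ``many local conflicts'' regime. I would bound $E[|C^t_{\leq c\frac{n}{r}}(R)|]$ by a double-counting argument that relates configurations appearing in $C(R)$ with local conflict size exactly $t$ and few global conflicts, to configurations appearing in $C(R_l)$ with small local conflict size $l \le d$. Concretely, fix a configuration $\triangle$ and consider the event ``$\triangle\in C^t(R)$'' together with ``$w(\triangle)\le c\frac{n}{r}$''. I would then pass to the smaller random subset $R_l\subseteq R$ of size $r-t+l$ and estimate the conditional probability that $\triangle\in C^l(R_l)$ given that $\triangle\in C^t(R)$ and has few global conflicts. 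Using Condition~(II), if $D(\triangle)\subseteq R_l$ and $|K(\triangle)\cap R_l| = l$ then $\triangle\in C^l(R_l)$; and using Condition~(I) for $C^t(R)$ we know $D(\triangle)\subseteq R$ and $|K(\triangle)\cap R| = t$. So the passage from $R$ to $R_l$ succeeds precisely when $R_l$ retains all $d(\triangle)$ defining objects and exactly $l - x(\triangle)$ of the remaining $t - x(\triangle)$ conflicting objects of $R$ (where $x(\triangle)=|K(\triangle)\cap D(\triangle)|$), i.e. it drops exactly the right $t-l$ conflicting elements. Summing over the $d+1$ possible values of $l$ (or rather over $l$ from $0$ to $d$, accounting for $x(\triangle)$) gives the stated sum.

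The key computation is the hypergeometric ratio. Conditioned on $\triangle\in C^t(R)$ with $w(\triangle)=w\le c\frac{n}{r}$, the set $R$ contains the $d(\triangle)\le d$ defining objects, exactly $t$ objects of $K(\triangle)$, and $r - d(\triangle) - (t - x(\triangle))$ other objects drawn from the $n - d(\triangle) - (w - x(\triangle))$ non-defining non-conflicting objects of $S$. To obtain $\triangle\in C^l(R_l)$ we need $R_l$ (a uniform $(r-t+l)$-subset of $R$) to contain all $d(\triangle)$ defining objects and exactly $l - x(\triangle)$ of the $t - x(\triangle)$ non-defining conflicting objects in $R$. The probability of this is a ratio of binomial coefficients; after simplification it has the shape $\binom{t-x}{l-x}\cdot\frac{(r-d)^{\underline{\,\cdot\,}}}{\cdots}$, and the dominant factor is $1/(t-l)!$ coming from the requirement of dropping $t-l$ specific conflicting objects out of roughly $r$ slots when shrinking from size $r$ to size $r-t+l$. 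Inverting this to bound $\Pr[\triangle\in C^t(R),\ w(\triangle)\le c\frac{n}{r}]$ in terms of $\Pr[\triangle\in C^l(R_l)]$ introduces the reciprocal factorial $1/(t-l)!$ and a factor $c^{t-l}$ coming from the constraint $w\le c\frac{n}{r}$ (each of the $t-l$ retained-in-$R$-but-dropped-in-$R_l$ elements lies among at most $w \le c n/r$ candidates while a ``random'' element lies among $\approx n/r$); the $e^{2c}$ absorbs the lower-order terms $(1-\tfrac{t}{r})^{-1}$-type corrections and the $O(d)$ slack in the sample size, using $2d\le r\le n/2$ and $d\le t\le r-d$.

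The main obstacle I anticipate is controlling these lower-order correction factors carefully enough to get the clean constant $e^{2c}$ rather than something that blows up with $t$. In particular, the ratio of falling factorials such as $\frac{(n-t)^{\underline{r-t}}}{n^{\underline{r}}}$ versus $\frac{(n-l)^{\underline{r-l}}}{n^{\underline{r}}}$ must be bounded by a quantity independent of $t$; this is where the hypotheses $r \le n/2$ and $t \le r - d$ are essential, since they keep all the relevant ratios $\frac{n - r}{n}$, $\frac{r - t}{r}$ bounded away from $0$ in the right direction, and one estimates $\prod (1 - \tfrac{i}{n})$-type products by $e^{-O(r^2/n)}$ — but here we only need a crude $e^{2c}$-type bound because $t - l$ and the retained/dropped counts are all $O(cn/r)$ relative to the ambient sizes. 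I would handle this by writing the probability ratio exactly as a product of at most $d + (t-l)$ factors, splitting off the $(t-l)!$ and $c^{t-l}$ explicitly, and bounding the remaining product of $O(d)$ near-unit factors crudely; the choice of sample size $r - t + l$ (rather than $\lfloor r/t\rfloor$ as in Lemma~\ref{lem-agarwal1}) is exactly what makes this product telescope nicely, since it removes $t - l$ elements — the same order as the number of conflicting objects to be dropped. Finally, summing over $l = 0, \dots, d$ and over all configurations $\triangle$ yields the claimed bound $\sum_{l=0}^{d} \frac{e^{2c} c^{t-l}}{(t-l)!} E[|C^l(R_l)|]$.
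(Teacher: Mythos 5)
Your overall architecture does match the paper's: the same choice of subsample size $r-t+l$ (instead of $\lfloor r/t\rfloor$), the same use of Conditions (I) and (II), the per-configuration role of $x(\triangle)$, and the final summation over $l=0,\dots,d$. But the pivotal estimate, as you describe it, does not produce the claimed factor $\frac{e^{2c}c^{t-l}}{(t-l)!}$. You compute, conditioned on $\triangle\in C^t(R)$, the probability that the uniform $(r-t+l)$-subset $R_l$ of $R$ keeps $D(\triangle)$ and drops exactly the right $t-l$ conflicting elements. That probability equals $\binom{t-x}{l-x}\big/\binom{r}{t-l}$ (for $l=x(\triangle)$ it is $1\big/\binom{r}{t-l}$): it is independent of $w(\triangle)$ and of $n$, so the constraint $w\le c\frac{n}{r}$ cannot enter it, and its dominant behaviour is $(t-l)!\,/\,r^{\underline{t-l}}$, not $1/(t-l)!$. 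Inverting it therefore only gives $\Pr[\triangle\in C^t(R)]\le\binom{r}{t-l}\cdot\Pr[\triangle\in C^l(R_l)]\approx\frac{r^{t-l}}{(t-l)!}\Pr[\triangle\in C^l(R_l)]$, which is far too weak: with $r^{t-l}$ in place of $c^{t-l}$ the downstream applications (e.g.\ the $O(r)$ bound in Theorem~\ref{thm-size-AD-RD} and Theorem~\ref{thm-cutting-size}) no longer go through. Your parenthetical heuristic about the $t-l$ dropped elements "lying among at most $cn/r$ candidates" is the right intuition, but it is never cashed out inside the conditional-thinning framing you set up.

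The mechanism that actually yields $c^{t-l}/(t-l)!$ is a comparison of the two \emph{unconditional} sampling events over the draw from $S$, not a thinning within $R$. The paper defines $A_\triangle$: $D(\triangle)\subseteq R$ and $|K(\triangle)\cap R|=t$, and $A'_\triangle$: $D(\triangle)\subseteq R_{x}$ and $K(\triangle)\cap R_{x}=K(\triangle)\cap D(\triangle)$ (with $x=x(\triangle)$); Condition (I) gives $\Pr[\triangle\in C^t(R)]=\Pr[A_\triangle]\cdot\Pr[\triangle\in C^t(R)\mid A_\triangle]$ and likewise for $A'_\triangle$, and Condition (II) — via the coupling that builds $R$ from $R_{x}$ by adding $t-x$ random elements of $K(\triangle)\setminus D(\triangle)$ — gives $\Pr[\triangle\in C^t(R)\mid A_\triangle]\le\Pr[\triangle\in C^{x}(R_{x})\mid A'_\triangle]$. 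Everything then rides on the hypergeometric ratio $\frac{\Pr[A_\triangle]}{\Pr[A'_\triangle]}=\frac{\binom{n}{r-(t-x)}\binom{w-x}{t-x}}{\binom{n}{r}}$, in which $\binom{w-x}{t-x}\le\frac{(cn/r)^{t-x}}{(t-x)!}$ is exactly where $w\le c\frac{n}{r}$ enters (supplying both the $c^{t-x}$ and the $1/(t-x)!$), while $\binom{n}{r-(t-x)}\big/\binom{n}{r}$ contributes roughly $\big(\frac{r}{n-r}\big)^{t-x}$, and $r\le\frac{n}{2}$ lets the leftover $\big(\frac{n}{n-r}\big)^{t-x}$-type terms be absorbed into $e^{2c}$. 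So you have the right scaffolding and the right target, but the conditional-thinning computation at the heart of your argument must be replaced by this ratio of sampling probabilities from $S$; as written, the proof does not close.
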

\begin{proof}
To simplify descriptions, let $m$ be $c\frac{n}{r}$ and let $\T_m$ be the set of configurations in $\T$ that conflict with \emph{at most} $m$ objects in $S$. 
Consider a configuration $\triangle\in \T_m$, and let $x(\triangle)$ be $|D(\triangle)\cap K(\triangle)|$. It is clear that $0\leq x(\triangle)\leq d$.
We attempt to prove 

\begin{equation}\label{eq-x-2}
\Pr[\triangle\in C^t(R)]\leq \frac{e^{2c}\cdot c^{t-x(\triangle)}}{(t-x(\triangle))!}\cdot \Pr[\triangle\in C^{x(\triangle)} (R_{x(\triangle)})], 
\end{equation}
where $R_{x(\triangle)}$ is an $(r-t+x(\triangle))$-element random subset of $R$.
Then, we have
\[E[C^t_{\leq m}(R)]=\sum_{\triangle'\in \T_m} \Pr[\triangle'\in C^t(R)]\overset{(\ref{eq-x-2})}{\leq}  \sum_{\triangle'\in \T_m}  \frac{e^{2c}\cdot c^{t-x(\triangle')}}{(t-x(\triangle'))!}\Pr[\triangle'\in C^{x(\triangle')} (R_{x(\triangle')})]\]
\[ \leq  \sum_{\triangle'\in \T_m}  \sum_{l=0}^d \frac{e^{2c}\cdot c^{t-l}}{(t-l)!} \Pr[\triangle'\in C^l (R_l)] \leq \sum_{\triangle''\in \bm{T}}  \sum_{l=0}^d \frac{e^{2c}\cdot c^{t-l}}{(t-l)!} \Pr[\triangle''\in C^l (R_l)]\]
\[=   \sum_{l=0}^d \frac{e^{2c}\cdot c^{t-l}}{(t-l)!} \sum_{\triangle''\in T} \Pr[\triangle''\in C^l (R_l)]=\sum_{l=0}^d \frac{e^{2c}\cdot c^{t-l}}{(t-l)!}\cdot E[|C^l(R_l)|].\]

Let $\triangle\in \T_m$ be a \emph{\textbf{fixed}} configuration. Also let us assume that $t\leq w(\triangle)$ holds; otherwise, $\triangle$ must not belong to $C^t(R)$, making the claim~(\ref{eq-x-2}) obvious.

Let $A_\triangle$ be the event that $D(\triangle)\subseteq R$ and $|K(\triangle)\cap R|=t$,
and let $A'_\triangle$ be the event that $D(\triangle)\subseteq R_{x(\triangle)}$ and $K(\triangle)\cap R_{x(\triangle)}=K(\triangle)\cap D(\triangle)$,
the latter of which implies that $|K(\triangle)\cap R_{x(\triangle)}|=x(\triangle)$.

According to Condition~(I), we have
\[ \Pr[\triangle\in C^t(R)]=\Pr[A_\triangle]\cdot \Pr[\triangle\in C^t(R)\mid A_\triangle],\]
and
\[ \Pr[\triangle\in C^{x(\triangle)}(R_{x(\triangle)})]=\Pr[A'_\triangle]\cdot \Pr[\triangle\in C^{x(\triangle)}(R_{x(\triangle)})\mid A'_\triangle].\]
So, we have
\[\frac{\Pr[\triangle\in C^t(R)]}{\Pr[\triangle\in C^{x(\triangle)}(R_{x(\triangle)})]}=\frac{\Pr[A_\triangle]}{\Pr[A'_\triangle]}\cdot \frac{\Pr[\triangle\in C^t(R)\mid A_\triangle]}{\Pr[\triangle\in C^{x(\triangle)}(R_{x(\triangle)})\mid A'_\triangle]}.\]

Moreover, according to Condition~(II), we have
\[\Pr[\triangle\in C^t(R)\mid A_\triangle]\leq \Pr[\triangle\in C^{x(\triangle)}(R_{x(\triangle)})\mid A'_\triangle],\]
implying that
\begin{equation}\label{eq-y-2}
\frac{\Pr[\triangle\in C^t(R)]}{\Pr[\triangle\in C^{x(\triangle)}(R_{x(\triangle)})]}\leq \frac{Pr[A_\triangle]}{Pr[A'_\triangle]}.
\end{equation}
(The implication from Condition~(II) can be explained through a random experiment similar to the random experiment in the proof of~\cite[Lemma~2.2]{AgarwalMS98}; see the end of the current proof.)

Let $\ell$ be $x(\triangle)$,  $w$ be $w(\triangle)$ and $r'$ be $|R_{x(\triangle)}|$, i.e., $r'=r-(t-x(\triangle))=r-(t-\ell)$. Recall that $w\leq m=c\frac{n}{r}$ since $\triangle\in \T_m$.

\[
\def\arraystretch{3}
\begin{array}{>{\displaystyle}l}
\frac{Pr[A_\triangle]}{Pr[A'_\triangle]}=\frac{\frac{{d(\triangle)\choose d(\triangle)}{w-\ell\choose t-\ell}{n-d(\triangle)-(w-\ell)\choose r-d(\triangle)-(t-\ell)}}{{n\choose r}}}{\frac{{d(\triangle)\choose d(\triangle)}{n-d(\triangle)-(w-\ell)\choose r'-d(\triangle)}}{{n\choose r'}}}=\frac{\frac{{w-\ell\choose t-\ell}{n-d(\triangle)-(w-\ell)\choose r-d(\triangle)-(t-\ell)}}{{n\choose r}}}{\frac{{n-d(\triangle)-(w-\ell)\choose r-d(\triangle)-(t-\ell)}}{{n\choose r-(t-\ell)}}}=\frac{{n\choose r-(t-\ell)}{w-\ell\choose t-\ell}}{{n\choose r}}\\
=\frac{n!}{(r-(t-\ell))!(n-r+(t-\ell))!}\cdot \frac{r!(n-r)!}{n!}\cdot\frac{(w-\ell)!}{(t-\ell)!(w-t)!}\\
=\underbrace{\frac{r^{\underline{t-\ell}}}{(n-r+(t-\ell))^{\underline{t-\ell}}}\cdot\frac{(w-\ell)^{\underline{t-\ell}}}{(t-\ell)!}\leq \frac{r^{t-\ell}\cdot (w-\ell)^{t-\ell}}{(n-r)^{t-\ell}}\cdot \frac{1}{(t-\ell)!}}_{r^{\underline{t-\ell}}\leq r^{t-\ell},\; (w-\ell)^{\underline{t-\ell}}\leq (w-\ell)^{t-\ell},\; (n-r+(t-\ell))^{\underline{t-\ell}}\geq (n-r)^{t-\ell}}\\
\leq \frac{r^{t-\ell}\cdot (c\cdot \frac{n}{r})^{t-\ell}}{(n-r)^{t-\ell}}\cdot \frac{1}{(t-\ell)!}=\frac{c^{t-\ell}}{(t-\ell)!}\cdot (\frac{n}{n-r})^{t-\ell}=\frac{c^{t-\ell}}{(t-\ell)!}\cdot (1+\frac{r}{n-r})^{t-\ell}\\
\leq \frac{c^{t-\ell}}{(t-\ell)!}\cdot e^{\frac{r(t-\ell)}{n-r}}\leq  \frac{c^{t-\ell}}{(t-\ell)!}\cdot e^{\frac{cn}{n-r}}\leq \frac{e^{2c}\cdot c^{t-\ell}}{(t-\ell)!},\\
\end{array}
\]
which derives the claim~(\ref{eq-x-2}) from the claim~(\ref{eq-y-2}).
The second to last inequality comes from the fact that $r(t-\ell)\leq rw\leq r\cdot c\frac{n}{r}$,
and the last inequality comes from the fact that $\frac{n}{n-r}\leq 2$ (since $r\leq \frac{n}{2}$).

Since $R$ needs to contain all the objects in $D(\triangle)$ and exactly $t$ objects in $K(\triangle)$,
we let $r$ be at least $t+d$ to allow the case that $|D(\triangle)|=d$ and $D(\triangle)\cap K(\triangle)=\emptyset$;
we also let $t$ be at least $d$ to allow the case that $D(\triangle)\subseteq K(\triangle)$.
These two settings lead to the condition that $d\leq t\leq r-d$.

A random experiment to explain the implication from Condition~(II) is stated as follow: first select a set $R_{x(\triangle)}$ by including all the $d(\triangle)$ objects of $D(\triangle)$ into $R_{x(\triangle)}$, and adding $r-d(\triangle)-t+x(\triangle)$ randomly chosen objects of $S\setminus \big(D(\triangle)\cup K(\triangle)\big)$. By definition of the conditional probability, the probability that $\triangle\in C^{x(\triangle)}(R_{x(\triangle)})$ is exactly $\Pr[\triangle\in C^{x(\triangle)}(R_{x(\triangle)})\mid A'_\triangle]$. Then take this $R_{x(\triangle)}$ and add $t-x(\triangle)$ randomly chosen objects of $K(\triangle)\setminus D(\triangle)$ to $R_{x(\triangle)}$, obtaining a set $R$. It is clear that the distribution of $R$ is the same as if we took the $d(\triangle)$ objects of $D(\triangle)$ , added $t-x(\triangle)$ randomly chosen objects of $K(\triangle)\setminus D(\triangle)$, and added $r-d(\triangle)-t+x(\triangle)$ randomly chosen objects of $S\setminus \big(D(\triangle)\cup K(\triangle)\big)$. Hence, the probability that $\triangle\in C^t(R)$ is $\Pr[\triangle\in C^t(R)\mid A_\triangle]$. Since $R$ was created by adding extra objects to $R_{x(\triangle)}$, i.e., $R_{x(\triangle)}\subseteq R$, Condition~(II) implies that whenever $\triangle$ appears in $C^t(R)$, it must appear in $C^{x(\triangle)}(R_{x(\triangle)})$, leading to that $\Pr[\triangle\in C^t(R)\mid A_\triangle]\leq \Pr[\triangle\in C^{x(\triangle)}(R_{x(\triangle)})\mid A'_\triangle]$.
\end{proof}

\begin{remark}\label{rm-clarkson}
	Clarkson also derived a factorial bound that $E[|T^t_{\leq \frac{n}{r}}(R)|]\leq O\big((\frac{e}{t})^t\big)\cdot |T(R)|$,
	where $T^t_{\leq \frac{n}{r}}(R)$ is the set of configuration in $T(R)$ that conflict with $t$ objects in $R$ and at most $\frac{n}{r}$ objects in $S$ (\cite[Corollary~4.3]{Clarkson87}). However, since there is a quantity difference between $|C^{\leq d}(R)|$ and $|T(R)|$, his factorial bound could not be applied to derive linear-size shallow cuttings. For example, regarding the canonical triangulation for planes in $\R^3$, $|C^{\leq d}(R)|=O(r)$, but $|T(R)|=O(r^{12})$. A detailed comparison is in Appendix~\ref{ap-clarkson}.
\end{remark}

\begin{remark}
	One could assume $D(\triangle)\cap K(\triangle)$ to be empty by replacing $K(\triangle)$ with $K(\triangle)\setminus D(\triangle)$, but the main issue is that when $|K(\triangle)\cap S'|>|D(\triangle)\cap K(\triangle)|$, even after the replacement, $K(\triangle)\cap S'$ is still not empty, distinguishing the technical details of bounding $\frac{Pr[A_\triangle]}{Pr[A'_\triangle]}$ in the proof of Theorem~\ref{thm-many-local-less-global} from previous works~\cite{AgarwalMS98,ChazelleF90,deBergDS1995,Matousek92a}.\jr{Say the proof actually follows a reverse direction!!!}
\end{remark}

\subsection{Logarithmic Local Conflicts are Enough}\label{sub-rs-local-log}
Theorem~\ref{thm-many-local-less-global} requires $t$ to be at most $r-d$,
but $t$ could be $r$ in the worst case. Therefore, we will prove that at high probability, $t$ is $O(\log r)$.

First of all, we analyze the probability that a configuration conflicts with few elements in $S$, but \emph{relatively} many elements in $R$.

\begin{lemma}\label{lem-high-level}
	Let $R$ be an $r$-element random subset of $S$, let $\triangle$ be a configuration with $D(\triangle)\subseteq R$,
	let $w$ be $|K(\triangle)|$, and let $t$ be $|K(\triangle)\cap R|$.
	If $\max\{c(d+m)\log r+d+\max\{c,d\}, 2^{2^{1/c}\cdot\frac{e}{(d+m)}}\}\leq r\leq \frac{n}{(d+m)\cdot \log n}$ and $w\leq c\cdot\frac{n}{r}$,
	then the probability that $t\geq c(d+m)\log r+d$ is at most $r^{-(d+m)}$.  
\end{lemma}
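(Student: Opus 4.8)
The statement is a tail bound on $t = |K(\triangle)\cap R|$ given that the global conflict size $w = |K(\triangle)|$ is small (at most $c\cdot\frac{n}{r}$). Since $\triangle$ is fixed and $D(\triangle)\subseteq R$ is given, conditioning on $D(\triangle)\subseteq R$ only changes the situation by removing the $d(\triangle)\le d$ defining objects from the universe; the remaining $r - d(\triangle)$ elements of $R$ are a uniform random subset of $S\setminus D(\triangle)$ of that size. So the plan is first to reduce to a clean hypergeometric setup: the number of elements of $K(\triangle)\setminus D(\triangle)$ landing in $R\setminus D(\triangle)$ is hypergeometric with population size roughly $n$, $w' \le w \le c\frac{n}{r}$ ``good'' items, and $r' \approx r$ draws. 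Its mean is $\mu = \frac{r' w'}{n - d(\triangle)} \le \frac{r\cdot c(n/r)}{n/2} = O(c)$ (using $r\le n/2$, which follows from $r \le \frac{n}{(d+m)\log n}$ for $n$ large). Adding back the at most $x(\triangle)\le d$ elements of $D(\triangle)\cap K(\triangle)$ contributes only an additive $d$, which is exactly why the threshold in the lemma is $c(d+m)\log r + d$ rather than $c(d+m)\log r$.

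Next I would apply a Chernoff-type upper tail bound for the hypergeometric distribution (which is at least as concentrated as the corresponding binomial, so the standard binomial Chernoff bound applies). With $\mu = O(c)$ and target value $T = c(d+m)\log r$, one has $T/\mu = \Omega((d+m)\log r)$, so the multiplicative-deviation bound $\Pr[X \ge T] \le \left(\frac{e\mu}{T}\right)^{T}$ gives something like $\left(\frac{e}{(d+m)\log r}\right)^{c(d+m)\log r}$. I would then check that this is at most $r^{-(d+m)} = 2^{-(d+m)\log r}$: taking logarithms, it suffices that $c(d+m)\log r\cdot \log\!\left(\frac{(d+m)\log r}{e}\right) \ge (d+m)\log r$, i.e. $c\log\!\left(\frac{(d+m)\log r}{e}\right)\ge 1$, i.e. $\frac{(d+m)\log r}{e} \ge 2^{1/c}$, which is precisely the hypothesis $r \ge 2^{2^{1/c}\cdot e/(d+m)}$ rearranged. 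The condition $r \ge c(d+m)\log r + d + \max\{c,d\}$ is the ``sanity'' requirement that the threshold $T + d$ does not exceed $r$ (so the event is not vacuous) and that $\mu$ is genuinely $O(c)$; I would invoke it when bounding $\mu$ and when writing $T/\mu \ge (d+m)\log r$.

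The main obstacle is getting the hypergeometric tail bound with clean enough constants that the two numerical side conditions in the hypothesis are exactly what is needed, rather than something slightly stronger. Two points require care: (a) the passage from the hypergeometric to a binomial must be justified (Hoeffding's observation that hypergeometric tails are dominated by binomial tails), and (b) the bookkeeping of the $\le d$ ``defining-and-conflicting'' objects and the $\le d$ defining objects removed from the universe must be tracked so that the final additive slack is exactly $+d$ and the population-size change from $n$ to $n - d(\triangle)$ is absorbed into the factor $\frac{n}{n-r}\le 2$. I expect the proof to set $\mu \le 2c$, set $T = c(d+m)\log r + d - d = c(d+m)\log r$ for the shifted variable, and conclude with the chain $\Pr[t \ge c(d+m)\log r + d] \le \Pr[\mathrm{Bin}(r', w'/(n-d(\triangle))) \ge c(d+m)\log r] \le (e\mu/T)^T \le r^{-(d+m)}$, with each inequality justified by one of the three parts of the hypothesis on $r$.
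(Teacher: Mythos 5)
Your plan is essentially the paper's own argument: the paper also reduces to the count of elements of $K(\triangle)$ falling in $R$, writes the relevant probability as a ratio of binomial coefficients (a hypergeometric computation), extracts a bound of the form $c^{t-\ell}/(t-\ell)!$, and then finishes with exactly the Stirling step and numeric check you describe, the condition $r\ge 2^{2^{1/c}e/(d+m)}$ being precisely what makes $\bigl(e/((d+m)\log r)\bigr)^{c}\le 1/2$; your handling of the additive $+d$ shift and your conditional reading of ``$D(\triangle)\subseteq R$'' are both fine (the conditional probability dominates the joint probability the paper actually writes down, and bounding the tail in one shot is if anything cleaner than the paper, which bounds the probability of each individual value of $t$ at or above the threshold). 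The one substantive difference is exactly the point you flagged as the main obstacle, and as proposed it does not close: the paper never passes through the mean, but instead bounds the product $\prod_{i=0}^{t-\ell-1}\frac{(c\frac{n}{r}-i-\ell)(r-i)}{n-i}$ factor by factor, checking $c(n-i)-(c\frac{n}{r}-i-\ell)(r-i)>0$, so each factor is at most \emph{exactly} $c$ and the bound $c^{t-\ell}/(t-\ell)!$ carries no constant loss. Your route via Hoeffding domination and $\Pr[X\ge T]\le(e\mu/T)^{T}$ with $\mu\le 2c$ (or even the sharper $\mu\le c\,n/(n-d(\triangle))>c$) leaves an extra constant in the base; since the hypothesis on $r$ is exactly tight for the final inequality, with no slack to absorb even a factor marginally above $1$, you would prove the lemma only under a slightly stronger lower bound on $r$ (harmless for every downstream use in the paper, but not the statement verbatim). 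The fix is to replace the mean-based Chernoff step by the direct estimate $\Pr[X\ge T]\le\binom{w-\ell}{T}\frac{(r-d')^{\underline{T}}}{(n-d')^{\underline{T}}}$ (where $a^{\underline{b}}=\prod_{i=0}^{b-1}(a-i)$) and pair the factors as above --- which is, in effect, the paper's computation.
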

\begin{proof}
	Let $d'$ be $|D(\triangle)|$, and let $\ell$ be $|D(\triangle)\cap K(\triangle)|$. It is clear that $\ell\leq d' \leq d$.
	Since $R$ must contain all the $d'$ elements of $D(\triangle)$ and exactly $t-\ell$ elements of $K(\triangle)\setminus D(\triangle)$,
	the probability is 
	\[\frac{{d'\choose d'}{w-\ell\choose t-\ell}{n-(w-\ell)-d'\choose r-d'-(t-\ell)}}{{n\choose r}}=\frac{(w-\ell)!}{(t-\ell)!(w-t)!}\frac{(n-(w-\ell)-d')!}{(r-d'-(t-\ell))!(n-w-r+t)!}\frac{r!(n-r)!}{n!}\]
	\[=\frac{(w-\ell)^{\underline{t-\ell}}}{(t-\ell)!}\frac{(n-r)^{\underline{w-t}}\cdot r^{\underline{d'+(t-\ell)}}}{n^{\underline{(w-\ell)+d'}}}=\frac{1}{(t-\ell)!}\cdot\frac{(w-\ell)^{\underline{t-\ell}}r^{\underline{t-\ell}}}{n^{\underline{t-\ell}}}\cdot \frac{(n-r)^{\underline{w-t}}(r-(t-\ell))^{\underline{d'}}}{(n-(t-\ell))^{\underline{w+d'-t}}}\]
	\[=\frac{1}{(t-\ell)!}\cdot\frac{(w-\ell)^{\underline{t-\ell}}r^{\underline{t-\ell}}}{n^{\underline{t-\ell}}}\cdot \underbrace{\frac{(r-(t-\ell))^{\underline{d'}}}{(n-(t-\ell))^{\underline{d'}}}\cdot \frac{(n-r)^{\underline{w-t}}}{(n-(t-\ell)-d')^{\underline{w-t}}}}_{r-(t-\ell)\leq n-(t-\ell) \mbox{ and } n-r \leq n-(t-\ell)-d'\; (\mbox{since } r\geq (t-\ell)+d')}\]
	\[\leq \frac{1}{(t-\ell)!}\cdot\frac{(w-\ell)^{\underline{t-\ell}}r^{\underline{t-\ell}}}{n^{\underline{t-\ell}}}\leq \frac{1}{(t-\ell)!}\cdot \frac{(c\frac{n}{r}-\ell)^{\underline{t-\ell}}r^{\underline{t-\ell}}}{n^{\underline{t-\ell}}}=\frac{1}{(t-\ell)!}\cdot \prod_{i=0}^{t-\ell-1}\frac{(c\frac{n}{r}-i-\ell)(r-i)}{n-i}.\]
	Since $c(n-i)-(c\frac{n}{r}-i-\ell)(r-i)=(i+\ell)(r-i)+ci(\frac{n}{r}-1)>0$, we have $\frac{(c\frac{n}{r}-l-i)(r-i)}{n-i}\leq c$, implying that the probability is at most
	\[\frac{c^{t-\ell}}{(t-\ell)!}\leq (\frac{c\cdot e}{t-\ell})^{t-\ell}\leq (\frac{c\cdot e}{c\cdot(d+m)\log r})^{c\cdot(d+m)\log r}\leq (\frac{1}{2})^{(d+m)\log r}=r^{-(d+m)}.\]
	The first inequality comes from Stirling's approximation,
	the second inequality comes from the fact that $(\frac{1}{(t-\ell)})^{(t-\ell)}$ is inversely proportional to $(t-\ell)$ and that $t-\ell\geq c\cdot(d+m)\log r+d-\ell\geq c\cdot(d+m)\log r$,
	and the third inequality comes from the fact that $r\geq 2^{2^{1/c}\cdot\frac{e}{(d+m)}}$, i.e., $(\frac{e}{(d+m)\log r})^c\leq 1/2$.
\end{proof}

Then, we assume that $\T(S)$ satisfies the bounded valence, i.e., for any subset $S'\subseteq S$, 
$|T(S')|=O(|S'|^d)$ and for any configuration $\triangle\in T(S')$, $D(\triangle)\subseteq S'$, and prove the following theorem.

\begin{theorem}\label{thm-log-level}
	Let $R$ be an $r$-element random subset of $S$.
	If $\T(S)$ satisfies the bounded valence and $\max\{c(d+m)\log r +d+\max\{c, d\}, 2^{2^{1/c}\cdot\frac{e}{(d+m)}}\}\leq r\leq \frac{n}{(d+m)\cdot \log n}$,
	then the probability that there exists a configuration in $C(R)$ in conflict with at most $c\frac{n}{r}$ objects in $S$, but at least $c(d+m)\log r+d$ objects in $R$ is $O(r^{-m})$.
\end{theorem}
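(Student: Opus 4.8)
The plan is to combine Lemma~\ref{lem-high-level} with a union bound over all configurations that can possibly appear in $C(R)$, exploiting the bounded valence hypothesis to control the number of such configurations. First I would fix the "bad event" for a single configuration $\triangle$: that $\triangle\in C(R)$, that $w(\triangle)=|K(\triangle)|\le c\frac{n}{r}$, and that $|K(\triangle)\cap R|\ge c(d+m)\log r+d$. Whenever $\triangle\in C(R)$ we have in particular $D(\triangle)\subseteq R$ (this is part of bounded valence, and also follows from Condition~(I)), so $\triangle$ falls exactly in the regime covered by Lemma~\ref{lem-high-level}: the hypothesis $w\le c\frac{n}{r}$ is assumed, the range constraint on $r$ is exactly the one imposed in the theorem statement, and the conclusion gives that this bad event, for a fixed $\triangle$, has probability at most $r^{-(d+m)}$.

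Next I would count how many configurations $\triangle$ can witness the bad event. Any witness must satisfy $D(\triangle)\subseteq R\subseteq S$, hence $\triangle\in T(R)$; but more economically, a witness is determined by its defining set $D(\triangle)\subseteq S$ with $|D(\triangle)|\le d$, together with the bounded-valence bound $|T(S')|=O(|S'|^d)$ applied with $S'$ ranging over $d$-subsets of $S$. Actually the cleanest route is: since every witness lies in $T(R)$ and bounded valence gives $|T(R)|=O(r^d)$, but that would require conditioning carefully on $R$. To avoid the conditioning subtlety I would instead range over all possible defining sets: the number of distinct defining sets of size at most $d$ drawn from $S$ is $O(n^d)$, and each such set is contained in $R$ with probability $O((r/n)^d)$; summing the per-configuration bound $r^{-(d+m)}$ weighted by these inclusion probabilities, and using that a configuration contributes only when its defining set is actually in $R$, gives a total of $O(n^d)\cdot O((r/n)^d)\cdot r^{-(d+m)}=O(r^d)\cdot r^{-(d+m)}=O(r^{-m})$. (This mirrors the standard derivation of Lemma~\ref{lem-high-probability} from \cite[Theorem~5.1.2]{Mulmuley94}, which is precisely why the corollary "can be derived in a similar way.")

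The step I expect to be the main obstacle — or at least the one needing the most care — is making the union bound rigorous despite the fact that $R$ is random: one must not double-count, and one must correctly account for the dependence between "$D(\triangle)\subseteq R$" and "$|K(\triangle)\cap R|$ is large." The clean way is to write
\[
\Pr[\exists\,\triangle\text{ witnessing the bad event}]\le \sum_{D}\Pr[D\subseteq R]\cdot\max_{\triangle:\,D(\triangle)=D}\Pr[\,|K(\triangle)\cap R|\ge c(d+m)\log r+d \mid D\subseteq R,\ w(\triangle)\le c\tfrac{n}{r}\,],
\]
where the outer sum is over the $O(n^d)$ candidate defining sets $D$, and the inner conditional probability is exactly what Lemma~\ref{lem-high-level} bounds by $r^{-(d+m)}$ (note $\Pr[\triangle\in C(R)\mid D\subseteq R]\le 1$, so dropping the "$\triangle\in C(R)$" requirement only enlarges the probability). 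Bounded valence guarantees that for each fixed $D$ there is only a bounded number $O(1)$ of configurations with that defining set, so the $\max$ is harmless up to a constant. Multiplying through, $\sum_D\Pr[D\subseteq R]=O(n^d)\cdot O((r/n)^d)=O(r^d)$, and the product with $r^{-(d+m)}$ is $O(r^{-m})$, which is the claimed bound. The remaining bookkeeping — checking that the hypotheses of Lemma~\ref{lem-high-level} are inherited verbatim from the hypotheses of the theorem, and that restricting attention to configurations actually realized in $C(R)$ only shrinks the event — is routine.
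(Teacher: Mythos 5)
Your proposal is correct and follows essentially the same route as the paper: the paper also combines Lemma~\ref{lem-high-level} with a union bound whose multiplicity is controlled by bounded valence, via $|C(R)|\leq |T(R)|=O(r^d)$, giving $O(r^d)\cdot r^{-(d+m)}=O(r^{-m})$. Your explicit accounting over candidate defining sets, weighting by $\Pr[D\subseteq R]$, is just a more careful rendering of the same union bound and does not change the argument in any essential way.
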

\begin{proof}
	Since $\T(S)$ satisfies the bounded valence, $|C(R)|\leq |T(R)|=O(r^d)$. 
	By Lemma~\ref{lem-high-level} and the union bound,
	the probability is $O(r^d)\cdot r^{-(d+m)}=O(r^{-m})$. 
\end{proof}

\begin{remark}
	Corollary 4.3 by Clarkson~\cite{Clarkson87} can also lead to the same result, but he adopted the random sampling with replacement. 
	Since $r$ is $\Omega(\frac{n}{\log n})$ in our situation, his random sampling gets a multi-set with high probability, and thus his result could not be directly applied. (In his applications, either $r$ is far from $n$ or a multi-set is feasible.) 
\end{remark}

\section{Shallow Cutting}\label{sec-shallow-cutting}

We first formulate the distance functions as surfaces and introduce the vertical decomposition of surfaces.
Then, we design a \emph{$\frac{1}{r}$-shallow-cutting} for $n$ surfaces using vertical decompositions.\jr{not informative enough.}
Finally, we adopt our random sampling technique to prove the expected size of our $\frac{1}{r}$-shallow-cutting to be $O(r)$.

\subsection{Distance Functions and Vertical Decomposition}\label{sub-sc-distance-VD}

Let $S$ be a set of $n$ pairwise disjoint sites that are simply-shaped compact convex regions of constant description complexity in the plane (e.g., line segments, disks, squares) and let $\tau$ be a continuous distance function between two points in the plane.
Assume that $\tau$ and the sites in $S$ are defined by a constant number of polynomial equations and inequalities of constant maximum degree. 
For each site $q\in S$, define its distance function $f_q$ with respect to any point $(x,y)\in \R^2$ as $f_q(x,y)=\min\{\tau\big((x,y),p\big)\mid p\in q\}$, and let $F$ denote the set of distance functions $\{f_q\mid q\in S\}$.

The graph of each function in $F$ is a semialgebraic set, defined by a constant number of polynomial equations and inequalities of constant maximum degree. The lower envelope $\E_F$ of $F$ is the graph of the pointwise minimum of the functions in $F$; the upper envelope is defined symmetrically.
We assume that for any subset $R\subseteq F$, the lower envelope $\E_R$ has $O(|R|)$ faces, edges, and vertices.
This assumption holds for many distance functions, e.g., for point sites in any constant-size algebraic convex distance metric and additively weighted Euclidean distances, 
and for disjoint line segments, disks, and constant-size convex polygons in the $L_p$ norms or under the Hausdorff metric.

For conceptual simplicity, each function in $F$ is represented as an $xy$-monotone surface in $\R^3$.
A general position assumption is made on $F$: no more than three surfaces intersect at a common point, no more than two surfaces intersect in a one-dimensional curve, no pair of surfaces are tangent to each other, and if two surfaces intersect, their intersection are one-dimensional curves.
Moreover, $\bm{s}$ is defined as the maximum number of co-vertical pairs of points $q, q'$ with $q\in f\cap g$, $q'\in f'\cap g'$ over all quadruples $f,g,f',g'$ of \emph{distinct} surfaces in $F$, and $s$ is assumed to be a constant.
For a point $p\in\R^3$, the \emph{level} of $p$ with respect to $F$ is the number of surfaces in $F$ lying below $p$, and the $(\leq l)$-level of $F$ is the set of points in $\R^3$ whose level with respect to $F$ is at most $l$.

For a subset $R\subseteq F$, let $\A(R)$ be the \emph{arrangement} formed by the surfaces in $R$. 
For each cell $C$ in $\A(R)$, its boundary consists of a \emph{ceiling} and a \emph{floor}. 
The \emph{ceiling} is a part of the \emph{lower} envelope of surfaces in $R$ that lie \emph{above} $C$, and the \emph{floor} is a part of the \emph{upper} envelope of surfaces in $R$ that lie \emph{below} $C$.
The topmost (resp.\ bottommost) cell in $\A(R)$ does not have a ceiling (resp.\ a floor). 
If the level of $C$ with respect to $R$ is $t$, then the vertical line through a point in $C$ intersects the boundary of $C$ at its $(t+1)^\mathrm{st}$ and $t^\mathrm{th}$ lowest surfaces in $R$.

The \emph{vertical decomposition} $\VD(R)$ of $R$, proposed by Chazelle et~al.~\cite{ChazelleEGS91}, decomposes each cell $C$ of $\A(R)$ into \emph{pseudo-prisms} or shortly \emph{prisms}, a notion to be defined below; we also refer to \cite[Section~8.3]{SA95}.
First, we project the ceiling and the floor of $C$, namely their edges and vertices, onto the $xy$-plane, and overlap the two projections.  
Then, we build the so-called \emph{vertical trapezoidal decomposition}~\cite{deBergCKO08,Mulmuley94} for the overlap between the two projections by erecting a $y$-vertical segment from each vertex, from each intersection point between edges, and from each locally $x$-extreme point on the edges, which yields a collection of pseudo-trapezoids.  
Finally, we extend each pseudo-trapezoid $\triangle$ to a trapezoidal prism $\triangle\times \R$, and form a prism $\Diamond=(\triangle\times\R)\cap C$. 

Each prism has six faces, top, bottom, left, right, front, and back.
Its top (resp.\ bottom) face is a part of a surface in $F$.
Its left (resp.\ right) face is a part of a plane perpendicular to the $x$-axis. 
Its front (resp.\ back) face is a part of a vertical wall through a intersection curve between two surfaces in $F$. 
Its top and bottom faces are kind of \emph{pseudo-trapezoids} on their respective surfaces, so a prism is also the collection of points lying vertically between the two pseudo-trapezoids.

$\VD(R)$ contains $O(|R|^4)$ prisms and can be computed in $O(|R|^5\log|R|)$ time~\cite{ChazelleEGS91}. The prisms in the topmost and bottommost cells of $\A(R)$ are semi-unbounded. For our algorithmic aspect, we imagine a surface $f_\infty: z=\infty$, so that each prism in the topmost cell has a top face lying on $f_\infty$.
For each prism $\Diamond\in \VD(R)$, let $F_\Diamond$ denote the set of surfaces in $F$ intersecting $\Diamond$.

A prism is defined by at most 10 surfaces under the general position assumption. 
First, its top (resp.\ bottom) face belongs to a surface, and we call this surface top (resp.\ bottom).
Then, we look at the pseudo-trapezoid $\triangle$ that is the $xy$-projection of $\Diamond$.
A pseudo-trapezoid is defined by \emph{bisecting curves} in the plane, each of which is the $xy$-projections of an intersection curve between two surfaces. 
Since a bisecting curve defining $\triangle$ must be associated with the top surface or the bottom surface of $\Diamond$,
it is sufficient to bound the number of bisecting curves to define $\triangle$, and each of those bisecting curves counts one additional surface.
The upper (resp.\ lower) edge of $\triangle$ belongs to one bisecting curve. 
The left (resp.\ right) edge of $\triangle$ belongs to a vertical line passing through the left (resp.\ right)  endpoint of the upper edge, the left (resp.\ right) endpoint of the lower edge, or an $x$-extreme point of a bisecting curve.
Each of the first two cases results from one additional bisecting curve, namely each counts for one additional surface.
Although the last case may occur more than once, the same as Chazelle's algorithm~\cite{Chazelle91},
we can introduce zero-width trapezoids to solve such degenerate situation.

\subsection{Design of Shallow Cutting}\label{sub-sc-design}

A \emph{$\frac{1}{r}$-shallow-cutting} for $F$ is a set of \emph{disjoint} prisms satisfying the following three conditions:

\begin{enumerate}[label=(\alph*)]
	\item They cover the ($\leq\frac{n}{r}$)-level of $F$.
	\item Each of them is intersected by $O(\frac{n}{r})$ surfaces in $F$.
	\item They are downward semi-unbounded, i.e., no bottom face, so they do not vertically overlap .
\end{enumerate}

To design such a \emph{$\frac{1}{r}$-shallow-cutting}, we take an $r$-element \emph{random} subset $R$ of $F$ and adopt $R$ to generate prisms satisfying the three conditions.

For condition~(a), it is natural to consider the prisms in $\VD(R)$ that intersect the ($\leq\frac{n}{r}$)-level of $F$, but it is hard to compute those prisms exactly.
Thus, we instead select a super set $\AD(R)$ that consists of prisms in $\VD(R)$ lying fully above at most $\frac{n}{r}$ surfaces in $F$.

For condition~(b), if a prism $\Diamond\in\AD(R)$ intersects more than $\frac{n}{r}$ surfaces in $F$, we will refine it into smaller prisms, and select the ones lying fully above at most $\frac{n}{r}$ surfaces in $F$. 
This refinement is similar to a classical process proposed by Chazelle and Friedman~\cite{ChazelleF90}.
Let $t$ be $\lceil |F_\Diamond|/(\frac{n}{r})\rceil$, where
$F_\Diamond$ is the set of surfaces in $F$ intersecting $\Diamond$.
If $t>1$, we refine $\Diamond$ as follows:
\begin{enumerate}
	\item Take a random subset $F'$ of $F_\Diamond$ of size $O(t\log t)$, and construct $\VD(F')\cap \Diamond$ 
	by clipping each surface in $F'$ with $\Diamond$, building the vertical decomposition on the clipped surfaces plus 
	the top and bottom faces of $\Diamond$,
	and including the prisms lying inside $\Diamond$.
	\item If one prism in $\VD(F')\cap \Diamond$
	intersects more than $\frac{|F_\Diamond|}{t}$ surfaces in $F_\Diamond$, then repeat Step~1.
	\item For each prism $\Diamond'\in \VD(F')\cap \Diamond$, 
	if $\Diamond'$ lies fully above more than $\frac{n}{r}$ surfaces in $F$, 
	discard $\Diamond'$. 
\end{enumerate}
By defining a conflict between a surface and a prism as the surface intersects the prism,
Corollary~\ref{cor-high-probability} guarantees the existence of $F'$ that passes Step~2.
As stated in Section~\ref{sub-sc-distance-VD}, $\VD(F')\cap \Diamond$ has $O(t^4\log^4t)$ prisms.
For each prism in $\VD(F')\cap \Diamond$,
since $t=\lceil |F_\Diamond|/(\frac{n}{r}) \rceil$, it intersects at most $\frac{|F_\Diamond|}{t}\leq \frac{n}{r}$ surfaces in $F$,
and if it lies fully above more than $\frac{n}{r}$ surfaces in $F$, 
it will be discarded,
implying that 
each resulting prism intersects or lies fully above at most $2\frac{n}{r}$ surfaces in $F$.\jr{rephrasing}
Let $\RD(R)$ be the set of resulting prisms (including the unrefined prisms in $\AD(R)$).

For condition~(c),
we generate a set $\SC(R)$ of \emph{downward semi-unbounded} prisms from $\RD(R)$ in two steps. 
First, we build the upper envelope of the top faces of all prisms in $\RD(R)$.
Then, we decompose the region in $\R^3$ below the upper envelope into downward semi-unbounded prisms similarly to the decomposition of the bottommost cell in Section~\ref{sub-sc-distance-VD}, i.e., project the upper envelope onto th $xy$-plane, build the vertical trapezoidal decomposition for the projection, extend each trapezoid to a trapezoidal prism, and take the part of each prism below the upper envelope.
Since each prism in $\RD(R)$ intersects or lies fully above at most $2\frac{n}{r}$ surfaces in $F$, its top face intersects or lies fully above at most $2\frac{n}{r}$ surfaces in $F$,
and since the top face of each prism in $\SC(R)$
is a part of the top face of one prism in $\RD(R)$,
each prism in $\SC(R)$ intersects at most $2\frac{n}{r}$ surfaces in $F$.

\begin{remark}
	Actually, $\RD(R)$ is identical to Agarwal et~al.'s shallow cutting (\cite[Section~3]{AgarwalES99}),
	which satisfies the first two conditions, but not the third one.
	Their analysis for the expected size of $\RD(R)$ (\cite[Theorem~3.1]{AgarwalES99}) extends Matou\v{s}ek's analysis~\cite{Matousek92a} together with additional machinery for general distance functions,
	while our analysis (Theorem~\ref{thm-size-AD-RD}) makes use of our new random sampling technique (Theorem~\ref{thm-many-local-less-global}). 
	Especially, our description of generating $\RD(R)$ directly supports our analysis for the expected size of $\SC(R)$ (Theorem~\ref{thm-cutting-size}).
\end{remark}

\subsection{Structure Complexity}\label{sub-sc-size}

We will apply the random sampling techniques in Section~\ref{sec-random-sampling} to prove that the expected size of our shallow cutting in Section~\ref{sub-sc-design}, i.e., $E[|SC(R)|]$, is $O(r)$,
which confirms the existence of linear-size shallow cuttings for the $k$ nearest neighbors problem under general distance functions.
The high-level idea is that if a relevant prism (i.e., a prism in $\AD(R)$) lies fully above $\ell$ surfaces in $R$ (i.e., $\ell$ sample surfaces),
it contributes $O(1+\ell^4)$ to the value of $E[|SC(R)|]$, and the expected number of such relevant prisms is $O(\frac{r}{\ell!})$, implying a bound of $\sum_{\ell\geq 0}O(\frac{1+\ell^4}{\ell!}r)=O(r)$. 
Recall that the maximum number of surfaces to define a prism is 10 (Section~\ref{sub-sc-distance-VD}), and this number is represented by $d$ in the following proofs. 

As a warm-up, we first prove that both the expected sizes of $\AD(R)$ and $\RD(R)$ are $O(r)$, which we will apply to analyze the construction time in Section~\ref{sub-al-time}.

\begin{theorem}\label{thm-size-AD-RD}
If $117\leq r\leq \frac{n}{14\log n}$,  then $E[|\AD(R)|]=O(r)$ and $E[\RD(R)]=O(r)$.
\end{theorem}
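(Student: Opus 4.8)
The plan is to bound $E[|\AD(R)|]$ and $E[|\RD(R)|]$ separately, with the latter reducing to the former plus the cost of the refinement step. The central tool is Theorem~\ref{thm-many-local-less-global}: I set up a configuration space where the objects are the surfaces in $F$, a configuration $\Diamond$ is a prism in a vertical decomposition, the defining set $D(\Diamond)$ is the set of (at most $d=10$) surfaces that determine $\Diamond$, and the conflict set $K(\Diamond)$ is the set of surfaces lying \emph{fully below} $\Diamond$. With this choice, a prism in $\AD(R)$ is precisely a prism $\Diamond \in \VD(R) = C(R)$ with $|K(\Diamond)| \le \frac{n}{r}$, i.e.\ a configuration in $C_{\le n/r}(R)$. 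First I would verify that $C(R)$ (the prisms of $\VD(R)$) satisfies Conditions~(I) and (II) with respect to this ``lies fully below'' conflict relation: Condition~(I) is immediate since a prism of $\VD(R)$ is defined by surfaces in $R$ and its local conflict count is just the level of the prism with respect to $R$; Condition~(II) requires checking that clipping down to a subset $S''$ containing $D(\Diamond)$ leaves $\Diamond$ a prism of $\VD(S'')$ with the expected local conflict count — this is the monotonicity property of vertical decompositions and should follow as in Agarwal et~al.'s analysis.

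Granting that, Theorem~\ref{thm-many-local-less-global} gives $E[|C^t_{\le c n/r}(R)|] \le \sum_{l=0}^{d} \frac{e^{2c} c^{t-l}}{(t-l)!} E[|C^l(R_l)|]$, where $C^l(R_l)$ is the set of prisms of $\VD(R_l)$ lying fully above exactly $l$ surfaces of $R_l$ (equivalently, at level $l$). The key combinatorial input is a bound on $E[|C^l(R_l)|]$: the vertical decomposition of $|R_l| \le r$ surfaces has only $O(|R_l| \cdot (l+1)^{O(1)})$ prisms at level $l$ — this is the analogue for general surfaces of the Sharir--Smorodinsky--Tardos bound on the number of level-$l$ cells in an arrangement, and it is exactly the kind of ``shallow'' counting Agarwal et~al.\ use; here I expect a bound like $O(r\,(l+1)^{c'})$ for a small constant $c'$ depending on $s$. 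Plugging this in and summing over all relevant $t$ (from $0$ up to roughly $r$, but the factorial kills the tail), $E[|\AD(R)|] = \sum_{t \ge 0} E[|C^t_{\le n/r}(R)|] = \sum_{t\ge 0} \sum_{l=0}^{d} \frac{e^{2c} c^{t-l}}{(t-l)!} O(r (l+1)^{c'}) = O(r)$, since $\sum_{t} \frac{c^{t-l}}{(t-l)!}$ converges and $l$ ranges over the constant set $\{0,\dots,d\}$. I also need to handle the small-$t$ regime where Theorem~\ref{thm-many-local-less-global}'s hypothesis $t \ge d$ fails, but those $O(1)$ values of $t$ each contribute at most $E[|C^t(R)|] = O(r)$, which is fine. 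The numerical constraints $117 \le r \le \frac{n}{14\log n}$ come from chasing the hypotheses of Theorem~\ref{thm-many-local-less-global} (needing $2d \le r \le n/2$) and of Corollary~\ref{cor-high-probability} / Lemma~\ref{lem-high-level} (needing $r$ large enough in terms of $d$, and $r \le n/((d+m)\log n)$ with $m=O(1)$) with $d=10$.

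For $E[|\RD(R)|] = O(r)$, I would argue that $\RD(R)$ is obtained from $\AD(R)$ by replacing each ``heavy'' prism $\Diamond$ (one intersected by more than $\frac{n}{r}$ surfaces) with the prisms of $\VD(F') \cap \Diamond$ that lie fully above at most $\frac{n}{r}$ surfaces of $F$, where $|F'| = O(t\log t)$ and $t = \lceil |F_\Diamond|/(n/r)\rceil$. Each such replacement produces $O(t^4\log^4 t)$ prisms, so $E[|\RD(R)|] \le E[|\AD(R)|] + O\!\big(\sum_{\Diamond \in \AD(R)} t(\Diamond)^4 \log^4 t(\Diamond)\big)$. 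To bound the sum, I classify prisms of $\AD(R)$ by their intersection count: a standard exponential-decay random sampling bound (Lemma~\ref{lem-agarwal1}, applied with the ``intersects'' conflict relation on $C^0$) says $E[|C^0_{\ge t\, n/r}(R)|] = O(2^{-t}) E[|C^0(R')|]$ with $R'$ of size $\lfloor r/t\rfloor$, and $E[|C^0(R')|] = O((r/t)^4)$ by the $O(|R'|^4)$ bound on vertical decomposition size. Hence the number of prisms of $\AD(R)$ requiring a refinement into $O(t^4\log^4 t)$ pieces is $O(2^{-t} (r/t)^4)$, and $\sum_{t \ge 1} 2^{-t} (r/t)^4 \cdot t^4 \log^4 t = r\sum_{t\ge 1} 2^{-t}\log^4 t = O(r)$. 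Combining, $E[|\RD(R)|] = O(r)$.

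The main obstacle I anticipate is not the probabilistic machinery — which Theorem~\ref{thm-many-local-less-global} packages cleanly — but the geometric bookkeeping needed to (a) confirm Conditions~(I) and (II) rigorously for the ``lies fully below'' conflict relation on prisms of a vertical decomposition (the defining set must be identified carefully, including the degenerate zero-width trapezoid cases flagged in Section~\ref{sub-sc-distance-VD}, so that $d(\Diamond) \le 10$ uniformly and the monotonicity of Condition~(II) genuinely holds), and (b) establish the shallow-cell bound $E[|C^l(R_l)|] = O(r\,(l+1)^{O(1)})$ for general surfaces rather than planes, which is where the Davenport--Schinzel parameter $s$ and Agarwal et~al.'s envelope machinery enter. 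Everything else is summation of a factorially- or exponentially-decaying series against a polynomial, which converges to $O(r)$.
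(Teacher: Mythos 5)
Your setup for $E[|\AD(R)|]$ (``lies fully below'' conflict relation, $C(R)=\VD(R)$, $d=10$, Theorem~\ref{thm-many-local-less-global} plus the $O(r\,(l+1)^{O(1)})$ shallow-level bound, which the paper takes from \cite[Lemma~5.1]{KaplanMRSS17}) is exactly the paper's route, but you have a real gap at the top of the level range. Theorem~\ref{thm-many-local-less-global} only applies for $d\leq t\leq r-d$, and for the remaining values $t\in(r-d,r]$ the phrase ``the factorial kills the tail'' has no content: the theorem gives you nothing there, and the trivial fallback $E[|C^t_{\leq n/r}(R)|]\leq E[|C^t(R)|]$ is useless for $t$ near $r$, where the level-$t$ complexity is not $O(r)$. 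The paper closes this by a separate truncation step: by Theorem~\ref{thm-log-level} (with $c=1$, $m=4$, using the bounded-valence property of prisms), with probability $1-O(r^{-4})$ no prism of $\AD(R)$ lies above more than $(d+4)\log r+d$ surfaces of $R$, and the exceptional event contributes only $O(r^{-4})\cdot O(r^4)=O(1)$ since $|\VD(R)|=O(r^4)$. This is precisely where the hypotheses $r\geq 117$ and $r\leq \frac{n}{14\log n}$ come from; your proposal mentions Lemma~\ref{lem-high-level} when explaining the constants but never actually performs the truncation, so as written the sum over $t$ is not justified.

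For $E[|\RD(R)|]$ your structure (classify prisms of $\AD(R)$ by intersection count, apply Lemma~\ref{lem-agarwal1} with the ``intersects'' relation, multiply by the $O(t^4\log^4 t)$ refinement count) again matches the paper, but the computation is wrong: with your bound $E[|C^0(R')|]=O((r/t)^4)$ the sum is $\sum_{t\geq 1}2^{-t}(r/t)^4\,t^4\log^4 t=r^4\sum_{t\geq 1}2^{-t}\log^4 t=O(r^4)$, not $O(r)$ --- note $(r/t)^4\cdot t^4=r^4$, so your simplification to $r\sum_t 2^{-t}\log^4 t$ is an algebra slip. The correct move, and the paper's, is to bootstrap the first half of the theorem: take $C^0(\cdot)$ to be the $\AD$-type set itself (prisms lying above at most $\frac{n}{r}$ surfaces of $F$), so that Lemma~\ref{lem-agarwal1} gives $E[|C^0_{\geq t\frac{n}{r}}(R)|]=O(2^{-t})\,E[|\AD(R')|]=O(2^{-t}\cdot \frac{r}{t})$, and then $\sum_{t\geq 0}O\big(2^{-t}\cdot r\cdot (t+1)^5\big)=O(r)$. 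With that substitution, and with the truncation step above added to the $\AD$ part, your argument becomes essentially the paper's proof.
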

\begin{proof}
	To analyze $E[|\AD(R)|]$, following the notations in Section~\ref{sec-random-sampling},
	an object is a surface, a configuration is a prism,
	and a surface $f$ is said to conflict with a prism $\Diamond$ if $f$ lies fully below $\Diamond$. 
	Let $C(R)$ be $\VD(R)$, so that $C_{\leq \frac{n}{r}}(R)$ is exactly $\AD(R)$, $C^t(R)$ is the set of prisms in $\VD(R)$ lying fully above $t$ surfaces in $R$, 
	and $C^t_{\leq \frac{n}{r}}(R)$ is the set of prisms in $\AD(R)$ lying above $t$ surfaces in $R$.
	Note that $d$, the maximum number of surfaces to define a prism, is 10.
	
	We first use Theorem~\ref{thm-log-level} to show that it is sufficient to consider the case in which all prisms in $\AD(R)$ have a level with respect to $R$ of at most $(d+4)\log r+d$, and then adopt Theorem~\ref{thm-many-local-less-global} to derive $E[|\AD(R)|]$.
	
	By setting $c=1$ and $m=4$,
	Theorem~\ref{thm-log-level} implies that the probability that there exists a prism in $\VD(R)$ lying fully above at most $\frac{n}{r}$ surfaces in $F$, but fully above at least $(d+4)\log r+d$ surfaces in $R$ is $O(r^{-4})$.
	In other words, the probability that $\AD(R)$ contains a prism whose level with respect to $R$ is at least  $(d+4)\log r+d$ is only $O(r^{-4})$.
	Since $|\AD(R)|\leq|\VD(R)|=O(r^4)$, the exception contributes only $O(r^{-4})\cdot O(r^4) = O(1)$ to $E[|\AD(R)|]$.

	Since $r\geq 117$ and $d=10$, we have $(d+4)\log r+d\leq r-d$ and thus have $[d,  (d+4)\log r+d]\subseteq [d, r-d]$.
	Therefore, Theorem~\ref{thm-many-local-less-global} implies that
	for $t\in [d,  (d+4)\log r+d]$,
	\begin{ceqn}
		\begin{align}\label{eq-cutting-inequality}
				E[|C^t_{\leq \frac{n}{r}}(R)|]\leq \sum_{l=0}^d \frac{e^2}{(t-l)!}\cdot E[|C^l(R_l)|],
		\end{align}
	\end{ceqn}
	where $R_l$ is an $(r-t+l)$-element random subset of $F$.
	By \cite[Lemma~5.1]{KaplanMRSS17}, we have $|C^t(R)|=O(r\cdot (t+1)\cdot \lambda_{s+2}(t+1))=O(r\cdot (t+1)^3)$,
	and by Inequality~(\ref{eq-cutting-inequality}), we have
	\begin{equation*} 
	\begin{split}
	E[|\AD(R)|] & =\sum_{t=0}^{d} \underbrace{E[|C^t_{\leq \frac{n}{r}}(R)|]}_{\leq E[|C^{t}(R)|]=O(r\cdot (t+1)^3)=O(r)}+ \sum_{t=d+1}^{(d+4)\log r+d} E[|C^t_{\leq \frac{n}{r}}(R)|] \\
	& \overset{(\ref{eq-cutting-inequality})}{\leq} O(r) + \sum_{t=d+1}^{(d+4)\log r+d}\sum_{l=0}^d O\big(\underbrace{\frac{e^2}{(t-l)!}}_{\leq \frac{e^2}{(t-d)!}}\big)\cdot \underbrace{E[|C^l(R_l)|]}_{O\big(r\cdot (l+1)^3\big)=O(r)}\\
	& = O(r)\cdot \underbrace{d\cdot e^2}_{O(1)}\cdot  \underbrace{\sum_{t=d+1}^{(d+4)\log r+d}\frac{1}{(t-d)!}}_{O(1)}=O(r)
	\end{split}
	\end{equation*}

	To analyze $E[|\RD(R)|]$, re-define a conflict relation as that a surface intersects a prism, and let  $C^0(R)$ be $\AD(R)$, so that $C^0_{\geq t\frac{n}{r}}(R)$ is the set of prisms in $\AD(R)$ that intersect at least $t\frac{n}{r}$ surfaces in $F$. 
	According to the generation of $\RD(R)$ in Section~\ref{sub-sc-design}, 
	a prism in $C^0_{\geq t\frac{n}{r}}(R) \setminus C^0_{\geq (t+1)\frac{n}{r}}(R)$ will be refined into $O\big((t+1)^4\log^4 (t+1)\big)=O\big((t+1)^5\big)$ ones. Moreover,
	by Lemma~\ref{lem-agarwal1}, 
	for $t\geq 1$, 
	$E[|C^0_{\geq t\frac{n}{r}}(R)|]=O(2^{-t})\cdot E[|C^0(R')|]=O(2^{-t})\cdot E[|\AD(R')|]=O(2^{-t}\cdot \frac{r}{t})=O(2^{-t}\cdot r)$, where $R'$ is a $\lfloor \frac{r}{t}\rfloor$-element random subset of $R$, concluding that
	\begin{equation*} 
	\begin{split}
	E[|\RD(R)|] & = E[|C^0(R)|+\sum_{t\geq 1}E[|C^0_{\geq t\frac{n}{r}}(R) \setminus C^0_{\geq (t+1)\frac{n}{r}}(R)|]\cdot O((t+1)^5) \\
	& \leq \sum_{t\geq 0} E[|C^0_{\geq t\frac{n}{r}}(R)|]\cdot O((t+1)^5)=\sum_{t\geq 0} O(\frac{(t+1)^5}{ 2^t }\cdot r)=O(r).
	\end{split}
	\end{equation*}
	
\end{proof}

To derive $E[|\SC(R)|]$, we first show that the expected number of prisms in $\AD(R)$ that lie fully above $\ell$ surfaces in $R$ and intersects at least $t\frac{n}{r}$ surfaces in $F$ is $O(\frac{r}{\ell!2^t})$, i.e., decrease factorially in $\ell$ and exponentially in $t$.
Formally,
let $\AD^\ell(R)$ be the set of prisms in $\AD(R)$ lying fully above $\ell$ surfaces in $R$, let $\AD_{\geq t\frac{n}{r}}$ be the set of prisms in $\AD(R)$ intersecting at least $t\frac{n}{r}$ in $F$, and let $\AD_{\geq t\frac{n}{r}}^\ell(R)$ be $\AD^\ell(R)\cap \AD_{\geq t\frac{n}{r}}(R)$.

\begin{lemma}\label{lem-order-intersection}
If $3508\leq r\leq \frac{n}{27\log n}$, $0\leq\ell\leq 27\log r+10$ and $0\leq t\leq 27\log r$, then
\begin{ceqn}
	\begin{align*}
E[|\AD_{\geq t\frac{n}{r}}^\ell(R)|]=O(\frac{r}{\ell!2^t}).
\end{align*}
\end{ceqn}
This also implies that $E[|\AD^\ell(R)|]=O(\frac{r}{\ell!})$ and $E[|\AD_{\geq t\frac{n}{r}}(R)|]=O(\frac{r}{2^t})$.
\end{lemma}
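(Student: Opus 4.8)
The plan is to chain the two random-sampling tools already developed: Theorem~\ref{thm-many-local-less-global} to extract the factorial decay in $\ell$, and the classical exponential-decay bound Lemma~\ref{lem-agarwal1} to extract the $2^{-t}$, the latter invoked inside the former. I set up the configuration space exactly as in the proof of Theorem~\ref{thm-size-AD-RD}: objects are the surfaces of $F$, a configuration is a prism, $D(\Diamond)$ is the set of $\le d=10$ surfaces defining $\Diamond$, and there are two conflict notions for a prism $\Diamond$, namely $K_1(\Diamond)$, the surfaces lying fully below $\Diamond$, and $K_2(\Diamond)$, the surfaces crossing $\Diamond$. These sets are disjoint, and a prism of $\VD(R)$ is precisely a $\Diamond$ with $D(\Diamond)\subseteq R$ and $K_2(\Diamond)\cap R=\emptyset$, while $K_1(\Diamond)\cap R$ may be nonempty. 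In this language, $\AD^\ell_{\geq t\frac{n}{r}}(R)$ is the set of prisms $\Diamond\in\VD(R)$ with $|K_1(\Diamond)\cap R|=\ell$, $|K_1(\Diamond)|\leq\frac{n}{r}$, and $|K_2(\Diamond)|\geq t\frac{n}{r}$. For $\ell<d$ (and for $t$ below any fixed constant) the lemma follows at once from monotonicity in $t$ and the bound $|\AD^\ell_{\geq t\frac{n}{r}}(R)|\leq|\AD_{\geq t\frac{n}{r}}(R)|$ together with the ``also implies'' statement $E[|\AD_{\geq t\frac{n}{r}}(R)|]=O(r/2^t)$ proved below, so from now on I assume $\ell\geq d$ and $t$ at least a suitable constant.

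Step one: apply Theorem~\ref{thm-many-local-less-global} with the $K_1$-conflict, $c=1$, and $C(R)=\VD(R)$, so that $C^\ell(R)$ is the set of level-$\ell$ prisms of $\VD(R)$ and $\T_m$, for $m=\frac{n}{r}$, is the set of prisms lying fully above at most $\frac{n}{r}$ surfaces. Conditions~$(I)$ and~$(II)$ hold: $(I)$ is immediate, and $(II)$ only uses that a prism of $\VD(S')$ is still a prism of $\VD(S'')$ whenever $D(\Diamond)\subseteq S''\subseteq S'$. The point to notice is that the per-configuration inequality in the proof of Theorem~\ref{thm-many-local-less-global} is unchanged when one conjoins the \emph{deterministic} predicate ``$|K_2(\Diamond)|\geq t\frac{n}{r}$''; hence restricting the sum over $\T_m$ to configurations having that property is free. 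Writing $N_{\ell'}$ for the expected number of level-$\ell'$ prisms of $\VD(R_{\ell'})$ that are crossed by at least $t\frac{n}{r}$ surfaces of $F$, where $R_{\ell'}$ is a random $(r-\ell+\ell')$-element subset of $R$, this yields, for $\ell$ in the stated range,
\[ E\big[\,|\AD^\ell_{\geq t\frac{n}{r}}(R)|\,\big]\;\leq\;\sum_{\ell'=0}^{d}\frac{e^{2}}{(\ell-\ell')!}\;N_{\ell'}. \]

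Step two: bound each $N_{\ell'}$ by $O(2^{-t}r)$ with Lemma~\ref{lem-agarwal1}, now applied with the $K_2$-conflict. Since $\ell'\leq d=O(1)$, the zero-$K_2$-local-conflict family satisfying Conditions~(i)--(ii) that I use is the set of prisms of the vertical decomposition whose $K_1$-level is \emph{at most} $\ell'$; relaxing ``exactly $\ell'$'' to ``at most $\ell'$'' is exactly what makes the family monotone under deleting sample surfaces, hence admissible. For any sample $R'$, \cite[Lemma~5.1]{KaplanMRSS17} bounds this family's size by $\sum_{j=0}^{\ell'}O\big(|R'|(j+1)\lambda_{s+2}(j+1)\big)=O(|R'|)$. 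Writing the threshold $t\frac{n}{r}$ for the sample $R_{\ell'}$ of size $r_{\ell'}=r-\ell+\ell'$ as $t'\cdot\frac{n}{r_{\ell'}}$ with $t'=t\,r_{\ell'}/r$, Lemma~\ref{lem-agarwal1} gives $N_{\ell'}=O(2^{-t'})\cdot O(r_{\ell'}/t')=O(2^{-t'}r)$ (for smaller $t$ one just uses $N_{\ell'}\leq E[\#\{\text{level-}\ell'\text{ prisms of }\VD(R_{\ell'})\}]=O(r)$). The numerical hypotheses $r\geq 3508$, $\ell\leq 27\log r+10$, and $t\leq 27\log r$ are precisely what make this step valid: they force $r_{\ell'}\geq\frac{9}{10}r$, so that $t-t'=t(\ell-\ell')/r=O((\log r)^2/r)=o(1)$ and $2^{-t'}=O(2^{-t})$, and they keep $t'\leq r_{\ell'}/d$ so that the subsample of Lemma~\ref{lem-agarwal1} has at least $d$ elements. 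Plugging $N_{\ell'}=O(2^{-t}r)$ into the display and summing the $O(1)$-many terms gives $E[\,|\AD^\ell_{\geq t\frac{n}{r}}(R)|\,]=O(2^{-t}r)\cdot\sum_{\ell'=0}^{d}\frac{e^2}{(\ell-\ell')!}=O\!\big(\tfrac{r}{\ell!\,2^t}\big)$, the sum being dominated by its $\ell'=d$ term (and if a defining surface of a prism is never strictly below it, so $D(\Diamond)\cap K_1(\Diamond)=\emptyset$, only the $\ell'=0$ term survives and the constant is clean). The two ``also implies'' statements are then the special case $t=0$ for $E[|\AD^\ell(R)|]$, and, for $E[|\AD_{\geq t\frac{n}{r}}(R)|]$, either summation over $\ell\geq 0$ or, more directly, applying Lemma~\ref{lem-agarwal1} to $\AD(R)$ itself exactly as in the $\RD$-part of the proof of Theorem~\ref{thm-size-AD-RD}.

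The main obstacle is clerical rather than conceptual: correctly chaining the two sampling lemmas \emph{across two different conflict relations}. One must (a) verify that the per-configuration inequality of Theorem~\ref{thm-many-local-less-global} survives conjoining the extra deterministic constraint $|K_2(\Diamond)|\geq t\frac{n}{r}$, so that it passes unharmed to the intermediate samples $R_{\ell'}$, and (b) be careful that the family handed to Lemma~\ref{lem-agarwal1} is ``$K_1$-level at most $\ell'$'', not ``exactly $\ell'$'', since only the former obeys the monotonicity demanded by Conditions~(i)--(ii). The remaining fiddly point is the mismatch between the cutting threshold $\frac{n}{r}$ and the ``natural'' threshold $\frac{n}{r_{\ell'}}$ of the shrunk intermediate sample; controlling the induced loss $2^{\,t-t'}$ is exactly why $\ell$ and $t$ are capped at $O(\log r)$ and why $r$ must exceed a modest absolute constant.
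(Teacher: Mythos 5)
Your proposal is correct and follows essentially the same route as the paper: it chains Theorem~\ref{thm-many-local-less-global} (with the ``lies fully below'' conflict and the intersection-count condition baked in as a deterministic filter on the configuration set) with Lemma~\ref{lem-agarwal1} applied to the monotone family of prisms of level at most a constant, whose size is $O(|R'|)$ by Kaplan et~al.'s Lemma~5.1, and handles small $\ell$ by the $O(r/2^t)$ bound alone. The only difference is presentational (you apply the two lemmas in the opposite order and spell out the $t\frac{n}{r}$ versus $t'\frac{n}{r_{\ell'}}$ renormalization, which the paper's proof leaves implicit).
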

\begin{proof}
	Let $\VD^{\leq 10}(R)$ be the set of prisms in $\VD(R)$ that lie fully above at most $10$ surfaces in $R$,
	and  $\VD^{\leq 10}_{\geq t\frac{n}{r}}(R)$ be the set of prisms in $\VD^{\leq 10}(R)$ intersect at least $t\frac{n}{r}$ surfaces in $F$.
	We first adopt Lemma~\ref{lem-agarwal1} to prove $E[|\VD^{\leq 10}_{\geq t\frac{n}{r}}(R)|]=O(\frac{r}{2^t})$ and then use Theorem~\ref{thm-many-local-less-global} to show $E[|\AD_{\geq t\frac{n}{r}}^\ell(R)|]=O(\frac{r}{\ell!2^t})$.
	
	By \cite[Lemma~5.1]{KaplanMRSS17}, we have $|\VD^{\leq 10}(S')|=O(|S'|)$ for any subset $S'\subseteq S$. If $t=0$, $|\VD^{\leq 10}_{\geq t\frac{n}{r}}(R)|=|\VD^{\leq 10}(R)|=O(r)=O(\frac{r}{2^t})$. 
	Define a conflict between a surface and a prism as the surface intersects the prism, and let $C^0(R)$ be $\VD^{\leq 10}(R)$, so that $C^0_{\geq t\frac{n}{r}}(R)=\VD^{\leq 10}_{\geq t\frac{n}{r}}(R)$.
	For $t\geq 1$, Lemma~\ref{lem-agarwal1} implies that 
	$E[|C^0_{\geq t\frac{n}{r}}(R)|]=O(2^{-t}\cdot E[|C^0(R')|])=O(2^{-t}\cdot |\VD^{\leq 10}(R')|)=O(2^{-t}\cdot \frac{r}{t})=O(\frac{r}{2^t})$,
	where $R'$ is an $\lceil\frac{r}{t}\rceil$-element random subset of $R$.
	
	If $\ell\leq 10$, $E[|\AD^\ell_{\geq t\frac{n}{r}}(R)|]\leq E[|\AD^{\leq 10}_{\geq t\frac{n}{r}}(R)|]\leq E[|\VD^{\leq 10}_{\geq t\frac{n}{r}}(R)|]=O(\frac{r}{2^t})=O\big(\frac{r}{\ell!2^t}\big)$. 
	Redefine a conflict between a surface and a prism as the surface lies fully below the prism, and let $C(R)$ be the set of prisms in $\VD(R)$ that intersect at least $t\frac{n}{r}$ surfaces in $F$, i.e. $C(R)=\VD_{\geq t\frac{n}{r}}(R)$, so that $C^\ell(R)=\VD^\ell_{\geq t\frac{n}{r}}(R)$, $C_{\leq \frac{n}{r}}(R)=\AD_{\geq t\frac{n}{r}}(R)$ and $C_{\leq \frac{n}{r}}^\ell(R)=\AD^\ell_{\geq t\frac{n}{r}}(R)$.\jr{are all three necessary?}
	For $\ell>10$, Theorem~\ref{thm-many-local-less-global} implies
	$E[|C_{\leq \frac{n}{r}}^\ell(R)|]= O\big(\frac{1}{\ell!}\cdot \sum_{i=0}^{10} E[|C^i(R_i)|]\big)$,
	where $R_i$ is an $(r-\ell+i)$-element random subset of $R$.
	Since $E[|C^i(R_i)|]=E[|\VD^i_{\geq t\frac{n}{r}}(R_i)|]\leq E[|\VD^{\leq 10}_{\geq t\frac{n}{r}}(R_i)|]=O(\frac{r-\ell+i}{2^t})$,
	\[E[|\AD_{\geq t\frac{n}{r}}^\ell(R)|]=E[|C_{\leq \frac{n}{r}}^\ell(R)|]=O\big(\frac{1}{\ell!}\cdot \sum_{i=0}^{10} E[|C^i(R_i)|]\big)=O\big(\frac{1}{\ell!}\cdot \sum_{i=0}^{10} \frac{r-\ell+i}{2^t}\big)=O(\frac{r}{\ell!2^t}).\]
	
	The lower bound for the value of $r$ is due to the requirement of Lemma~\ref{lem-agarwal1}. Since $t$ must be at most $\frac{|R_i|}{d}$, $r$ needs to satisfies $27\log r \leq \frac{r-(27\log r+10)}{10}$. The upper bound for the value of $r$ will be applied in the proof of Theorem~\ref{thm-cutting-size} to show that
	with probability at least $1-\frac{1}{r^{17}}$, $\ell\leq 27\log r+10$ and $t\leq 27\log r$.
	Therefore, the exception contributes only $O(1)$ to both $E[|\AD^\ell(R)|]$ and $E[|\AD_{\geq t\frac{n}{r}}(R)|]$, and we can conclude that \[E[|\AD^\ell(R)|]=\sum_{t \geq 0}O(\frac{r}{\ell!2^t})=O(\frac{r}{\ell!})\mbox{ and }E[|\AD_{\geq t\frac{n}{r}}(R)|]=\sum_{\ell\geq 0}O(\frac{r}{\ell!2^t})=O(\frac{r}{2^t}).\]	
\end{proof}

Finally, we adopt Lemma~\ref{lem-order-intersection} to bound $E[|\SC(R)|]$ as follows.

\begin{theorem}\label{thm-cutting-size}
If $3508\leq r\leq \frac{n}{27\log n}$,  then
$E[|\SC(R)|]=O(r)$.
\end{theorem}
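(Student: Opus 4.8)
The plan is to bound $|\SC(R)|$, up to a constant factor, by the combinatorial complexity of the upper envelope $\mathcal{U}$ of the top faces of the prisms in $\RD(R)$, since the vertical trapezoidal decomposition erected on $\mathcal{U}$ in Section~\ref{sub-sc-design} only inflates this complexity by a constant. First I would discard the unlikely bad configurations: applying Theorem~\ref{thm-log-level} (equivalently Lemma~\ref{lem-high-level} together with the union bound over the prisms of $\VD(R)$), with the constants tuned so that the failure probability is $r^{-17}$, one gets that for $3508\le r\le\frac{n}{27\log n}$, with probability at least $1-r^{-17}$ every prism of $\AD(R)$ lies fully above at most $27\log r+10$ surfaces of $R$ and is met by at most $27\frac{n}{r}\log r$ surfaces of $F$. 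Since $|\SC(R)|$ is unconditionally at most a fixed polynomial in $r$ (the upper envelope of the polynomially many constant-complexity top faces of $\RD(R)$ has polynomial complexity), the complementary event contributes only $O(1)$ to $E[|\SC(R)|]$, and it remains to bound $E[|\mathcal{U}|]$ under this good event; the same crude reasoning, as in Theorem~\ref{thm-size-AD-RD}, also absorbs the $O(1)$ prisms at levels below $d=10$.

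The heart of the argument is a charging scheme for the features of $\mathcal{U}$. Because the prisms of $\AD(R)$ before refinement arise from $\VD(R)$, the top face of such a prism $\Diamond$ lies on its $(\ell+1)$-st lowest surface in $R\cup\{f_\infty\}$, where $\ell$ is the level of $\Diamond$ with respect to $R$. I would charge a vertex, edge or face of $\mathcal{U}$ to every prism $\Diamond\in\AD(R)$ such that a top face of $\Diamond$, or of one of its refinement children, bounds $\mathcal{U}$ at that feature; each feature then receives at least one and at most a constant number of charges. The key point is that a feature charged to $\Diamond$ records a coincidence among top faces that reach the height of $\mathcal{U}$ over the $xy$-column of $\Diamond$: these compete with $\Diamond$'s (child's) top face from at or below, so — up to the refinement — they are governed by the at most $\ell$ surfaces of $R$ lying fully below $\Diamond$, and the coincidences charged to $\Diamond$ are captured by the vertical decomposition of $\le\ell$ surfaces inside one cell, of size $O(1+\ell^4)$. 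Since refining a prism met by between $t\frac{n}{r}$ and $(t+1)\frac{n}{r}$ surfaces produces $O\big((t+1)^4\log^4(t+1)\big)=O\big((t+1)^5\big)$ children, I expect to conclude that a prism of $\AD(R)$ lying fully above $\ell$ surfaces of $R$ and met by between $t\frac{n}{r}$ and $(t+1)\frac{n}{r}$ surfaces of $F$ absorbs $O\big((1+\ell^4)(t+1)^5\big)$ charges. Verifying this bound rigorously — in particular controlling the overlaps among the top faces of distinct refinement subtrees and between refined and unrefined prisms, so that no vertical column is overcounted — is where I expect the real work to lie.

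Granting the charging bound, the conclusion is a double sum controlled by Lemma~\ref{lem-order-intersection}. Writing $\AD^{\ell}_{\geq t\frac{n}{r}}(R)$ as in that lemma,
\[
E[|\SC(R)|]\;=\;O(1)+\sum_{\ell\ge 0}\sum_{t\ge 0} E\!\left[\big|\AD^{\ell}_{\geq t\frac{n}{r}}(R)\setminus \AD^{\ell}_{\geq (t+1)\frac{n}{r}}(R)\big|\right]\cdot O\big((1+\ell^4)(t+1)^5\big).
\]
By Lemma~\ref{lem-order-intersection}, over the admissible ranges $0\le\ell\le 27\log r+10$ and $0\le t\le 27\log r$ each expectation in the sum is $O\big(\tfrac{r}{\ell!\,2^{t}}\big)$, whence
\[
E[|\SC(R)|]\;=\;O(1)+O(r)\cdot\Big(\sum_{\ell\ge 0}\tfrac{1+\ell^4}{\ell!}\Big)\Big(\sum_{t\ge 0}\tfrac{(t+1)^5}{2^{t}}\Big)\;=\;O(r),
\]
as both series converge. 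The constraints $3508\le r\le\frac{n}{27\log n}$ are exactly those inherited from Lemma~\ref{lem-order-intersection} and from the high-probability bound used to dispose of the bad event.
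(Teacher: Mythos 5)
Your overall architecture matches the paper's: reduce $|\SC(R)|$ to the complexity of the upper envelope of the top faces of $\RD(R)$, dispose of the bad event (levels above $27\log r+10$ or intersections above $27\frac{n}{r}\log r$) via Theorem~\ref{thm-log-level} and Lemma~\ref{lem-high-probability} with failure probability $O(r^{-17})$ against a polynomial worst case, bound the per-prism contribution as a function of $\ell$ and $t$, and sum against Lemma~\ref{lem-order-intersection}. But the per-prism contribution bound, which you yourself flag as ``where the real work lies,'' is exactly the core of the proof, and the heuristic you offer for it does not hold as stated. The features of the envelope over the $xy$-column of a prism $\Diamond\in\AD(R)$ are \emph{not} governed by the at most $\ell$ surfaces of $R$ lying fully below $\Diamond$: the competing top faces come from the refinement children of the prisms of $\AD(R)$ lying vertically below $\Diamond$, and those children's top faces lie on surfaces of the subsamples $F'\subseteq F_{\Diamond'}$ chosen during refinement, not on surfaces of $R$. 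What the $\ell$ surfaces of $R$ control is only the number of \emph{unrefined} prisms $Q$ below $\Diamond$ (the paper shows $|Q|\leq|\VD(\tilde R_\Diamond)|=O(\ell^4)$, since below $\Diamond$ each such prism coincides with a prism of the vertical decomposition of the $\ell$ surfaces of $R$ under $\Diamond$). The number of their refinement children $|Q^*|$ is a random quantity that can be polylogarithmically large per prism in the worst case, so your claimed bound $O\big((1+\ell^4)(t+1)^5\big)$ is unsubstantiated; the paper's actual (and larger) bound is $O\big((t+1)^{10}(1+\ell^4)\big)$.

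The missing idea is a \emph{conditional} resampling argument: conditioning on $\Diamond$ appearing in $\AD(R)$ with exactly $\ell$ surfaces of $R$ below it and none of $K(\Diamond)$ in $R$, the remainder of $R$ is a uniform sample from $F'=F\setminus\big(D(\Diamond)\cup K(\Diamond)\cup L(\Diamond)\big)$, and Lemma~\ref{lem-agarwal1} applied in this conditioned space shows that the number of prisms of $Q$ intersecting at least $h\cdot\frac{n}{r}$ surfaces decays like $2^{-h}$; combining with the refinement factor $O\big((h+t+1)^5\big)$ and $|Q|=O(\ell^4)$ gives $E[|Q^*|]=O\big((t+1)^5\ell^4\big)$, whence the contribution $O\big((t+1)^{10}+(t+1)^5|Q^*|\big)=O\big((t+1)^{10}(1+\ell^4)\big)$. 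This conditioning step is also what licenses the multiplicative decoupling in your final double sum, i.e., multiplying a per-prism contribution by $E[|\AD^\ell_{\geq t\frac{n}{r}}(R)|]$ from Lemma~\ref{lem-order-intersection}; without it the contribution of a prism and the event that the prism occurs are correlated through the same sample $R$. Note also that the paper charges vertical \emph{overlaps between pairs of prisms} of $\RD(R)$ (each overlap contributing $O(s)=O(1)$ to the envelope), charging the overlap to the higher prism and splitting into children of the same parent versus children of parents below; this bookkeeping is what makes the overcounting concern you raise tractable, and it is precisely why the exponent $(t+1)^{10}$ (all pairs among $O((t+1)^5)$ siblings) appears. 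Since the series $\sum_t (t+1)^{10}2^{-t}$ still converges, the final $O(r)$ bound is unaffected, but without the conditional application of Lemma~\ref{lem-agarwal1} your proof has a genuine gap.
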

\begin{proof}
	We mainly claim that for a prism in $\AD(R)$ lying fully above $\ell$ surfaces in $R$ and intersecting at least $t\frac{n}{r}$, but at most $(t+1)\frac{n}{r}$ surfaces in $F$,
	it contributes $O\big((t+1)^{10}(1+\ell^4)\big)$ to the value of $E[|\SC(R)|]$.
	Since the expected number of such prisms is $O\big(\frac{r}{\ell!2^t}\big)$ (Lemma~\ref{lem-order-intersection}),
	we can conclude the statement as follows:
	\[
	E[|\SC(R)|]=\sum_{\ell\geq 0} \sum_{t\geq 0}O((t+1)^{10}(1+\ell^4)\cdot \frac{r}{\ell!2^t})=O\bigg(r\cdot\sum_{\ell\geq 1}\frac{\ell^4}{\ell!}\underbrace{\sum_{t\geq 1}\frac{t^{10}}{2^t}}_{O(1)}\bigg)=O(r).\]
	The remaining technical details are to prove the claim.
	
	By the construction, $|\SC(R)|$ is linear in the size of the upper envelope formed by the top faces of all prisms in $\RD(R)$. 
	We say two prisms in $\RD(R)$ \emph{vertically overlap} if their projections onto the $xy$-plane intersect.
	A vertical overlapping between two prisms contributes $O(1)$ to the size of the upper envelope since the $xy$-projections of the top faces of any two prisms intersect $O(s)$ times at their boundaries and $s$ is a constant (defined in Section~\ref{sub-sc-distance-VD}).
	Thus, the size of the upper envelope is linear in $|\RD(R)|$ plus the total number of vertical overlappings between prisms in $\RD(R)$.

	It is sufficient to consider the case that all prisms in $\VD(R)$ intersect at most $27\frac{n}{r}\log r$ surfaces in $F$ and all prisms in $\AD(R)$ lies fully above at most $27\log r+10$ surfaces in $R$.
	Lemma~\ref{lem-high-probability} (with $C^0(R)=\VD(R)$, a conflict between a surface and a prism being defined as the surface intersects the prism, $d=10$ and $c=27$) implies that the former condition fails with probability $O(r^{-17})$.
	Moreover, Theorem~\ref{thm-log-level} (with $C(R)=\VD(R)$, a conflict between a surface and a prism as the surface lies fully below the prisms, i.e., $C_{\leq \frac{n}{r}}(R)=\AD(R)$, $c=1$, $d=10$ and $m=17$) implies that the latter condition also fails with probability $O(r^{-17})$. 
	Since $\AD(R)$ contains $O(r^4)$ prisms (when $\AD(R)=\VD(R)$)
	and each prism in $\AD(R)$ can be refined into $O(r^4\log^4 r)$ prisms for $\RD(R)$ (when intersecting with $n$ surfaces), 
	the number of vertical overlappings between prisms in $\RD(R)$ in the worst case is $O\big((r^4\cdot r^4\log^4 r)^2\big)=O(r^{17})$, so that the exception contributes only $O(r^{17})\cdot O(r^{-17})=O(1)$ to the value of $E[|\SC(R)|]$.

	To count the vertical overlapping between two prisms, we charge the prism lying fully above the other. 
	A prism $\Diamond_1$ is said to \emph{lie vertically below} a prism $\Diamond_2$ if $\Diamond_1$ and $\Diamond_2$ vertically overlap and $\Diamond_1$ lies below $\Diamond_2$. 
	For each prism $\Diamond^*\in \RD(R)$, 
	the prisms in $\RD(R)$ lying vertically below $\Diamond^*$ can be categorized into two cases: either (1) generated from the same prism in $\AD(R)$ as $\Diamond^*$ is (i.e., they and $\Diamond^*$ are refined from the same prism in $\AD(R)$) 
	or (2) generated from a prism in $\AD(R)$ lying vertically below $\Diamond^*$.

	Consider a prism $\Diamond \in \AD(R)$ that lies fully above $\bm{\ell}$ surfaces in $R$ and intersects at least $t\frac{n}{r}$, but at most $(t+1)\frac{n}{r}$ surfaces in $F$. 
	By the refinement step in Section~\ref{sub-sc-design},
	$\Diamond$ will be refined into $O\big((t+1)^4\log^4(t+1)\big)=O\big((t+1)^5\big)$ prisms, so that $\Diamond$ contributes $O\big(((t+1)^5)^2\big)=O\big((t+1)^{10}\big)$ vertical overlappings to the quantity of the first case.
	Let $Q$ be the set of prisms in $\AD(R)$ lying vertically below $\Diamond$ and let $Q^*$ be the set of prisms in $\RD(R)$ generated from the prisms in $Q$.
	Then, $\Diamond$ contributes $O\big((t+1)^5\cdot |Q^*|\big)$ vertical overlappings to the quantity of the second case.
	In other words, $\Diamond$ contributes $O\big((t+1)^{10}+(t+1)^5\cdot |Q^*|\big)$ to the value of $E[|\SC(R)|]$.
	
	We will show that $E[|Q^*|]=O\big((t+1)^5\cdot \ell^4\big)$.
	Let $D(\Diamond)$ be the set of surfaces in $F$ that define $\Diamond$, let $K(\Diamond)$ be the set of surfaces in $F$ that intersect $\Diamond$,
	let $L(\Diamond)$ be the set of surfaces in $F$ that lie fully below $\Diamond$,
	and let $F'$ be $F\setminus \big(D(\Diamond)\cup K(\Diamond)\cup L(\Diamond)\big)$.
	We have $|D(\Diamond)|\leq d=10$, $|L(\Diamond)|\leq \frac{n}{r}$ (since $\Diamond\in \AD(R)$), and $t\frac{n}{r}\leq |K(\Diamond)| \leq (t+1)\frac{n}{r}$. Note that $D(\Diamond)\cap L(\Diamond)$ could be nonempty, and recall that it is sufficient to consider $\ell\leq 27\log r+10$ and $|K(\Diamond)| \leq 27\frac{n}{r}\log r$.

	Let $Q_h$ be the set of prisms in $Q$ intersecting with \emph{at least} $h\cdot\frac{|F'|}{r-|D(\Diamond)|-(\ell-|D(\Diamond)\cap L(\Diamond)|)}$ surfaces in $F'$. For each prism in $Q_h\setminus Q_{h+1}$,  since it may intersect surfaces in $D(\Diamond)\cup K(\Diamond)\cup L(\Diamond)$ and since it intersects at most $(h+1)\frac{|F'|}{r-|D(\Diamond)|-(\ell-|D(\Diamond)\cap L(\Diamond)|)} =O\big((h+1)\frac{n}{r}\big)$ surfaces in $F'$, it intersects at most $10+(t+1)\frac{n}{r}+\frac{n}{r}+O\big((h+1)\frac{n}{r}\big)=O\big((h+t+1)\frac{n}{r}\big)$ surfaces in $F$, so that it will be refined into $O((h+t+1)^5)$ prisms.\jr{rephrasing the above three sentences}
	Therefore, we have 
	\[|Q^*|=O\big(\sum_{h\geq 0} |Q_h\setminus Q_{h+1}| \cdot O((h+t+1)^5)\big)=O\big(\sum_{h\geq 0}|Q_h|\cdot (h^5+t^5+1)\big).\]
	
	We will apply Lemma~\ref{lem-agarwal1} to show that $E[|Q_h|]=O(2^{-h}\cdot E[|Q|])$. 
	Due to the assumption of $\Diamond$, 
	$R$ must contain all surfaces in $D(\Diamond)$, $\ell-|D(\Diamond)\cap L(\Diamond)|$ surfaces in $L(\Diamond)\setminus D(\Diamond)$, and no surface of $K(\Diamond)$, so that the corresponding random experiment is to pick $r-|D(\Diamond)|-(\ell-|D(\Diamond)\cap L(\Diamond)|)$ surfaces from $F'$.
	In this situation, Lemma~\ref{lem-agarwal1} (in which $S$ is $F'$, and for any subset $S'\subseteq S$, $C^0(S')$ is the set of prisms in $\VD\big(S'\cup \big(D(\Diamond)\cup L')\big)\big)$ lying vertically fully below $\Diamond$, where $L'$ is an $(\ell-|D(\Diamond)\cap L(\Diamond)|)$-element random subset of $L(\Diamond)\setminus D(\Diamond)$) implies that $E[|Q_h|]=O(2^{-h}\cdot E[|Q|])$,
	leading to that
	\begin{equation*} 
	\begin{split}
	E[|Q^*|] & =O\bigg(\sum_{h\geq 0}\big((2^{-h}\cdot E[|Q|])\cdot (h^5+t^5+1)\big)\bigg)=O\big(((t+1)^5)\cdot E[|Q|]\cdot\sum_{h\geq 0}\frac{(h+1)^5}{2^h}\big)\\
	         & = O((t+1)^5\cdot E[|Q|]).
	\end{split}
	\end{equation*}
	
	We bound $|Q|$ with $O(\ell^4)$. Since $\ell$ surfaces in $R$ lies fully below $\Diamond$, let $\tilde{R}_{\Diamond}$ denote the set of those $\ell$ surfaces, i.e., $\tilde{R}_{\Diamond}=L(\Diamond)\cap R$. 
	Note that all the prisms in $Q$ belong to $\AD(R)$ and lie vertically below $\Diamond$. 
	For each prism $\Diamond_1\in Q$, if we only consider its part lying vertically below $\Diamond$, this part must coincide with the part of a prism $\Diamond_2\in \VD(\tilde{R}_{\Diamond^*})$ lying vertically below $\Diamond$. In other words, if we extend the top face of $\Diamond$
	to be a downward semi-unbounded prism, i.e., without a bottom face,
	its intersection with $\Diamond_1$ is exactly its intersection with $\Diamond_2$.
	Therefore, $|Q|\leq |\VD(\tilde{R}_{\Diamond})|=O(\ell^4)$,
	so that $E[|Q^*|]=O\big((t+1)^5\cdot \ell^4\big)$. To conclude, $\Diamond$ contributes $O\big((t+1)^{10}+(t+1)^5((t+1)^5\cdot \ell^4)\big)=O\big((t+1)^{10}(1+\ell^4)\big)$ to $E[|\SC(R)|]$.
\end{proof}

\section{Data Structure}\label{sec-DS}

Given a set $F$ of $n$ surfaces as in Section~\ref{sub-sc-distance-VD},
generate a sequence of random subsets of $F$, $R_1\subset R_2\subset R_3\subset \ldots \subset R_m$,
where  $|R_i|=2^{i+11}$ for $1\leq i \leq m$ and $\frac{n}{64\log n}<|R_m|\leq \frac{n}{32\log n}$, and let $r_i$ be $|R_i|$.
The $O(\log n)$ shallow cuttings, $\SC(R_1),\SC(R_2), \ldots, \SC(R_m)$, directly yield a data structure for the $k$ nearest neighbors problem with
$O(n\log n)$ space, $O(\log n+k)$ query time, and expected $O\big(n\log^3n\lambda_{s+2}(\log n)\big)$ preprocessing time.
First, since $E[|\SC(R_i)|]=O(r_i)$ (Theorem~\ref{thm-cutting-size}) and each prism in $\SC(R_i)$ stores $O(\frac{n}{r_i})$ surfaces, the expected space is $O\big(\sum_{i=1}^m( r_i \cdot\frac{n}{r_i})\big)=O(n\log n)$. 
By Markov's inequality, with probability at least half, the space is at most twice its expected value, which is also $O(n\log n)$.
Therefore, we can repeat the whole construction until the space is at most twice its expected value,
and the expected number of repetitions is 2, i.e., $O(1)$ repetitions makes the space deterministic.

Second,  if $\frac{n}{r_{i+1}}<k\leq \frac{n}{r_i}$, the query locates the prism $\Diamond\in \SC(R_i)$ intersected by the query vertical line (i.e., the vertical line passing through the query point),
and selects the $k$ lowest surfaces from the $O(\frac{n}{r_i})$ surfaces stored in $\Diamond$. If $k< 32\log n$, search $\SC(R_m)$,
and if $k> \frac{n}{256}$, check $F$ directly. 
Since the \mbox{$xy$-projections} of prisms in $\SC(R_i)$ do not overlap, a planar point-location data structure 
can locate $\Diamond$ in $O(\log n)$ time, and since the selection trivially takes $O(\frac{n}{r_i})=O(2\frac{n}{r_{i+1}})=O(k)$ time, the query time is $O(\log n+k)$.
Finally, since $E[|\SC(R_i)|]=O(r_i)$,
the $m$ point-location data structures can be constructed in expected $\sum_{i=1}^{m}O(r_i)=O(2\cdot r_m)=O(\frac{n}{\log n})$ time~\cite{EdelsbrunnerGS86},
and by Section~\ref{sec-algorithm}, $\SC(R_1),\SC(R_2), \ldots, \SC(R_m)$ can be computed in expected $O\big(n\log^3n\lambda_{s+2}(\log n)\big)$ time.

By Afshani and Chan's two ideas (\cite[Proposition~2.1, Theorem~3.1]{AfshaniC09}),
the space can be further improved using an $O(n)$-space data structure for the \emph{circular range query} problem 
as a secondary data structure. Assume that the preprocessing time and the query time of such a data structure are
$O(n\log n)$ and $O(\kappa+g(n))$, respectively, where $\kappa$ is the number of sites inside the query circular range and $g(\cdot)$ is a function with $g(O(n))=O(g(n))$ and $g(n)\leq n/2$. Let $m'$ be the smallest integer such that $g^{(m'-1)}(n)< 32\log n$. 
The first idea is to store, only for $m'$ shallow cuttings, each prism together with the surfaces intersecting it, 
and to store, for the other $m-m'$ shallow cuttings, only the prisms. 
Consider a subsequence $(R'_1, \ldots, R'_{m'})$ of $(R_1, \ldots, R_m)$
where $R'_{1}$ is $R_1$, $R'_{m'}$ is $R_m$, and $|R'_{i'}|\sim \frac{n}{g^{(i'-1)}(n)}$ for $2\leq i'\leq m'-1$.
For $1\leq i'\leq m'$ and for each prism $\Diamond'\in\SC(R'_{i'})$, we build the circular range query data structure for
the surfaces stored in $\Diamond'$, namely for the respective sites under the given distance measure. 
The expected space is $O(\sum_{i=1}^{m} r_i+\sum_{i'=1}^{m'}\frac{n}{|R'_{i'}|}\cdot |R'_{i'}|)$ $=O(m'n)$;
by the reasoning in the end of the first paragraph,
expected $O(1)$ repetitions would yield deterministic $O(m'n)$ space.

The second idea is to conduct the query in two steps. In the first step, if $\frac{n}{r_{i+1}}<k\leq \frac{n}{r_i}$, we locate the prism in $\SC(R_i)$ intersected by the query vertical line, and find the intersection point between the vertical line and the top face of the prism. The $xy$-projection and the $z$-coordinate of this intersection point decide, respectively, the center and the radius of a circular range. It is clear that this circular range contains $O(\frac{n}{r_i})=O(2\frac{n}{r_{i+1}})=O(k)$ surfaces.
In the second step, if  $\frac{n}{|R'_{i'+1}|}<k\leq \frac{n}{|R'_{i'}|}$, we locate the prism $\Diamond'\in \SC(R'_{i'})$ intersected by the query vertical line,
conduct the above-defined circular range query on the surfaces stored in $\Diamond'$, i.e., on the respective sites, and find the $k$ lowest surfaces from the $O(k)$ surfaces inside the circular range. 
The query time is $O(\log n+g(\frac{n}{|R'_{i'}|})+k)=O(\log n+g(g^{(i'-1)}(n))+k)=O(\log n +\frac{n}{|R'_{i'+1}|}+k)=O(\log n +k)$.

Finally, we show how to make $m'$ be $O(\log\log n)$.
Since the geometric sites in $S$ are of constant description complexity,
they can be lifted to points in $\R^d$ for a sufficiently large constant $d$, e.g., a line segment is mapped to a point in $\R^4$,
and since the distance measure is also of constant description complexity,
the lifted image of each circular range can be described by a constant number of  $d$-variate functions of constant maximum degree.
Therefore, Agarwal et~al.'s algorithm~\cite{AgarwalMS13} yields a circular range query data structure with $O(n)$ space, $O(n^{1/d}+\kappa)$ query time, and $O(n\log n)$ preprocessing time.
Since $g(n)=n^{1/d}$ and $m'$ is the smallest integer such that $g^{(m'-1)}(n)< 32\log n$,
we have $m'=O(\log\log n)$, concluding the following theorem.

\begin{theorem}\label{thm-DS-static}
	Given a distance measure and $n$ geometric sites in the plane as Section~\ref{sub-sc-distance-VD},
	there exists a static data structure for the $k$ nearest neighbors problem with $O(n\log\log n)$ space, $O(\log n+k)$ query time, and expected $O(n\log^3n\lambda_{s+2}(\log n))$ preprocessing time. 
\end{theorem}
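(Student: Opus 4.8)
The plan is to follow Afshani and Chan's two-level scheme~\cite{AfshaniC09}, using our linear-size shallow cuttings as the primary structure and Agarwal et~al.'s semialgebraic range searching data structure~\cite{AgarwalMS13} as the secondary structure. First I would build the hierarchy of $O(\log n)$ shallow cuttings $\SC(R_1),\ldots,\SC(R_m)$ for the geometrically increasing sample sizes $r_i=2^{i+11}$, up to $|R_m|=\Theta(n/\log n)$. By Theorem~\ref{thm-cutting-size}, $E[|\SC(R_i)|]=O(r_i)$, so storing each prism together with the $O(n/r_i)$ surfaces intersecting it already yields a structure with expected $O(\sum_i r_i\cdot \tfrac{n}{r_i})=O(n\log n)$ space and $O(\log n+k)$ query time via planar point location plus a linear-time selection among the stored surfaces; by Markov's inequality and $O(1)$ expected restarts this space bound becomes deterministic.

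To shave the extra $\log n/\log\log n$ factor, I would keep the conflict lists (the surfaces meeting each prism) only for a sparse subsequence $R'_1,\ldots,R'_{m'}$ with $R'_1=R_1$, $R'_{m'}=R_m$, and $|R'_{i'}|\sim n/g^{(i'-1)}(n)$, while for the remaining shallow cuttings I retain only the prisms, hence only their planar point-location structures. For each prism $\Diamond'$ in a stored shallow cutting I would build, on the sites corresponding to the surfaces in $\Diamond'$, a circular range query structure; with Agarwal et~al.'s structure this costs $O(|R'_{i'}|\cdot \tfrac{n}{|R'_{i'}|})=O(n)$ space per stored level, for a total of $O(m'n)$, again made deterministic by $O(1)$ expected restarts.

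The query runs in two steps, as in~\cite[Theorem~3.1]{AfshaniC09}. Given $k$ with $n/r_{i+1}<k\le n/r_i$, I first locate the prism of $\SC(R_i)$ pierced by the query vertical line and read off its intersection with the prism's top face; the $xy$-projection and $z$-coordinate of this point define a circular range that, because the top face lies above the $(\le n/r_i)$-level and each prism meets at most $2n/r$ surfaces, contains all of the $k$ nearest sites and only $O(n/r_i)=O(k)$ sites in total. Then, for the index $i'$ with $n/|R'_{i'+1}|<k\le n/|R'_{i'}|$, I locate the prism $\Diamond'$ of $\SC(R'_{i'})$, run the circular range query on its secondary structure, and finish with a linear-time selection of the $k$ lowest surfaces among the $O(k)$ reported ones. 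The two point locations cost $O(\log n)$ and the circular range query costs $O(g(n/|R'_{i'}|)+k)=O(n/|R'_{i'+1}|+k)=O(k)$, so the total is $O(\log n+k)$; the boundary cases $k<32\log n$ and $k>n/256$ are handled directly by $\SC(R_m)$ and by $F$ itself. Taking $g(n)=n^{1/d}$ (which satisfies $g(O(n))=O(g(n))$ and $g(n)\le n/2$) forces $m'=O(\log\log n)$, giving the claimed $O(n\log\log n)$ space.

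For preprocessing, Section~\ref{sec-algorithm} constructs all $O(\log n)$ shallow cuttings in expected $O(n\log^3 n\,\lambda_{s+2}(\log n))$ time; the point-location structures cost $O(r_m)=O(n/\log n)$ in total~\cite{EdelsbrunnerGS86}, and the $O(\log\log n)$ layers of secondary Agarwal et~al.\ structures cost $O(n\log n)$ each, i.e. $O(n\log n\log\log n)$ altogether, which is dominated. The only delicate point is the correctness and size guarantee of the circular range extracted in the first step — that it captures the $k$ nearest sites yet contains only $O(k)$ sites — which rests on the fact that a prism's top face lies above the $(\le n/r)$-level and that each stored prism meets $O(n/r)$ surfaces (Theorem~\ref{thm-cutting-size}); since the linear size of $\SC(R)$ is already established, this final theorem is essentially an assembly of that result with the known components of Afshani--Chan and Agarwal et~al.
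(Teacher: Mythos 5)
Your proposal is correct and follows essentially the same route as the paper's own proof in Section~\ref{sec-DS}: the $O(\log n)$-level hierarchy of linear-size shallow cuttings, Afshani--Chan's two ideas of keeping conflict lists only on a sparse subsequence and answering via a circular range defined by the intersection of the query vertical line with a prism's top face, Agarwal et~al.'s semialgebraic range searching with $g(n)=n^{1/d}$ to force $m'=O(\log\log n)$, and Markov plus $O(1)$ expected restarts to make the space bounds deterministic. No gaps worth noting.
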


By replacing the shallow cuttings in Kaplan et~al.'s dynamic data structure~\cite{KaplanMRSS17} with our shallow cuttings,
we obtain a dynamic data structure as the following corollary. 

\begin{corollary}\label{cor-DS-dynamic}
	There exists a dynamic data structure for the $k$ nearest neighbors problem with $O(n\log n)$ space, $O(\log^2n+k)$ query time, and expected amortized $O(\log^5n\lambda_{s+2}(\log n))$ insertion time, and expected amortized $O(\log^7n\lambda_{s+2}(\log n))$ deletion time.
\end{corollary}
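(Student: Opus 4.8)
\quad The plan is to reuse Kaplan et~al.'s dynamic data structure~\cite{KaplanMRSS17} essentially verbatim, substituting our $\SC(R)$ from Theorem~\ref{thm-cutting-size} for their $\big(O(r\log^2 n)\big)$-size shallow cutting wherever it is invoked, and the algorithm of Section~\ref{sec-algorithm} for their shallow-cutting construction. First I would check that the substitution is legitimate: by properties~(a)--(c) of Section~\ref{sub-sc-design} our prisms are disjoint and downward semi-unbounded, so their $xy$-projections form a planar subdivision admitting $O(\log n)$-time point location, which is exactly the interface their dynamic machinery uses; and the algorithm of Section~\ref{sec-algorithm} builds $\SC(R)$ on an arbitrary input of $\nu$ surfaces within the same randomized construction-time bound that Kaplan et~al.\ assume of their construction. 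As in~\cite{KaplanMRSS17} I would first establish the bounds for $k=1$ and then pass to general $k$ by Chan's reduction~\cite{Chan10}, which appends the additive $k$ term to the query time and leaves the space and update bounds untouched.

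The space bound is the most immediate gain. Their structure is a logarithmic-method decomposition into $O(\log n)$ blocks, the block on $m$ points storing a collection of $O(\log m)$ shallow cuttings over those points, each prism of the shallow cutting for parameter $r$ carrying its $O(m/r)$ conflicting surfaces. With $|\SC|=O(r\log^2 n)$ a block on $m$ points occupies $O(m\log^3 m)$ and the whole structure $O(n\log^3 n)$; with our $|\SC|=O(r)$ (Theorem~\ref{thm-cutting-size}) a block occupies $O(m\log m)$ and the structure $O(n\log n)$. Since Theorem~\ref{thm-cutting-size} bounds only the \emph{expected} size, I would make the bound worst-case exactly as in Section~\ref{sec-DS}: by Markov's inequality, resampling a block's random subsets $O(1)$ times in expectation brings its realized size within twice its expectation, which does not disturb the amortized accounting.

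For the query, the algorithm of~\cite{KaplanMRSS17} performs one $O(\log n)$-time point location in each of the $O(\log n)$ blocks followed by $O(k)$ selection work, none of which depends on the size of the shallow cuttings, so the query time remains $O(\log^2 n+k)$. For the insertion time, I would trace their amortized accounting and observe that it is dominated by the cost of (re)building shallow cuttings under the block-rebuilding scheme, i.e.\ by the construction time, which our shallow cuttings do \emph{not} improve upon; thus the amortized insertion time stays expected $O(\log^5 n\,\lambda_{s+2}(\log n))$. For the deletion time, the decisive point is that their amortized deletion cost additionally carries a term that scales with the total size of the maintained shallow cuttings---intuitively, a deleted site must be purged from, or repaired in, every prism whose conflict list it belongs to, and the shallow cuttings must be re-merged accordingly---and with their $\Theta(\log^2 n)$-larger shallow cuttings this term dominates; replacing them by our linear-size shallow cuttings strips off that $\Theta(\log^2 n)$ factor and yields expected amortized $O(\log^7 n\,\lambda_{s+2}(\log n))$ deletion time.

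The main obstacle I anticipate is not the substitution but the careful re-audit of Kaplan et~al.'s amortized analysis: one must verify that, once the shallow-cutting size drops by $\Theta(\log^2 n)$, no other term in their deletion (or insertion) potential becomes the bottleneck, and that the randomness in $\SC(R)$---whose linear size holds only in expectation---can be absorbed by the resampling trick above at \emph{every} level of the dynamization without inflating the expected amortized update times. A more routine check is that the construction algorithm of Section~\ref{sec-algorithm} interoperates with the incremental way Kaplan et~al.\ rebuild pieces of the structure, so that its randomized running time composes additively over the $O(\log n)$ levels.
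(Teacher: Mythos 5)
Your proposal is correct and follows essentially the same route as the paper, whose entire argument for this corollary is the one-line observation that one plugs the linear-size shallow cuttings of Theorem~\ref{thm-cutting-size} (built by the algorithm of Section~\ref{sec-algorithm}) into Kaplan et~al.'s dynamic structure in place of their $O(r\log^2 n)$-size cuttings, so that the space and the size-dependent part of the amortized deletion cost each shed a $\log^2 n$ factor while the query and insertion bounds are untouched. Your additional bookkeeping (Markov-plus-resampling to derandomize the expected cutting size per block, and noting that insertion is dominated by the unchanged construction time) is consistent with what the paper does for its static structure and adds detail the paper leaves implicit.
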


\section{Construction Algorithm}\label{sec-algorithm}

Given a set $F$ of $n$ surfaces as in Section~\ref{sub-sc-distance-VD},
consider a sequence of random subsets of $F$, $R_1\subset R_2\subset R_3\subset \ldots \subset R_m$,
where $|R_i|=2^{i+11}$ for $1\leq i \leq m$
and $\frac{n}{64\log n}<|R_m|\leq \frac{n}{32\log n}$, let $r_i$ be $|R_i|$.
It is clear that $|R_1|=4096$ and $m\leq \log n-5\log\log n$. 

We will build the $\frac{1}{r_i}$-shallow-cuttings $\SC(R_i)$ for $1\leq i\leq m$
in the following three steps:
\begin{enumerate}
\item Repeatedly generate $R_1\subset R_2\subset R_3\subset \ldots \subset R_m$
and construct $\VD^{\leq \ell}(R_i)$ for $1\leq i\leq m$, where $\ell=\Theta(\log n)$ and $\VD^{\leq \ell}(R_i)$ is the set of prisms in $\VD(R_i)$ whose level with respect to $R_i$ is at most $\ell$,\footnote{Kaplan et~al.~\cite{KaplanMRSS17} used the term $\VD_{\leq \ell}(R)$, while in this paper, to be consistent with Section~\ref{sec-random-sampling}, we always use subscripts to describe global relations and superscripts to describe local relations.} until $\VD^{\leq \ell}(R_i)$ includes $\AD(R_i)$  for $1\leq i\leq m$.  (Recall that $\AD(R_i)$ is the set of prisms in $\VD(R_i)$ lying fully above at most $\frac{n}{r_i}$ surfaces in $F$.)
\item Generate $\AD(R_i)$ from $\VD^{\leq \ell}(R_i)$ for $1\leq i\leq m$. 
\item Build a $\frac{1}{r_i}$-shallow-cutting $\SC(R_i)$ from $\AD(R_i)$ in the same way as Section~\ref{sub-sc-design} for $1\leq i\leq m$.
\end{enumerate}
For simplicity, we use $R$ to denote $R_i$ when the context is clear,
and let $r$ be $|R|$.

We will make use of an auxiliary data structure $\bm{\RDS}$ by Kaplan et~al.\ (\cite[Theorem 7.1]{KaplanMRSS17}) 
with expected $O(n\log^2 n)$ preprocessing time and expected $O(\log n+k)$ query time for the following two types of queries: (a) given a query point $p\in\R^3$,
answer the $k$ surfaces in $F$ lying below $p$, where $k$ is unknown,
and (b) given a query vertical line and an integer $k$, answer the lowest $k$ surfaces in $F$ along the line.
Although Kaplan et~al.\ only mentioned the first type,
since their data structure is generalized from Chan's data structure for planes~\cite{Chan00},
it directly works for the second type. 

\subsection{Construct $\VD^{\leq \ell}(R)$}\label{sub-al-construct-VD}

Let $(f_1,f_2,\ldots, f_n)$ be a random sequence of $F$, and let $F_j$ be $\{f_1, f_2,\ldots, f_j\}$,
so that $R_i=F_{2^{i+11}}$ for $1\leq i\leq m$. 
Kaplan et~al.\ (\cite[Section~5]{KaplanMRSS17}) proved the size of $\VD^{\leq \ell}(F_j)$ to be $O(j\cdot \ell\cdot \lambda_{s+2}(\ell))$
and constructed, in expected $O(n\log^2n \ell \lambda_{s+2}(\ell))$ time, $\VD^{\leq \ell}(F_j)$ for $1\leq j\leq n$
together with for each prism $\Diamond\in \VD^{\leq \ell}(F_j)$ the set $F_\Diamond$ of surfaces in $F$ intersecting $\Diamond$.
Let $\ell$ be $30\log n$, so that the running time is expected $O(n\log^3n\lambda_{s+2}(\log n))$, and with high probability, $\AD(R_i)\subseteq \VD^{\leq \ell}(R_i)$ for $1\leq i\leq m$, the latter of which will be explained in the time analysis of Section~\ref{sub-al-time}. 

To verify if $\AD(R)\subseteq \VD^{\leq\ell}(R)$,
we actually examine the ``upper envelope'' of  $\VD^{\leq\ell}(R)$.
Precisely, we consider each prism $\Diamond$ in $\VD^\ell(R)$, where $\VD^\ell(R)$ is the set of prisms in $\VD(R)$ whose level with respect to $R$ is exactly $\ell$,
and check if the top face $\triangle$ of $\Diamond$ lies fully above at least $\frac{n}{r}$ surfaces in $F$.
For the latter,
we first derive the set $F_\triangle$ of surfaces in $F$ that intersects $\triangle$ from $F_\Diamond$, where $F_\Diamond$ is the set of surfaces in $F$ that intersects $\Diamond$.
Then, we arbitrarily pick a point $p\in \triangle$, and use $\RDS$ to find the $|F_\triangle|+\frac{n}{r}$ lowest surfaces along the vertical line passing through $p$. 
Since a surface lying fully below $\triangle$ lies below $p$ and since a surface lying below $p$ either intersects $\triangle$ or lies fully below $\triangle$,
$\triangle$ lies fully above at least $\frac{n}{r}$ surfaces 
if and only if 
$\triangle$ lies fully above at least $\frac{n}{r}$ surfaces of the returned $|F_\triangle|+\frac{n}{r}$ ones. 
Therefore, if $\triangle$ lies fully above at least $\frac{n}{r}$ returned surfaces, we say $\triangle$ pass the test,
and if all the top faces of prisms in $\VD^\ell(R)$ pass the test, we determine that  $\AD(R)\subseteq \VD^{\leq\ell}(R)$.

If we do not determine that $\AD(R_i) \subseteq \VD^{\leq\ell}(R_i)$ for some $i\in [1, m]$,
we generate a new random sequence $(f_1, f_2,\ldots,f_n)$ of $F$, and repeat the above process.

\subsection{Select $\AD(R)$ from $\VD^{\leq\ell}(R)$}\label{sub-al-construct-AD}

To select $\AD(R)$ from $\VD^{\leq\ell}(R)$,
we conduct a test on each prism $\Diamond\in\VD^{\leq\ell}(R)$, which is similar to the one in Section~\ref{sub-al-construct-VD}.
We arbitrarily pick point $p\in \Diamond$, use $\RDS$ to find the $|F_\Diamond|+\frac{n}{r}+1$ lowest surfaces along the vertical line passing through $p$,
and if $\Diamond$ lies fully above at most $\frac{n}{r}$ returned surfaces,
include $\Diamond$ in $\AD(R)$.
Again,
since a surface lying fully below $\Diamond$ lies below $p$ and since a surface lying below $p$ either intersects $\Diamond$ or lies fully below $\Diamond$,
$\Diamond$ lies fully above at most $\frac{n}{r}$ surfaces 
if and only if 
$\Diamond$ lies fully above at most $\frac{n}{r}$ surfaces of the returned $|F_\Diamond|+\frac{n}{r}+1$ ones.

\subsection{Build $\SC(R)$ from $\AD(R)$}\label{sub-al-construct-SC}

The procedure is already outlined in Section~\ref{sub-sc-design},
so  we provide the implementation details.

\subsubsection{Build $\RD(R)$ from $\AD(R)$}\label{subsub-al-refinement}

For each prism $\Diamond\in\AD(R)$,
let $F_\Diamond$ be the set of surfaces in $F$ intersecting $\Diamond$,
and let $t$ be $\lceil \frac{|F_\Diamond|r}{n}\rceil$.
If  $t>1$,
we refine $\Diamond$ into smaller prisms by
picking an $O(t\log t)$-element random subset $F'$ of $F_\Diamond$,
and applying Chazelle et~al.'s algorithm~\cite{ChazelleEGS91} to construct $\VD(F')\cap \Diamond$,
which takes $O(|F'|^5\log |F'|)=O\big((t^5\log^5t)\cdot \log (t\log t)\big)=O(t^6)$ time and generates $O(|F'|^4)=O(t^4\log^4t)=O(t^5)$ prisms.
For each prism $\Diamond'\in \VD(F')\cap \Diamond$,
we generate $F_{\Diamond'}$ by testing surfaces in $F_\Diamond$ with $\Diamond'$, which takes $O(|F_\Diamond|\cdot t^5)=O(t^6\cdot \frac{n}{r})$ time. 
By defining a conflict between a surface and a prism as the surface intersects the prism, 
Corollary~\ref{cor-high-probability} implies that
with probability at least half, each prism $\Diamond'\in\VD(F')\cap\Diamond$ intersects at most $\frac{|F_\Diamond|}{t}\leq \frac{n}{r}$ surfaces in $F$, i.e., $|F_{\Diamond'}|\leq \frac{n}{r}$,
so that we can repeat the process until the requirement is satisfied and the expected number of repetitions is $O(1)$.
Finally, for each prism $\Diamond'\in \VD(F')\cap \Diamond$,
we conduct the same test as in Section~\ref{sub-al-construct-AD} to check if $\Diamond'$ lies fully above at most $\frac{n}{r}$ surfaces in $F$,
and if no, discard $\Diamond'$,
which takes expected $O(\log n+|F_{\Diamond'}|+\frac{n}{r}+1)=O(\log n+\frac{n}{r})$ time, i.e., $O(t^5\cdot (\log n+\frac{n}{r}))$ in total. 
Therefore, the refinement of $\Diamond$ takes expected $O(t^6\cdot \frac{n}{r}+t^5\log n)$ time.

\subsubsection{Build $\SC(R)$ from $\RD(R)$}\label{subsub-al-vertical}

First, we construct the upper envelope of top faces of prisms in $\RD(R)$ through an algorithm in \cite[Section~7.3.4]{SA95}, which is a mixture of divide-and-conquer and plane-sweep methods. 
Then, we partition each face of the upper envelope into trapezoids using Chazelle's linear time algorithm~\cite{Chazelle91},
and extend these resulting trapezoids to downward semi-unbounded prisms, leading to $\SC(R)$.
Finally, for each prism $\Diamond'\in \SC(R)$, 
we will build $F_{\Diamond'}$, i.e., the set of surfaces in $F$ intersecting $\Diamond'$. 

To build $F_{\Diamond'}$, it is clear that each surface in $F_{\Diamond'}$ either intersects with or lies fully below the top face of $\Diamond'$. 
Let $\Diamond$ be the prism in $\RD(R)$ whose top face contains the top face of $\Diamond'$.
We check each surface in $F_\Diamond$ with $\Diamond'$ to find the surfaces intersecting the top face of $\Diamond'$.
Moreover, we arbitrarily pick a point $p$ in the top face of $\Diamond'$,
and use $\RDS$ to find all surfaces in $F$ lying below $p$, from which we can find the surfaces lying fully below the top face of $\Diamond'$. 
The above two steps take $O(|F_\Diamond|)$ time and expected $O(\log n+|F_{\Diamond'}|)$ time, respectively.
Since $|F_\Diamond|\leq \frac{n}{r}$ and $|F_{\Diamond'}|\leq2\frac{n}{r}$,
the generation of $F_{\Diamond'}$ takes expected $O(\log n+\frac{n}{r})$ time. ($\Diamond$ belongs to $\RD(R)$, so $|F_\Diamond|\leq \frac{n}{r}$; as discussed in the end of Section~\ref{sub-sc-design}, $|F_{\Diamond'}|\leq 2\frac{n}{r}$.)

\subsection{Time Analysis}\label{sub-al-time}

For the construction time,
we analyze the three steps separately.
In addition to Lemma~\ref{lem-agarwal1}, we will also make use of another Agarwal et~al.'s result~\cite{AgarwalMS98} as follow. 

\begin{lemma}\label{lem-agarwal2}(\cite[Proposition~2.1]{AgarwalMS98})
	For an $r$-element random subset $R$ of $S$, if $C^0(R)$ satisfies Conditions~(i)~and~(ii) (in Section~\ref{sub-rs-configuration}) and if every subset $R'\subseteq R$ satisfies $E[|C^0(R')|]=O(f(|R'|))$ for an increasing function $f(\cdot)$, then
	\[ E[\sum_{\triangle\in C^0(R)} w(\triangle)]=O\big ((\frac{n}{r})\cdot f(|R|)\big).\]
\end{lemma}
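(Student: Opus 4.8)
The plan is to reduce the statement to the one-sided moment bound of Lemma~\ref{lem-agarwal1} by decomposing the global conflict sizes dyadically. Writing $T$ for the set of all configurations definable from objects of $S$, linearity of expectation gives
\[
E\Big[\sum_{\triangle\in C^0(R)} w(\triangle)\Big]
=\sum_{\triangle\in T} w(\triangle)\,\Pr[\triangle\in C^0(R)]
\ \le\ \sum_{j\ge 0}(j+1)\,\frac{n}{r}\;E\big[\,|C^0_{\ge j\frac{n}{r}}(R)|\,\big],
\]
where the inequality groups the configurations by $\lfloor w(\triangle)\,r/n\rfloor=j$: a configuration in the $j$-th group has $w(\triangle)<(j+1)\frac{n}{r}$, and whenever it lies in $C^0(R)$ it also lies in $C^0_{\ge j\frac{n}{r}}(R)$. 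So it is enough to bound the last sum by $O\big(\frac{n}{r}f(r)\big)$.

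First I would isolate $j=0$, whose contribution $\frac{n}{r}E[|C^0(R)|]$ is $O\big(\frac{n}{r}f(r)\big)$ directly from the hypothesis. For $1\le j\le r/d$, Lemma~\ref{lem-agarwal1} gives $E[|C^0_{\ge j\frac{n}{r}}(R)|]=O(2^{-j})\cdot E[|C^0(R'_j)|]$ with $|R'_j|=\lfloor r/j\rfloor\le r$, so by monotonicity of $f$ the hypothesis yields $E[|C^0(R'_j)|]=O\big(f(\lfloor r/j\rfloor)\big)=O(f(r))$; hence these terms sum to $\frac{n}{r}f(r)\cdot O\big(\sum_{j\ge 1}(j+1)2^{-j}\big)=O\big(\frac{n}{r}f(r)\big)$. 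Finally, for $j>r/d$ each such configuration has $w(\triangle)>n/d$, so $C^0_{\ge j\frac{n}{r}}(R)\subseteq C^0_{\ge (r/d)\frac{n}{r}}(R)$, and applying Lemma~\ref{lem-agarwal1} at the extreme parameter $t=r/d$ gives $E[|C^0_{\ge(r/d)\frac{n}{r}}(R)|]=O(2^{-r/d})\cdot E[|C^0(R')|]$ with $|R'|=d=O(1)$, hence $=O(2^{-r/d})$. Since $w(\triangle)\le n$ forces $j\le r$, the entire tail contributes at most $r\cdot (r+1)\cdot\frac{n}{r}\cdot O(2^{-r/d})=O\big(nr\,2^{-r/d}\big)=O\big(\frac{n}{r}\big)$; replacing $f$ by $\max\{f,1\}$ (which preserves the hypothesis) absorbs this into $O\big(\frac{n}{r}f(r)\big)$, completing the bound.

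The step I expect to be the main obstacle is the tail $j>r/d$: Lemma~\ref{lem-agarwal1} is stated only for $t\le r/d$, so the large-conflict configurations cannot be fed into it with a geometrically decaying factor and must be handled by a separate estimate; the saving grace is that $w(\triangle)\le n$ leaves only $O(r)$ dyadic levels above $n/d$, each annihilated by a single application at $t=r/d$ whose probability is exponentially small in $r$. An alternative would avoid Lemma~\ref{lem-agarwal1} entirely and argue in the Clarkson--Shor style: couple $R$ with a random $\lceil r/2\rceil$-subset $R'\subseteq R$, use Conditions~(i)--(ii) exactly as in the proof of Theorem~\ref{thm-many-local-less-global} to bound $\Pr[\triangle\in C^0(R)]$ against $\Pr[\triangle\in C^0(R')]$, and then compare the two hypergeometric \emph{conflict-avoidance} factors, whose ratio picks up a $1+\Omega\big(w(\triangle)\,r/n\big)$; summing over $\triangle\in T$ gives the claim with an explicit constant. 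The dyadic reduction is cleaner, though, since Lemma~\ref{lem-agarwal1} is already available.
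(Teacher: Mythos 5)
The paper does not actually prove this lemma---it imports it verbatim from Agarwal, Matou\v{s}ek and Schwarzkopf \cite{AgarwalMS98} (their Proposition~2.1)---and your argument is precisely the standard proof of that proposition: group the configurations by the dyadic level $j=\lfloor w(\triangle)\,r/n\rfloor$, bound the level-$j$ contribution by $(j+1)\frac{n}{r}\,E[|C^0_{\ge j\frac{n}{r}}(R)|]$, and feed each level into the exponential-decay bound of Lemma~\ref{lem-agarwal1}, with the hypothesis $E[|C^0(R')|]=O(f(|R'|))$ and monotonicity of $f$ absorbing the subsample sizes. The proof is correct; the only cosmetic points are that at the extreme level you should take $t=\lfloor r/d\rfloor$ (so the subsample has size between $d$ and roughly $2d$, still $O(1)$), and the $\max\{f,1\}$ normalization is harmless since in every use of the lemma $f(r)\ge 1$.
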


For the first step, since $\ell$ is $30\log n$, by \cite[Theorem~5.1]{KaplanMRSS17}, 
it takes expected\\
$O(n\log^3n\lambda_{s+2}(\log n))$ time to build $\VD^{\leq \ell}(F_1), \ldots, \VD^{\leq \ell}(F_n)$.
It is clear that the upper envelope of $\VD^{\leq \ell}(R)$
consists of top faces of prisms in $\VD^\ell(R)$.
Since it takes for each prism $\Diamond'\in \VD^\ell(R)$ expected $O(\log n+\frac{n}{r}+|F_{\Diamond'}|)$ time to check if the top face of $\Diamond'$ lies fully above at least $\frac{n}{r}$ surfaces,
the total expected time to check all prisms in $\VD^\ell(R)$ is \[O\big(E[\sum_{\Diamond'\in \VD^\ell(R)}(\log n+\frac{n}{r}+|F_{\Diamond'}|)]\big )\leq O\big( E[\sum_{\Diamond\in \VD^{\leq\ell}(R)}(\log n+\frac{n}{r}+|F_\Diamond|)] \big).\]
Since $|\VD^{\leq\ell}(R)|=O(r\log n\lambda_{s+2}(\log n))$ (\cite[Lemma~5.1]{KaplanMRSS17}),
Lemma~\ref{lem-agarwal2} implies that 
\[E[\sum_{\Diamond\in \VD^{\leq\ell}(R)}|F_\Diamond|]= O(\frac{n}{r}\cdot E[|\VD^{\leq\ell}(R)|])=O\big(\frac{n}{r}\cdot (r\log n\lambda_{s+2}(\log n))\big)=O\big(n\log n\lambda_{s+2}(\log n)\big).\]
As a result, since $r=|R|\leq|R_m|=O(\frac{n}{\log n})$, the expected time to check $\VD^{\ell}(R)$ is
\[O\bigg(\big(\underbrace{E[|\VD^{\leq\ell}(R)|]}_{O(r\log n\lambda_{s+2}(\log n))}\big)(\log n+\frac{n}{r})+n\log n\lambda_{s+2}(\log n)\bigg)=O\big(n\log n\lambda_{s+2}(\log n)\big),\]
implying that the total expected time to check $\VD^{\leq \ell}(R_i)$ for $1\leq i\leq m$ is $O\big(n\log^2 n\lambda_{s+2}(\log n)\big)$ time. 
Thus, the total expected time is $O(n\log^3n\lambda_{s+2}(\log n))$,
dominated by the construction of $\VD^{\leq \ell}(F_j)$ for $1\leq j\leq n$.

Since $\ell=30\log n \geq 27\log n +10$, 
according to the proof of Theorem~\ref{thm-cutting-size},
with probability $1-O(\frac{1}{n^{17}})$, $\AD(R)\subseteq \VD^{\leq \ell}(R)$, 
implying that with probability $1-O(\frac{\log n}{n^{17}})$,
$\AD(R_i)\subseteq \VD^{\leq \ell}(R_i)$ for $1\leq i\leq m$. 
As a result, 
the expected number of repetitions for the first step is $O(1)$,
and the first step takes expected $O\big(n\log^3n\lambda_{s+2}(\log n)\big)$ time in total.

For the second step,
since it takes expected $O(\log n+\frac{n}{r}+|F_{\Diamond}|)$ to time check if a prism $\Diamond\in\VD^{\leq \ell}(R)$ belongs to $\AD(R)$, 
the same analysis for the test in the first step yields that the second step takes expected $O\big(n\log^2n\lambda_{s+2}(\log n)\big)$ time to build $\AD(R_i)$ from $\VD^{\leq\ell}(R_i)$ for $1\leq i\leq m$.

For the third step,
we analyze the refinement and the upper envelope construction separately.
For the refinement, 
as discussed in Section~\ref{subsub-al-refinement},
if a prism $\Diamond\in \AD(R)$ intersects at most $t\frac{n}{r}$ functions,
it takes $O(t^6\cdot \frac{n}{r}+t^5\log n)$ time to process $\Diamond$, and $O(t^5)$ prisms are generated.  
By Theorem~\ref{thm-size-AD-RD},
the expected number of prisms in $\AD(R)$ that intersect at least $(t-1)\frac{n}{r}$ surfaces in $F$
is $O(2^{-(t-1)}\cdot r )$, 
implying that the expected time\jr{to build $\RD(R)$ from $\AD(R)$} is
\[\sum_{t\geq  1}O\big((t^6\frac{n}{r}+t^5\log n)\cdot (2^{-(t-1)}\cdot r)\big)=O(n+r\log n)\cdot\sum_{t\geq 1}t^6\cdot 2^{-(t-1)}=O(n),\]
where the last inequality comes from the fact that $r=O(\frac{n}{\log n})$ and $\sum_{t\geq 1}t^6\cdot 2^{-(t-1)}=O(1)$.

For constructing the upper envelope of $\RD(R)$,
the algorithm in \cite[Section~7.3.4]{SA95} recursively divides $\RD(R)$ into two subsets of roughly equal size,
and use plane-sweep to merge the upper envelopes of the two subsets.
Since $E[|\RD(R)|]=O(r)$ (Theorem~\ref{thm-size-AD-RD}), there are expected $O(\log r)$ recursion levels.
As shown in the proof of Theorem~\ref{thm-cutting-size},
the expected total number of intersections among the boundaries of the $xy$-projections of prisms in $\RD(R)$
is $O(r)$, so that the expected total complexity of upper envelopes in a recursion level is $O(r)$.
Therefore, the plane sweep takes expected $O(r\log r)$ time for one recursion level, and the algorithm takes expected $O(r\log^2 r)$ time to build $\SC(R)$.

After the construction of $\SC(R)$, we need to compute $F_\Diamond$ for each prism $\Diamond\in \SC(R)$,
which takes expected $O(\log n+\frac{n}{r})$ time as discussed in Section~\ref{subsub-al-vertical}.
Since $E[|\SC(R)|]=O(r)$ and $r=O(\frac{n}{\log n})$,
it takes expected $O\big(r\cdot (\log n+\frac{n}{r})\big )=O(n)$ time for all prisms in $\SC(R)$.
The total expected time for the third step is $O\big(\sum_{i=1}^m(n+r_i\log^2r_i)\big)=O(n\log n+r_m\log^2r_m)=O(n\log n)$.

\begin{theorem}\label{thm-construction-time}
	It takes expected $O\big(n\log^3n\lambda_{s+2}(\log n)\big)$ time to compute $\frac{1}{r_i}$-shallow-cuttings $\SC(R_i)$ of $F$ for $1\leq i\leq m$ such that the total size of those $O(\log n)$ cuttings is $O(\frac{n}{\log n})$
	and the total space to store the surfaces intersecting every prism is $O(n\log n)$. 
\end{theorem}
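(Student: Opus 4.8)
The plan is to bound the running time of the three construction steps of Section~\ref{sec-algorithm} separately, argue that the total is dominated by the first step, and then derive the size bounds from Theorem~\ref{thm-cutting-size} together with a Markov-inequality argument. Throughout, I will repeatedly use that $r_i=O(\tfrac{n}{\log n})$ so that a per-prism charge of $O(\log n+\tfrac{n}{r_i})$ stays benign.

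For the first step, I would invoke Kaplan et~al.'s construction (\cite[Theorem~5.1]{KaplanMRSS17}) to build $\VD^{\leq\ell}(F_j)$ for all $1\le j\le n$ with $\ell=30\log n$ in expected $O(n\log^3 n\,\lambda_{s+2}(\log n))$ time, which also supplies $F_\Diamond$ for every prism. The verification that $\AD(R_i)\subseteq\VD^{\leq\ell}(R_i)$ tests the top face of each prism of $\VD^\ell(R_i)$ via one $\RDS$ query in expected $O(\log n+\tfrac{n}{r_i}+|F_\Diamond|)$ time; summing over all prisms of $\VD^{\leq\ell}(R_i)$ and combining $|\VD^{\leq\ell}(R_i)|=O(r_i\log n\,\lambda_{s+2}(\log n))$ (\cite[Lemma~5.1]{KaplanMRSS17}) with Lemma~\ref{lem-agarwal2} (which bounds $E[\sum_\Diamond|F_\Diamond|]$ by $O(\tfrac{n}{r_i}\cdot E[|\VD^{\leq\ell}(R_i)|])$) and $r_i=O(\tfrac{n}{\log n})$ gives expected $O(n\log n\,\lambda_{s+2}(\log n))$ per index, hence $O(n\log^2 n\,\lambda_{s+2}(\log n))$ over all $m=O(\log n)$ indices. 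Since $\ell=30\log n\ge 27\log n+10$, the proof of Theorem~\ref{thm-cutting-size} yields $\AD(R_i)\subseteq\VD^{\leq\ell}(R_i)$ with probability $1-O(n^{-17})$, so a union bound over the $m$ indices shows the first step succeeds with high probability, keeping the expected number of repetitions $O(1)$; the first step therefore runs in expected $O(n\log^3 n\,\lambda_{s+2}(\log n))$ time overall.

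For the second step, the analogous test (now asking $\RDS$ for $|F_\Diamond|+\tfrac{n}{r_i}+1$ surfaces) applied to each prism of $\VD^{\leq\ell}(R_i)$ costs expected $O(\log n+\tfrac{n}{r_i}+|F_\Diamond|)$, and the identical summation yields expected $O(n\log^2 n\,\lambda_{s+2}(\log n))$ over all indices. For the third step I would split off refinement and upper-envelope construction. A prism of $\AD(R)$ intersecting at most $t\tfrac{n}{r}$ surfaces is processed in $O(t^6\tfrac{n}{r}+t^5\log n)$ time, and by Theorem~\ref{thm-size-AD-RD} the expected number of prisms intersecting at least $(t-1)\tfrac{n}{r}$ surfaces is $O(2^{-(t-1)}r)$, so $\sum_{t\ge1}(t^6\tfrac{n}{r}+t^5\log n)\cdot 2^{-(t-1)}r=O(n+r\log n)=O(n)$; the upper envelope of the $O(r)$ expected top faces of $\RD(R)$ is built by the divide-and-conquer plane-sweep of \cite[Section~7.3.4]{SA95} over $O(\log r)$ levels, each of expected complexity $O(r)$ (since, as in the proof of Theorem~\ref{thm-cutting-size}, the expected number of boundary intersections among the $xy$-projections is $O(r)$), giving expected $O(r\log^2 r)$; and computing $F_{\Diamond'}$ for each of the $O(r)$ expected prisms of $\SC(R)$ costs expected $O(\log n+\tfrac{n}{r})$, again $O(n)$ in total. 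Summing over $i$ gives $O(\sum_i(n+r_i\log^2 r_i))=O(n\log n)$ for the third step.

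Adding the three contributions, the total expected time is $O(n\log^3 n\,\lambda_{s+2}(\log n))$. For the size claims, Theorem~\ref{thm-cutting-size} gives $E[|\SC(R_i)|]=O(r_i)$, hence $E[\sum_i|\SC(R_i)|]=O(\sum_i r_i)=O(r_m)=O(\tfrac{n}{\log n})$, and since each prism stores $O(\tfrac{n}{r_i})$ surfaces the expected stored space is $O(\sum_i r_i\cdot\tfrac{n}{r_i})=O(n\log n)$; by Markov's inequality both quantities lie within twice their expectations with probability at least $\tfrac12$, so $O(1)$ expected repetitions of the whole construction make the bounds deterministic without affecting the asymptotic running time. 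I expect the main obstacle to be the bookkeeping of the verification cost in the first two steps --- in particular simultaneously controlling $|\VD^{\leq\ell}(R)|$, the Lemma~\ref{lem-agarwal2} bound on $\sum_\Diamond|F_\Diamond|$, and the per-prism $O(\log n+\tfrac{n}{r_i})$ charge via $r_i=O(\tfrac{n}{\log n})$ --- together with verifying that the high-probability containment $\AD(R_i)\subseteq\VD^{\leq\ell}(R_i)$ really does keep the expected number of repetitions constant.
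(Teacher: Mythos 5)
Your proposal follows essentially the same route as the paper: the same three-step time analysis (Kaplan et~al.'s construction of $\VD^{\leq\ell}(F_j)$ with $\ell=30\log n$, the $\RDS$-based verification charged via Lemma~\ref{lem-agarwal2} and $|\VD^{\leq\ell}(R_i)|=O(r_i\log n\,\lambda_{s+2}(\log n))$, the high-probability containment $\AD(R_i)\subseteq\VD^{\leq\ell}(R_i)$ giving $O(1)$ expected repetitions, and the refinement/upper-envelope/$F_{\Diamond'}$ accounting for the third step), followed by the same expectation-plus-Markov argument for the size and space bounds. The only nit is the Markov constant at the end: requiring both quantities to be within \emph{twice} their expectations gives, by the union bound, no useful success probability (each can fail with probability up to $1/2$); use four times the expectation, as the paper does, so that each fails with probability at most $1/4$ and both succeed with probability at least $1/2$.
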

\begin{proof}
	The running time has been analyzed. By Theorem~\ref{thm-cutting-size},
	the expected total size is 
	\[E[\sum_{i=1}^m|\SC(R_i)|]=\sum_{i=1}^m O(r_i)=\sum_{i=1}^m O(2^{i+11})=O(2^{m+12})=O(\frac{n}{\log n}).\]
	Since each prism in $\SC(R)$ intersects $O(\frac{n}{r})$ surfaces in $F$,
	the expected total space is 
	\[\sum_{i=1}^m\big(E[|\SC(R_i)|]\cdot O(\frac{n}{r_i})\big)=\sum_{i=1}^mO(r_i)\cdot O(\frac{n}{r_i})=\sum_{i=1}^mO(n)=O(n\log n).\]
	By Markov's inequality,
	with probability at most $1/4$, the value is more than four times its expectation.
	Therefore, with probability at least $1-(1/4+1/4)=1/2$, both the total size and the total space are at most four times their expected values, respectively.
	We can repeat the whole construction algorithm until the both values are at most four times their expectations,
	i.e., making the bounds for the total size and the total space deterministic, and the expected number of repetitions is only 2. 
\end{proof}

\section{Concluding Remarks}\label{sec-conclusion}

We have derived a new random sampling technique for configuration space, have applied our new technique to successfully design linear-size shallow cuttings for general distance functions, and have composed these shallow cuttings into nearly optimal static and dynamic data structures for the $k$ nearest neighbor problem.
The remaining challenges are the optimal $O(n)$ space and the optimal $O(n\log n)$ preprocessing time. 
Afshani and Chan's $O(n)$ space for point sites in the Euclidean metric~\cite{AfshaniC09} is an elegant combination of Matou\v{s}ek's \emph{shallow cutting lemma} and \emph{shallow partition theorem}~\cite{Matousek92a}.
Although we have designed linear-size shallow cuttings for general distance functions, 
the generalization of ``$O(\log r)$-crossing'' shallow partitions is still unknown since the original proof significantly depends on certain geometric properties of planes.
For the $O(n\log n)$ preprocessing time, the traversal idea by Chan~\cite{Chan00} and Ramos~\cite{Ramos99a}
seems not to work directly since a pseudo-prism is possibly adjacent to a ``non-constant'' number of pseudo-prisms in the vertical decomposition.

\appendix

\section{Literature for Shallow Cuttings}\label{ap-literature}

To study the \emph{half-space range reporting} problem,
Matou\v{s}ek~\cite{Matousek92a} first used ``tetrahedra'' to define shallow cuttings and proved the existence of  a $\frac{1}{r}$-shallow-cutting of $O(r)$ tetrahedra.
Then, he adopted his shallow cuttings to prove \emph{Shallow Partition Theorem}, and applied the theorem to construct a data structure for the half-space range reporting problem.
Ramos~\cite{Ramos99a} developed an $O(n\log n)$ time randomized algorithm to construct $\frac{1}{r}$-shallow-cuttings of tetrahedra for $r=2, 4, 8, \ldots, \frac{n}{\log n}$. 
Chan~\cite{Chan00} observed that those tetrahedra can be turned into disjoint downward semi-unbounded vertical triangular prisms,
so that the resulting shallow cuttings can be applied to the $k$ lower plane problem.

Both Ramos~\cite{Ramos99a} and Chan~\cite{Chan00} adopted the bootstrapping technique to only select $O(\log\log n)$ $\frac{1}{r}$-shallow-cuttings instead of $O(\log n)$ ones.
Since a $\frac{1}{r}$-shallow-cutting requires $O(r)\cdot O(\frac{n}{r})=O(n)$ space,
the above results yield a static data structure for the $k$ lowest plane problem with $O(n\log\log n)$ space, $O(\log n+k)$ query time, and expected $O(n\log n)$ preprocessing time.
Afshani and Chan~\cite{AfshaniC09} further exploited Matou\v{s}ek's \emph{shallow partition theorem} \cite{Matousek92a} to show that only two $\frac{1}{r}$-shallow-cuttings are sufficient, 
leading to the optimal $O(n)$ space.

Chan~\cite{Chan10} designed a dynamic data structure for the $k$ lower plane problem also based on shallow cuttings with $O(\log^2n+k)$ query time, expected amortized $O(\log^3 n)$ insertion time and expected amortized $O(\log^6 n)$ deletion time. 
Chan and Tsakalidis~\cite{ChanT16} proposed a deterministic construction algorithm for the $\frac{1}{r}$-shallow-cuttings, making the above-mentioned time complexities deterministic.
Later, Kaplan et~al.~\cite{KaplanMRSS17} attained the amortized $O(\log^5n)$ deletion time, and very recently, Chan~\cite{Chan19} further improved the deletion time to amortized $O(\log^4n)$.

\section{Applications of the New Dynamic Data Structure}\label{ap-application}

Kaplan et~al.~\cite{KaplanMRSS17} mentioned 9 applications that can be improved directly using their dynamic data structure for neighbor neighbor queries (i.e., with $k=1$). 
Since our dynamic data structure improves Kaplan et~al.'s by a $\log^2 n$ factor in both space and deletion time,
those applications can be further improved by a $\log^2 n$ factor in space, update time (deletion or both insertion and deletion), or construction time.
Table~\ref{tb-direct} and Table~\ref{tb-disk} show the corresponding improvements. For the completeness, we introduce those applications in the following two subsections.

We remind that Kaplan et~al.'s dynamic data structure for the lower envelope of surfaces 
corresponds to our dynamic data structure for the $k$ nearest neighbors problem,
so that we will not distinguish between a dynamic data structure for the lower envelope of surfaces and a dynamic data structure for nearest neighbor queries.
 In fact, a vertical ray shooting query to the lower envelope of $n$ surfaces
is equivalent to a nearest neighbor query to the $n$ respective sites. 
Chan~\cite{Chan10} showed how to use shallow cuttings to maintain the lower envelope of planes dynamically and how to use his data structure to answer $k$ nearest neighbors queries. Kaplan et~al.\ extended his idea to maintain the lower envelope of surfaces using shallow cuttings of semi-unbounded pseudo-prisms. Our dynamic data structure just replaces the shallow cuttings in Kaplan et~al.'s data structure with our designed ones, whose size is linear and reduces a double logarithmic factor from their size. 

Two classes of distance functions will be applied. First, let $p\in[1,\infty]$; for two points $(x_1,y_1)$, $x_2,y_2)\in \R^2$, their distance in the $L_p$ norm is $(|x_1-x_2|^p+|y_1-y_2|^p)^{1/p}$. Second, let $S$ be a set of point sites in $\R^2$ and associate each site $q\in S$ a weight $w_q\in \R$;
the additively weighted Euclidean distance from a point $p\in \R^2$ to a site $q\in S$ is $w_q+|\overline{pq}|$, where $|\cdot|$ denotes the Euclidean distance.

\subsection{Direct Applications}\label{sub-direct-ap}

\subparagraph{Dynamic Bichromatic Closest Pair.}
Let $\tau$ be a planar distance metric, and let $R$ and $B$ be two sets of point sites in the plane. 
A \emph{bichromatic closest pair} between $R$ and $B$ is a pair of points, $r\in R$ and $b\in B$, that minimizes $\tau(r,b)$. The dynamic version is to maintain a bichromatic closest pair under insertions and deletions of points. Eppstein~\cite{Eppstein95} proved that if there exists a data structure that supports insertions, deletions, and nearest-neighbor queries in $O(T(n))$ time per operation, 
a bichromatic closest pair between $R$ and $B$ can be maintained in $O(T(n)\log n)$ time per insertion and $O(T(n)\log^2n)$ time per deletion. 
Since the insertion time, the deletion time, and the nearest-neighbor query time  in our data structure are $O(\log^{5} n\lambda_{s+2}(\log n))$, $O(\log^{7} n\lambda_{s+2}(\log n))$, and $O(\log n)$ respectively, $T(n)=O(\log^{7} n\lambda_{s+2}(\log n))$,
resulting in $O(\log^{8} n\lambda_{s+2}(\log n))$ time per insertion and $O(\log^{9} n\lambda_{s+2}(\log n))$ time per deletion.

\subparagraph{Minimum Euclidean Bichromatic Matching.}
Let $R$ and $B$ be two sets of $n$ point sites in the plane.  A \emph{minimum Euclidean bichromatic matching} between $R$ and $B$
is a set of $n$ line segments $\overline{rb}$, $r\in R$ and $b\in B$, such that each point site in $R\cup B$ is incident to exactly one line segment and such that the total length of the line segments is minimum. Agarwal et~al.\ (\cite[Section~7]{AgarwalES99}) showed how to find a minimum Euclidean bichromatic matching using a dynamyic bichromatic closest pair data structure for the additively weighted Euclidean metric, and the construction time is $O(n^2\cdot T(n))$, where $T(n)$ is the maximum between the insertion time and the deletion time of the dynamic data structure.
Since $T(n)$ in our new data structure (i.e., the first application)  is $O(\log^{9} n\lambda_{s+2}(\log n))$, the construction time is $O(n^2\log^{9} n\lambda_{s+2}(\log n))$.

\subparagraph{Dynamic Minimum Spanning Trees.}
Let $S$ be a set of sites, and let $T$ be the minimum spanning tree for $S$ with respect to  an $L_p$ norm, $p\geq 1$. A dynamic minimum spanning tree data structure maintains $T$ explicitly as $S$ changes dynamically. Following Eppstein~\cite{Eppstein95}, the first application yields a data structure with $O(n\log^3 n)$ space and $O(\log^{11} n\lambda_{s+2}(\log n))$ update time. 

\subparagraph{Dynamic Intersection of Unit Balls in Three Dimensions.}
Let $B$ be a set of unit balls in $\R^3$. The goal is to maintain the intersection $B^{\cap}$ of the balls in $B$ under insertions and deletions, and in the meantime to support the two following queries:
\begin{enumerate}[label=(\alph*),itemsep=0pt, topsep=6pt]
	\item for any point $p\in \R^3$, determine if $p\in B^{\cap}$,
	\item and after performing each update, determine whether $B^{\cap}=\emptyset$. 
\end{enumerate}
Agarwal et~al.\ (\cite[Section~8]{AgarwalES99}) adopted dynamic lower envelope data structures to maintains $B^{\cap}$. Recall that the dynamic lower envelope data structure correspond to our dynamic nearest neighbor data structure. Since their algorithm performs a query via parametric search in a black-box fashion,
our dynamic data structure for the $k$ nearest neighbors problem implies a dynamic data structure to maintain $B^{\cap}$ with $O(n\log n)$ space, $O(\log^{5} n\lambda_{s+2}(\log n))$ insertion time, $O(\log^{7} n\lambda_{s+2}(\log n))$ deletion time, and $O(\log^2 n)$ and $O(\log^5 n)$ query time respectively for queries (a) and (b).

\subparagraph{Dynamic Smallest Stabbing Disk.}
Let $\mathcal{C}$ be a family of simply-shaped, compact, strictly-convex sets in the plane. The goal is to dynamically maintain a finite subset $C\subseteq \mathcal{C}$ together with a smallest disk that intersects all the sets of $C$; please see Section~9 by Agarwal et~al.~\cite{AgarwalES99} for precise definitions.
Since Agarwal et~al.\ made use of the dynamic data structure for nearest neighbor queries in a black-box fashion (\cite[Theorem~9.3]{AgarwalES99}), 
our dynamic data structure for the $k$ nearest neighbors problem yields a dynamic structure to maintain the smallest stabbing disk with $O(n\log n)$ space, $O(\log^{5} n\lambda_{s+2}(\log n))$ insertion time, $O(\log^{7} n\lambda_{s+2}(\log n))$ deletion time, and $O(\log^5 n)$ query time. 

\subsection{Problems on Disk Intersection Graph}\label{sub-disk-ap}

As discussed by Kaplan et~al.~\cite{KaplanMRSS17}, 
a dynamic data structure for neighbor neighbor queries
enables many applications in the domain of disk intersection graphs: let $S$ be a finite set of point sites in the plane,
and associate each point site $p\in S$ with a weight $w_p>0$. 
The \emph{disk intersection graph} for $S$, denoted by $D(S)$, has $S$ as the vertex set and an edge between two point sites $p,q\in S$ if and only if $|\overline{pq}|\leq w_p + w_q$. In other words, $D(S)$ contains an edge between $p$ and $q$ if and only if the disk with center $p$ and radius $w_p$ intersects the disk with center $q$ and radius $w_q$. If all weights are 1, $D(S)$ is called \emph{unit disk graph} and denoted by $\mathrm{UD}(S)$.
Disk intersection graphs have been received increasing attention due to the applications in wireless sensor networks~\cite{CabelloJ15,ChanPR11,FurerK12,KaplanMRSS17,RodittyS11}.

\subparagraph{Shortest Path Trees in Unit Disk Graphs.}
Cabello and Jejcic~\cite{CabelloJ15} showed how to find a shortest path tree in $\mathrm{UD}(S)$, for any given site in $S$, within $O(n\cdot T(n))$ time using a dynamic
bichromatic closest pair structure, where $T(n)$ is the maximum between the insertion time and the deletion time. Since our first application in Appendix~\ref{sub-direct-ap} attains $T(n)=O(\log^{9} n\lambda_{s+2}(\log n))$,
the construction time is $O(n\log^{9} n\lambda_{s+2}(\log n))$.

\subparagraph{Dynamic Connectivity in Disk Graphs.}
Kaplan et~al.~\cite{KaplanMRSS17}, in Section~9.2 of their full version,
studied how to dynamically maintain $D(S)$ under insertions and deletions of point sites while answering \emph{reachability queries} efficiently: given $s,t\in S$, determine if there is a path in $D(S)$ between $s$ and $t$. Let $\Psi$ be the ratio of the largest and the smallest weights of the point sites. They adopted their dynamic lower envelope structure to attain $O(\Psi^2\log^{9} n\lambda_{s+2}(\log n))$ update time and $O(\log n/\log\log n)$ query time. Since our dynamic lower envelope structure improves the deletion time of their dynamic lower envelope structure by a factor of $\log^2n$, the update time of their dynamic connectivity structure for a disk graph is also improved by a factor of $\log^2n$.

\subparagraph{BFS Trees in Disk Graphs.}
Kaplan et~al.~\cite{KaplanMRSS17}, in Section~9.3 of their full version,
extended Roditty and Segal's observation~\cite{RodittyS11} to  compute exact BFS-trees in disk graphs, for any given root $r\in S$, using a dynamic nearest neighbor query structure. 
In details, they adopted additively weighted Euclidean distances, so that the weighted Euclidean distance from a point site $p$ with weight $w_p$ is exactly the Euclidean distance from the disk with center $p$ and radius $w_p$. Their construction time for a BFS tree  is $O(n\cdot T(n))$, where $T(n)$ is the maximum between the insertion time and the deletion time per point site. Since $T(n)$ is $O(\log^7\lambda_{s+2}(\log n))$ in our dynamic nearest neighbor query structure, the construction time is $O(n\log^7\lambda_{s+2}(\log n))$.

\subparagraph{Spanners for Disk Graphs.}
A $(1+\rho)$-spanner for $D(S)$ is a subgraph $H$ of $D(S)$ such that the shortest path distances in $H$ approximate the shortest path distances in $D(S)$ up to a factor of $(1+\rho)$. Kaplan et~al.~\cite{KaplanMRSS17}, in Section~9.4 of their full version, gave a construction algorithm with $O((n/\rho^2)\log^{9} n\lambda_{s+2}(\log n))$ time using their dynamic nearest neighbor query structure. Since our dynamic data structure improves their deletion time by a factor of $O(\log^2 n)$, the construction becomes $O((n/\rho^2)\log^{7} n\lambda_{s+2}(\log n))$.

\section{Comparison with Clarkson 1987's Result~\cite{Clarkson87}}\label{ap-clarkson}

We first introduce Clarkson 1987's result (\cite[Corollary~4.3]{Clarkson87}) about relatively many local conflicts, but relatively few global conflicts,
then explain the difficulty in applying his result to design a linear-size shallow cutting, and finally compare the state-of-the-art techniques at a high level. 
Recall the definitions in Section~\ref{sub-rs-configuration}:
$C(S')$ is the set of configurations in a geometric structure defined by $S'$, and $T(S')=\bigcup_{S''\subseteq S'}C(S'')$ is the set of all possible configurations defined by objects in $S'$.
More importantly, there is a quantity difference between $C(S')$ and $T(S')$. In the examples in Section~\ref{sub-rs-configuration},
objects are planes,
$C(S')$ is the set of tetrahedra in the canonical triangulation for the arrangement formed by all planes in $S'$,
and $T(S')$ is the set of all possible tetrahedra defined by planes in $S'$, so that  $|C(S')|=\Theta(|S'|^3)$~\cite{AgarwalBMS98,Mulmuley94}, while $|T(S')|=O(|S'|^{12})$ (since a tetrahedron has 4 vertices and a vertex is defined by 3 planes).

Roughly speaking,
our Theorem~\ref{thm-many-local-less-global} works on $C(S')$, while Clarkson's Corollary~4.3~\cite{Clarkson87} deals with $T(S')$.
Let $R$ be an $r$-element random subset of $S$, and let $T^t_{\leq\frac{n}{r}}(R)$ be the set of configurations in $T(R)$ that conflicts with $t$ objects in $R$, but at most $\frac{n}{r}$ objects in $S$. In our terminology, \cite[Corollary~4.3]{Clarkson87} can be interpreted as follows:
\begin{equation}\label{eq-clarkson}
E[|T^t_{\leq \frac{n}{r}}(R)|]\leq O\big((\frac{e}{t})^t\big)\cdot |T(R)|.
\end{equation}

In the proof of Theorem~\ref{thm-cutting-size}, in order to analyze the expected size of our $\frac{1}{r}$-shallow cutting,
$C(R)$ is the set of pseudo-prisms in the vertical decomposition of $R$ (defined in Section~\ref{sub-sc-distance-VD}), and a surface conflicts with a pseudo-prism if the surface lies fully below the pseudo-prism. We need to bound $\sum_{t\geq 0} (1+t^4)\cdot E[|C^{t}_{\leq \frac{n}{r}}(R)|]$. Since $|C^{l}(R)|=O\big(r\cdot (l+1)^3\big)$ (\cite[Lemma~5.1]{KaplanMRSS17}), i.e., $|C^l(R)|=O(r)$ if $l=O(1)$, Theorem~\ref{thm-many-local-less-global} (with $c=1$ and $d=10$) implies a bound $\sum_{t\geq 0} \frac{1+t^4}{t!}\cdot O(r)=O(r)$. 
However, if we want to apply \cite[Corollary~4.3]{Clarkson87}, i.e., Inequality~(\ref{eq-clarkson}),
we could only use $E[|T^t_{\leq \frac{n}{r}}(R)|]$.
Moreover, since a pseudo-prism is defined by at most 10 surfaces (as shown in Section~\ref{sub-sc-distance-VD}),
we only have $T(R)=O(r^{10})$.
Therefore, 
a direct application of Inequality~(\ref{eq-clarkson}) would only obtain a bound $\sum_{t\geq 1}\frac{1+t^4}{t^t}\cdot O(r^{10})=O(r^{10})$.

In a chronological order, Clarkson~\cite{Clarkson87} first studied $T(S')$, then Clarkson and Shor~\cite{ClarksonS89} and Chazelle and Friedman~\cite{ChazelleF90} worked on $C^0(S')$ and considered an if-and-only-if condition,
\begin{itemize}
	\item[($\ast$)] $\triangle\in C^0(S')$ if and only if $D(\triangle)\subseteq S'$ and $K(\triangle)\cap S'=\emptyset$,
\end{itemize}
Agarwal et~al.~\cite{AgarwalMS98} and de~Berg et~al.~\cite{deBergDS1995} further relaxed the if-and-only-if condition by two weaker conditions
\begin{enumerate}[label=(\roman*)]
	\item for any $\triangle\in C^0(S')$, $D(\triangle)\subseteq S'$ and $K(\triangle)\cap S'=\emptyset$,
	\item and if $\triangle\in C^0(S')$ and $S''\subseteq S'$ with $D(\triangle)\subseteq S''$, then $\triangle\in C^0(S'')$,
\end{enumerate}
and we deal with $C^t(S')$ for which we generalize the above two conditions to include nonzero local conflicts with $t$ and $t'$ as follows,
\begin{enumerate}[label=(\Roman*)]
	\item For any $\triangle\in C^t(S')$, $D(\triangle)\subseteq S'$ and $|K(\triangle)\cap S'|=t$.
	\item If $\triangle\in C^t(S')$ and $S''\subseteq S'$ with $D(\triangle)\subseteq S''$ and $|K(\triangle)\cap S''|=t'$, 
	then $\triangle\subseteq C^{t'}(S'')$.
\end{enumerate}
At a high-level,
Clarkson and Shor~\cite{ClarksonS89}, Chazelle and Friedman~\cite{ChazelleF90}, Agarwal et~al.~\cite{AgarwalMS98}, and de~Berg et~al.~\cite{deBergDS1995} derived the probability that a configuration conflicts with no local object, but at least $t\frac{n}{r}$ objects, while we analyze the probability that a configuration conflicts with $t$ local objects, but at most $\frac{n}{r}$ objects. Clarkson~\cite{Clarkson87}'s work is more general and includes both cases, while it considers $T(S')$ instead of $C(S')$.

\bibliographystyle{plainurl}
\bibliography{KNN}

\newpage
\listoftodos 


\end{document}